\documentclass[runningheads]{llncs}
\usepackage{booktabs} 
\newcommand{\ifm}{\ensuremath{\operatorname{IFM}}\xspace}
\newcommand{\gfm}{\ensuremath{\operatorname{GFM}}\xspace}

\newcommand{\vw}{\ensuremath{\operatorname{VOM}}\xspace}

\newcommand{\sbo}{\ensuremath{\operatorname{SAMP-B}}\xspace}
\newcommand{\sab}{\ensuremath{\operatorname{SAMP-AB}}\xspace}

\newcommand{\alggreedy}{\ensuremath{\mathsf{GREEDY}}\xspace}
\newcommand{\algranking}{\ensuremath{\mathsf{RANKING}}\xspace}
\newcommand{\alglu}{\ensuremath{\mathsf{JL}}\xspace}
\newcommand{\algbru}{\ensuremath{\mathsf{BSSX}}\xspace}
\newcommand{\algmgs}{\ensuremath{\mathsf{MGS}}\xspace}

\newcommand{\cri}{\ensuremath{\mathsf{CR1}}\xspace}
\newcommand{\crii}{\ensuremath{\mathsf{CR2}}\xspace}

\usepackage{macros_pan}
\usepackage{subcaption}
\usepackage{multirow}
\usepackage[normalem]{ulem}
\usepackage[numbers,sort&compress]{natbib}

\begin{document}
\title{Fairness Maximization among \\ Offline Agents in Online-Matching Markets}
%
%
\author{Will Ma\inst{1} \and
Pan Xu\inst{2} \and
Yifan Xu\inst{3}}
%
\institute{Columbia University \and New Jersey Institute of Technology
 \and Southeast University\\
\email{willma353@gmail.com, pxu@njit.edu, xyf@seu.edu.cn}}

%
\maketitle              

\begin{abstract}
Matching markets involve heterogeneous agents (typically from two parties) who are paired for mutual benefit.  During the last decade, matching markets have emerged and grown rapidly through the medium of the Internet. They have evolved into a new format, called \emph{Online Matching Markets} (OMMs), with examples ranging from crowdsourcing to online recommendations to ridesharing. There are two features distinguishing OMMs from traditional matching markets. One is the dynamic arrival of one side of the market: we refer to these as \emph{online agents} while the rest are \emph{offline agents}. Examples of online and offline agents include keywords (online) and sponsors (offline) in Google Advertising; workers (online) and tasks (offline) in Amazon Mechanical Turk (AMT); riders (online) and drivers (offline when restricted to a short time window) in ridesharing. The second distinguishing feature of OMMs is the real-time decision-making element.

However, studies have shown that the algorithms making decisions in these OMMs leave disparities in the match rates of offline agents.
For example, tasks in neighborhoods of low socioeconomic status rarely get matched to gig workers, and drivers of certain races/genders get discriminated against in matchmaking.
In this paper, we propose online matching algorithms which optimize for either
individual or group-level fairness among offline agents in OMMs. We present two linear-programming (LP) based sampling algorithms, which achieve online competitive ratios at least $0.725$ for individual fairness maximization (IFM) and $0.719$ for group fairness maximization (GFM), respectively. There are two key ideas helping us break the barrier of $1-1/\sfe$. One is \emph{boosting}, which is to adaptively re-distribute all sampling probabilities only among those offline available neighbors for every arriving online agent. The other is \emph{attenuation}, which aims to balance the matching probabilities among offline agents with different mass allocated by the benchmark LP.  We conduct extensive numerical experiments and results show that our boosted version of sampling algorithms are not only conceptually easy to implement but also highly effective in practical instances of fairness-maximization-related models. 
\end{abstract}

\section{Introduction} 

Matching markets involve heterogeneous agents (typically from two parties) who are paired for mutual benefit.  During the last decade, matching markets have emerged and grown rapidly through the medium of the Internet. They have evolved into a new format, called \emph{Online Matching Markets} (OMMs), with examples ranging from crowdsourcing markets to online recommendations to rideshare. There are two features distinguishing OMMs from traditional matching markets. The first is the dynamic arrivals of a part of the agents in the system, which are referred to as \emph{online agents}, \eg keywords in Google Advertising, workers in Amazon Mechanical Turk (AMT), and riders in Uber/Lyft. As opposed to online agents, the other part is \emph{offline agents}, such as sponsors in Google advertising, tasks in AMT, and drivers in rideshare platforms, whose information is typically known as \emph{a priori}. The second feature is the instant match-decision requirement. It is highly desirable to match each online agent with
one (or multiple) offline agent(s) upon its arrival, due to the low ``patience'' of online agents. There is a large body of research that studies matching-policy design and/or pricing mechanism in OMMs~\cite{bimpikis2019spatial,ma2018spatio}. \emph{The main focus of this paper, instead, is fairness among offline agents in OMMs}. Consider the following two motivating examples.

 \xhdr{Fairness among task requesters in mobile crowdsourcing markets (MCM)}.  In MCMs like Gigwalk and TaskRabbit, each offline task has specific location information, and online workers have to travel to that location to complete it (\eg cleaning one's house, delivering some package). It has been reported that online workers selectively choose offline tasks based on their biased perceptions.  For example, survey results on TaskRabbit in Chicago~\cite{thebault2015avoiding} showed that  workers try to avoid tasks that locate in areas with low economic status like the South Side of Chicago. This causes a much lower completion rate for tasks requested by users of low socioeconomic status and as a result, ``low socioeconomic-status  areas are currently less able to take advantage of the benefit of mobile crowdsourcing markets.''

  \xhdr{Fairness among drivers in ride-hailing services}.  There are several reports showing the earning gap among drivers based on their demographic factors such as age, gender and race, see, \eg~\cite{cook2018,rosenblat2016}. In particular, \cite{wage-gap} has reported that ``Black Uber and Lyft drivers earned \$13.96 an hour compared to the \$16.08 average for all other drivers'' and ``Women drivers reported earning an average of \$14.26 per hour, compared to \$16.61 for men''. The wage gap among drivers from different demographic groups is partially due to the biased perceptions and resulting selective choices from riders who cancel matches with drivers from these groups.

In this paper, we study fairness maximization among offline agents in OMMs, on the prototypical online matching model with \textit{known}, \textit{independent}, and \textit{identical} (KIID) arrivals, which is widely used to model the dynamic arrivals of online agents in several real-world OMMs including rideshare and crowdsourcing markets~\cite{xu-aaai-19,dickerson2018assigning,fata2019multi}.  We choose this arrival model instead of the adversarial one, because under the latter it is impossible for an online algorithm to perform well.\footnote{This can be seen as follows: Suppose there are a large number $T$ of tasks.  The first arriving worker can perform any of them.  Workers $2,\ldots,T$ can each perform a different specialized task such that there is one task that can be served only by the first arriving worker.  The online algorithm has a $1/T$ chance of correctly allocating the first worker to this task.} 

The online matching model under KIID is as follows. We have a bipartite graph $(I,J,E)$, where $I$ and $J$ represent the types of offline and online agents, respectively, and an edge $e=(i,j)$ indicates the compatibility between the offline agent (of type)  $i$ and the online agent (of type)  $j$. All offline agents are static, while online agents arrive dynamically in a certain random way.  Specially, we have an online phase consisting of $T$ rounds and during each round $t \in [T] \doteq \{1,2,\ldots, T\}$, one single online agent $\hat{j}$ will be sampled (called $\hat{j}$ arrives) with replacement such that $\Pr[\hat{j}=j]=r_j/T$ for all $j \in J$ and  $\sum_{j \in J} r_{j}/T=1$. We assume that the arrival distribution $\{r_j/T\}$ is \emph{known}, \emph{independent}, and \emph{identical} (KIID) throughout the online phase. Upon the arrival of an online agent $j$, an immediate and irrevocable decision is required: either reject $j$ or match $j$ with one offline neighbor $i$ with $i \sim j$, \ie $(i,j) \in E$.  Throughout this paper, we assume \wl that each offline agent has a unit matching capacity.\footnote{We can always make it by creating multiple copies of an offline agent if has multiple capacities.} Suppose we have a collection of groups $\cG=\{G\}$, where each group $G$ is a set of types of offline agents (possibly overlapping)  that share some demographic characteristic such as gender, race, or religion. Consider a generic online algorithm \ALG, and let $Z_i=1$ indicate that offline agent $i$ is matched (or served). We define the following two objectives: 

\noindent\textbf{Individual Fairness Maximization} (IFM): $ \min_{i \in I} \E[Z_i]$;

\noindent\textbf{Group Fairness Maximization} (GFM): $  \min_{G \in \cG} \frac{1}{|G|} \sum_{i \in G}\E[Z_i]$ with $|G|$ being the cardinality of $G$.

Here $\ifm$ denotes the minimum expected matching rate over all individual offline agents, while \gfm denotes that over all pre-specified groups of offline agents. Our goal is to design an online-matching policy such that the above two objectives are maximized. Observe that \ifm can be cast as a special case of \gfm when each group consists of one single offline type.

\xhdr{A related model: vertex-weighted online matching under KIID} (\vw).  \vw under KIID~\cite{bib:Jaillet,brubach2020online} shares almost the same setting as our model except in the objective, where each offline agent $i$ is associated with a non-negative weight $w_i$ and the objective is to maximize the expected total weight of all matched offline agents, \ie $\max \sum_{i \in I} w_i \cdot \E[Z_i]$.

One could also consider the \textit{oblivious} variant of \vw where the weights $\{w_j\}$ are a priori unknown and chosen by an adversary.
In this variant, an oblivious algorithm cannot do better than maximizing the minimum ex-ante probability $x_i$ of matching any offline agent $i$, since the adversary can always set that weight to $1$ and the other weights to $0$.
This variant is equivalent to the special case of \ifm.

\noindent \textbf{Two assumptions on the arrival setting}. (a) Integral arrival rates. Observe that for an offline agent $j$, it will arrive with probability $r_j/T$ during each of the online $T$ rounds. Thus,  $r_j$ is equal to the expected number of arrivals of $j$ during the online phase and it is called \emph{arrival rate} of $j$. In this paper, we consider integral arrival rates for all offline agents, and by following a standard technique of creating $r_j$ copies of $j$,  we can further assume \wl that all $r_j=1$~\cite{brubach2020online}. (b) $T\gg1$.
\emph{This is a standard assumption in the literature of online bipartite matching under KIID~\cite{bib:Jaillet,bib:Manshadi,bib:Haeupler,bib:Feldman}}, where the objective is typically to maximize a linear function representing the total weight of all matched edges.
We emphasize that both of these assumptions are mild in that: the interesting case is $T\gg 1$ because if $T$ is fixed then the problem, with state space $|I|^T$, can be solved to optimality; and under the assumption $T\gg 1$, the arrival rates are arbitrarily close to integers.

\section{Preliminaries and Main Contributions}

\subsection{A clairvoyant optimal and competitive ratio}

Competitive ratio (CR) is a commonly-used metric to evaluate the performance of online algorithms~\cite{mehta2012online}.  Consider an online maximization problem like ours. Let $\ALG(\cI)=\E_{S \sim \cI} [\ALG(S)]$ denote the expected performance of $\ALG$ on an input $\mathcal{I}$, where the expectation is taken over both of the randomness of the arrival sequence $S$ of online agents and that of \ALG.  
Let $\OPT(\mathcal{I})=\E_{S \sim \cI}[\OPT(S)]$ denote the expected performance of \emph{a clairvoyant optimal} who has the privilege to optimize decisions \emph{after} observing the full arrival sequence $S$. We say $\ALG$ achieves an online competitive ratio of $\rho \in [0,1]$ if $\ALG(\cI) \ge \rho \OPT(\cI)$ for all possible inputs $\cI$. Generally, the competitive ratio captures the gap in the performance between an online algorithm subject to the real-time decision-making requirement and a clairvoyant optimal who is exempt from that.  It is a common technique to use a linear program (LP) to upper bound the clairvoyant optimal (called benchmark LP), and hence by comparing against the optimal value of the benchmark LP, we can get a valid lower bound on the target competitive ratio. 

\subsection{Benchmark Linear Program}
In this paper, we use the benchmark LP as below. For each edge $(i,j) \in E$, let $x_{ij}$ be the expected number of times that the edge $(i,j)$ is matched by the clairvoyant optimal. For each $i$, let $\cN_i=\{j: (ij)\in E\}$ denote the set of offline  neighbors of $i$. Similarly, $\cN_j$ denotes the set of online neighbors of $j$. For notation convenience, we use $i \sim j$ and $j \sim i$ to denote the relation that $i$ is incident to $j$ (\ie $i \in \cN_j$) and $j$ is incident to $i$ (\ie $j \in \cN_i$), respectively.

\begin{alignat}{2}
\ifm:~~~\max & ~~\min_{i \in I} \sum_{j \sim i}x_{ij}  &&  \label{obj-1} \\
\gfm:~~~\max & ~~\min_{G \in \cG} \frac{1}{|G|} \sum_{i \in G} \sum_{j \sim i} x_{ij}  &&  \label{obj-2} \\
\vw:~~~\max & ~~ \sum_{i \in I} w_i \cdot \sum_{j \sim i} x_{ij}  &&  \label{obj-3} \\
 & \sum_{i \sim j} x_{ij} \le 1, &&~~ \forall j \in J \label{cons:j} \\ 
 &  \sum_{j \sim i}  x_{ij} \le 1,  && ~~ \forall i \in I \label{cons:i} \\
 & 0 \le \sum_{j \in S} x_{ij} \le 1-\sfe^{-|S|}, &&~~\forall S \subseteq \cN_i, |S|=O(1)\footnotemark, \forall i \in I
  \label{cons:e}
\end{alignat}
\footnotetext{Here we assume the size of $|S|$ is upper bounded by a given constant, say $K=100$, which is independent of  $T \gg 1$. }

Throughout this paper, we use \LP~\eqref{obj-1} to denote the LP with Objective~\eqref{obj-1} and Constraints~\eqref{cons:j} to~\eqref{cons:e}. Similarly for \LP~\eqref{obj-2} and \LP~\eqref{obj-3}.  Note that though Objective~\eqref{obj-1} is non-linear, we can reduce to a linear one by replacing it with $\max \lam$ and adding one constraint $\sum_{ j\sim i} x_{ij} \ge \lam$ for all $i \in I$. Similarly for Objective~\eqref{obj-2}, we can replace it with $\max \lam$ and add one extra constraint $\sum_{i \in G}\sum_{ j\sim i} x_{ij} \ge \lam \cdot |G|$ for all $G \in \cG$, where $|G|$ denotes the cardinality of group $G$. Our LPs are mainly inspired by the work~\cite{brubach2020online}. Note that the critical constraint is \eqref{cons:e}, in which we have omitted terms of size $O(1/T)$, which can affect the competitive ratio by at most $O(1/T)$~\cite{brubach2020online}. Note that all the the above three LPs can be solved polynomially even after removing the restriction on the size of $|S|$ in Constraint~\eqref{cons:e}.\footnote{Though the LPs may have exponential number of constraints after removing the restriction of a constant size of $|
S|$ independent of $T$, they are all can be solved polynomially since they admit a polynomial-time separation oracle~\cite{huang2021online}. For presentation convenience, we add that restriction and it is suffice for our analysis.}



\begin{lemma}~\label{lem:benchmark}
$\LP~\eqref{obj-1}$, $\LP~\eqref{obj-2}$, and \LP~\eqref{obj-3} are valid benchmarks for \ifm, \gfm, and \vw, respectively.
\end{lemma}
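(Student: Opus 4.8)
The plan is to establish, for each of the three objectives, that the optimum of the corresponding LP is at least the clairvoyant optimal's value on that objective; this is precisely the meaning of a valid benchmark and is what yields a lower bound on the competitive ratio. Since the three LPs share the same feasible region \eqref{cons:j}--\eqref{cons:e} and differ only in the objective, I would treat feasibility once and the three objectives separately. Fix an arbitrary clairvoyant policy $\mathcal{A}$ (one that observes the entire realized sequence $S$ before committing to its matching); for the fairness objectives, interpret the clairvoyant optimal as the best such $\mathcal{A}$ under the max--min measure. For each edge $(i,j)$ set $x_{ij}$ to be the expected number of times, over the randomness of $S$, that $\mathcal{A}$ matches $(i,j)$.

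For feasibility I would verify the three constraints in turn. Constraint \eqref{cons:i} is immediate from unit offline capacity: in every realization $i$ is matched at most once, so $\sum_{j\sim i}x_{ij}\le 1$. Constraint \eqref{cons:j} uses KIID with $r_j=1$: the expected number of arrivals of type $j$ is exactly $1$, and each arriving agent is matched at most once, so $\sum_{i\sim j}x_{ij}\le 1$ holds with no approximation.

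The crux, which I expect to be the main obstacle, is Constraint \eqref{cons:e}. Fix $i$ and $S\subseteq\cN_i$ with $|S|=O(1)$. Unit capacity makes $\sum_{j\in S}x_{ij}$ equal to the probability that $i$ is matched to some type in $S$, which is bounded by the probability that at least one type in $S$ ever arrives over the $T$ rounds. Under KIID with $r_j=1$, each round independently produces an arrival in $S$ with probability $|S|/T$, so this probability equals $1-(1-|S|/T)^{T}$. The analytic step is then $(1-|S|/T)^{T}=\sfe^{-|S|}+O(1/T)$, where boundedness of $|S|$ (the $|S|=O(1)$ restriction) keeps the error uniformly $O(1/T)$; this yields $\sum_{j\in S}x_{ij}\le 1-\sfe^{-|S|}+O(1/T)$. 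The delicate point is that $(1-|S|/T)^{T}$ tends to $\sfe^{-|S|}$ from below, so the true constraint is marginally looser than the stated \eqref{cons:e}; the $O(1/T)$ gap is exactly the omitted slack, which perturbs the LP value, and hence the competitive ratio, by at most $O(1/T)$, negligible under $T\gg 1$.

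Finally I would match objectives using that unit capacity forces $Z_i=\sum_{j\sim i}\mathbf{1}[(i,j)\text{ matched}]\in\{0,1\}$, so that $\E[Z_i]=\sum_{j\sim i}x_{ij}$. For \vw this gives $\sum_{i}w_i\,\E[Z_i]=\sum_{i}w_i\sum_{j\sim i}x_{ij}$, matching Objective \eqref{obj-3}; for \ifm and \gfm the same identity rewrites $\min_{i}\E[Z_i]$ and $\min_{G}\frac1{|G|}\sum_{i\in G}\E[Z_i]$ as Objectives \eqref{obj-1} and \eqref{obj-2}. Because $x$ is feasible, each LP optimum is at least the objective value attained by $\mathcal{A}$; taking the supremum over clairvoyant policies $\mathcal{A}$ shows each LP upper-bounds the clairvoyant optimal, which is the desired validity. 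The only genuine technical care is the clean accounting of the $O(1/T)$ approximation arising from \eqref{cons:e}.
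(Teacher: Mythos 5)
Your proposal is correct and follows essentially the same route as the paper's proof: define $x_{ij}$ as the clairvoyant's expected match counts, verify Constraints~\eqref{cons:j} and~\eqref{cons:i} from unit arrival rates and unit capacity, and justify Constraint~\eqref{cons:e} by bounding the probability of matching $i$ within $S$ by the probability that some type in $S$ arrives, giving $1-(1-|S|/T)^T = 1-\sfe^{-|S|}+O(1/T)$. Your explicit remark that the approximation error enters with the ``wrong'' sign and is absorbed into the omitted $O(1/T)$ slack is a slightly more careful accounting than the paper gives, but it is the same argument.
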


\begin{proof}
Observe that the three objectives~\eqref{obj-1},~\eqref{obj-2}, and~\eqref{obj-3} capture metrics of \ifm, \gfm, and \vw, respectively. Thus, it will suffice to justify the feasibility of all constraints for a clairvoyant optimal. Constraint~\eqref{cons:j} is valid since the total number of matches relevant to an online agent $j$ should be no larger than that of expected arrivals, which is $r_j=1$. Constraint~\eqref{cons:i} is due to that every offline agent $i$ has a unit matching capacity. Constraint\eqref{cons:e} can be justified as follows.  Consider a given $i \in I$ and a given $S \subseteq \cN_i$. Observe that $\sum_{j \in S} X_{ij}=1$ denotes the random event that $i$ is matched with one of its neighbors in $S$, whose probability should be no larger than that of at least one neighbor in $S$ arrives at least once during the online $T$ rounds. Thus,
\begin{align*}
\sum_{j \in S} x_{ij} &=\E\big[\sum_{j \in S} X_{ij}\big]=\Pr\big[\sum_{j \in S} X_{ij}=1\big] \le 1-\Pr[\mbox{none of neighbors in $S$ arrives during the $T$ rounds}]\\
&= 1-\Big(1-\frac{|S|}{T}\Big)^T=1-\sfe^{-|S|}+O(1/T). 
\end{align*}
The last equality is due to our assumption that $|S|$ is upper bounded by a given constant independent of $T \gg 1$.
\end{proof}

Before describing our technical novelties, we first explain the standard LP-based sampling approach, which has been commonly used in algorithm design for various online-matching models under known distributions, either KIID~\cite{dickerson2018assigning} or known adversarial distributions~\cite{alaei2012online} (where arrival distributions are still independent but not necessarily identical). A typical framework is as follows. We first propose a benchmark LP to upper bound the performance of a clairvoyant optimal and then solve the LP to get statistics   regarding how the clairvoyant optimal matches an online agent with its offline neighbors. After that, we use these statistics to guide online actions and transfer them to a plausible online-matching policy. For example, suppose by solving the LP, we know that the clairvoyant optimal will match each pair of offline-online agents $(i,j)$ with probability $x_{ij}$. Observe that by Constraint~\eqref{cons:j} in the benchmark LP, we have $\sum_{i \in \cN_j} x_{ij} \le r_j=1$. Thus, we can then transfer it to a simple non-adaptive matching policy as follows:  upon the arrival of online agent $j$, sample a neighbor $i \in \cN_j$ with probability $x_{ij}$ and match it if $i$ is available. 

\subsection{Overview of our techniques: LP-based sampling, boosting, and attenuation}
\label{sec:over}

A straight-forward analysis yields that the aforementioned non-adaptive sampling policy achieves $1-1/\sfe$ of the LP optimum, see \eg~\cite{haeupler2011online,brubach2020online}.
However, such a policy also achieves no better than $1-1/\sfe$, even when the LP has been tightened by Constraint~\eqref{cons:e}.\footnote{
To see this, consider a complete bipartite graph with $T$ nodes on each side.
Setting $x_{ij}=1/T$ for every edge $(i,j)$ is feasible in the tightened LP that leaves every offline node fractionally matched. However, the corresponding sampling policy would only leave each offline node matched with probability $1-1/\sfe$.
In fact, it can be seen on this example that any non-adaptive sampling policy must leave some offline node matched with probability at most $1-1/\sfe$.} To go beyond the competitive ratio of $1-1/\sfe$, we consider the following simple and natural idea.

\xhdr{Boosting}. Suppose we are at (the beginning of) time $t$ and an online agent $j$ arrives. Assume that by solving the benchmark LP, we learn a sampling distribution $\x_j=\{x_{ij} | i \in \cN_j\}$ for online agent $j$ with $\sum_{i \in \cN_j} x_{ij} \le 1$  from  the clairvoyant optimal. Let $\cN_{j,t} \subseteq \cN_j$ be the set of \emph{available} neighbors of $j$ at time $t$. Instead of non-adaptively sticking on the distribution $\x_j$ over $\cN_j$ throughout the online phase~\cite{xu-aaai-19,dickerson2019balancing}, we try to sample a neighbor $j$ from $\cN_{j,t}$ only following a boosted version of distribution $\x'_j=\{x_{ij}/\sum_{i \in \cN_{j,t}} x_{ij}\}$. In this way, we promote the chance of each available neighbor of $j$ at $t$ getting matched with $j$. Also, we can guarantee that the offline neighbor we have sampled is available at the time and dismiss the case that we sample an unavailable neighbor and have to reject $j$ ultimately, which is also a desirable feature to have in practice. This is the key idea in the algorithm we will present for \ifm (see Theorem~\ref{thm:main-1}).

Note that our boosting idea has already been proposed and tested in several practical crowdsourcing applications~\cite{aamas-19,dickerson2018assigning}. Though it proved to be helpful in some scenarios, the boosted version of LP-based sampling  is more challenging to analyze since one has to consider the adaptive behavior of the algorithm.
We are the first to show that it achieves an online competitive ratio of $0.725$, exceeding $1-1/\sfe$, for \ifm, which matches the second-best ratio as shown in~\cite{bib:Jaillet} for \vw. One can contrast this to~\cite{bib:Manshadi} who studied unweighted online matching under KIID and also introduced a boosting strategy, which is more complex. The main idea there is to generate two \emph{negatively} correlated sampling distributions from the original one, and sample two candidate neighbors for an online arriving agent.

A critical element in our analysis of the simple boosting algorithm for \ifm, however, is that an optimal LP solution places identical total mass $x_i:=\sum_{j\sim i}x_{ij}$ on each offline agent $i$. By contrast, for \gfm, an optimal LP solution may place higher mass on specific offline agents, \eg those belonging in many groups.
We show that when there is heterogeneity in the mass placed across offline agents, the boosting algorithm matches high-mass offline agents $i$ with probability no better than $(1-1/\sfe) \cdot x_i$ (see Lemma~\ref{lem:sampBub}), \ie its competitive ratio achieved is no longer better than $1-1/\sfe$.

\xhdr{Uniform vs.\ Non-uniform Boosting}.
To break this barrier of $1-1/\sfe$ for \gfm, our idea is to use boosting in a \textit{non-uniform} fashion. In the simple boosting algorithm above, for an incoming arrival, the probability of sampling unavailable neighbors was re-distributed evenly across its available neighbors.
However, our algorithm for \gfm is more likely to suffer from offline agents $i$ with high total mass $x_i$, since as explained above those agents are more difficult to match relative to its total mass $x_i$.

To accomplish this, we add an \textit{attenuation factor} to offline agents with small total mass in the LP solution, in an attempt to \textit{balance} the probability of matching every offline agent $i$ relative to its total mass $x_i$. Attenuation techniques have been used previously in several stochastic optimization problems, see, \eg stochastic knapsack \cite{ma2014improvements}, stochastic matching \cite{adamczyk2015improved,brubach2017}, and matching policy design in rideshare~\cite{feng2019linear,xuAAAI18}. Most similar to ours are the attenuation techniques used in~\cite{bib:Jaillet,brubach2020online} to overcome the barrier of $1-1/\sfe$ for \vw under KIID. However, our attenuation technique is different and novel in the following sense. Consider an online agent $j$ that arrives at time $t$ and let $\x_j=\{x_{ij}| i \in \cN_j\}$ be the sampling distribution of $j$ before attenuation. The idea in~\cite{bib:Jaillet,brubach2020online} is to carefully design factors $\alp_{ij}$ that are added directly to the sampling distribution $\x_j$ such that we finally sample an offline neighbor $i \in \cN_j$ with an updated probability $\alp_{ij} \cdot x_{ij}$ upon the arrival of $j$. In this way, they can both promote the performance of an offline agent $i$ with large mass by setting  $\alp_{ij}>1$ and compress that of $i$ with small mass by setting $\alp_{ij}<1$. In contrast, our idea is to adaptively and randomly update the set of neighbors of $j$ subject to sampling. Recall that the idea of boosting is to sample a neighbor of $j$ only from the set  of available neighbors at that time following a boosted version of distribution  $\x'_j=\{x_{ij}/\sum_{i \in \cN_{j,t}} x_{ij}\}$. Our attenuation idea is to further enhance the power of boosting by randomly ``muting'' some available neighbors of $j$ at time $t$ with small mass (\ie forcefully labeling them as unavailable) and then apply the boosting idea to the set of all available neighbors that survives the muting procedure. We expect our attenuation can help compress the performance of offline agents with small mass and promote that of offline agents with large mass as a result.

\subsection{Main contributions}

In this paper, we propose two generic online-matching based models to study individual and  group fairness maximization among offline agents in OMMs. For \ifm and \gfm, we present 
an LP-based sampling with boosting (\sbo) and another sampling algorithm with boosting and attenuation (\sab), respectively. Here are our main theoretical results.

\begin{theorem}\label{thm:main-0}
For \ifm and \gfm, both \gre and \rank achieve an online competitive ratio of 0.  
\end{theorem}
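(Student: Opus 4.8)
The plan is to establish this negative result by exhibiting an explicit family of instances $\{\cI_m\}_{m\ge 1}$ on which the \ifm-value of both \gre and \rank is vanishingly small, while a clairvoyant optimal keeps every offline agent matched with constant probability. Since \ifm is exactly the special case of \gfm in which every group is a singleton (on $\cI_m$ with singleton groups the two objectives coincide), the same family settles \gfm as well, so I would construct and analyze the instance only once.

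\textbf{Construction.} In $\cI_m$ the offline side is $\{a\}\cup\{b_1,\dots,b_m\}$. There is one ``popular'' online type $j$ adjacent to $a$ and to every $b_\ell$, together with $m$ ``private'' online types $e_1,\dots,e_m$, where $e_\ell$ is adjacent only to $b_\ell$; all arrival rates are $1$. The design point is that $a$'s only lifeline is the single shared type $j$ (which arrives $\mathrm{Poisson}(1)$ times over the horizon), whereas each $b_\ell$ additionally owns a private type. For the lower bound on \OPT\ I only need a single good feasible policy: the rule that routes the first arrival of $j$ to $a$ and the first arrival of each $e_\ell$ to $b_\ell$ matches $a$ whenever $j$ arrives and matches $b_\ell$ whenever $e_\ell$ arrives, so $\E[Z_a]=\E[Z_{b_\ell}]=1-1/\sfe$ and hence $\OPT(\cI_m)\ge 1-1/\sfe$ for every $m$ (one can read off the benchmark LP to see this is tight, but only the lower bound is needed).

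\textbf{Upper bound on the algorithms.} I would then show $\E[Z_a]\to 0$ as $m\to\infty$ under each algorithm, which forces the \ifm-value, being a minimum and hence at most $\E[Z_a]$, and therefore the ratio, to $0$. For \gre\ with ties broken toward the $b_\ell$'s (fixing any one worst-case greedy resolution suffices for a negative bound), $a$ can be matched only in a round where no $b_\ell$ remains available; but the roughly $m/\sfe$ agents $b_\ell$ whose private type never arrives stay available until consumed by $j$, so clearing all of them requires $j$ to arrive on the order of $m/\sfe$ times, an event of probability $\Pr[\mathrm{Poisson}(1)\ge m/\sfe]\to 0$. For \rank, with high probability $a$ sits below $\Omega(m)$ higher-ranked $b_\ell$'s whose private type did not arrive; the $\mathrm{Poisson}(1)$ arrivals of $j$ can remove only $O(1)$ of these blockers, so $a$ reaches the top of its available neighborhood, and thus gets matched, only with probability $O(1/m)$. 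Combining the two estimates with the \OPT\ bound gives $\gre(\cI_m),\rank(\cI_m)\le \E[Z_a]=o(1)$ while $\OPT(\cI_m)\ge 1-1/\sfe$, so the competitive ratio tends to $0$ for both \ifm\ and \gfm.

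\textbf{Main obstacle.} The delicate part is the upper bound, namely quantifying the starvation probability of $a$ under the dependent dynamics of $\mathrm{Poisson}$ arrivals. The \gre\ case reduces to a clean ``count the permanent blockers'' argument, since once ties favor the $b_\ell$'s the only relevant quantity is how many private-less $b_\ell$'s survive and how many times $j$ arrives. The harder case is \rank, where one must control the interaction between the random priority permutation and the random arrival order, and argue that the $\Omega(m)$ private-less blockers ranked above $a$ cannot be cleared by the few arrivals of $j$; turning this picture into a rigorous $O(1/m)$ bound, rather than a heuristic one, is where the real work lies.
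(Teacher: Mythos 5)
Your instance is, up to relabeling, exactly the one the paper uses ($a=i_1$, $b_\ell=i_{\ell+1}$, the popular type $j=j_1$, the private types $e_\ell=j_{\ell+1}$), your \OPT lower bound of $1-1/\sfe$ via the "match $j$ to $i=j$" policy is the paper's, and your \rank analysis (condition on the number $K$ of offline agents ranked above $a$, count the "permanent blockers" whose private type never arrives, and observe that the few arrivals of $j$ cannot clear $\Omega(m)$ of them) is in substance the paper's argument, which makes it rigorous with a negative-association/Chernoff bound and then averages over $K$ uniform on $\{0,\ldots,n-1\}$.

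However, your \gre argument has a genuine gap. The theorem concerns \gre with \emph{uniformly random} tie-breaking (for \ifm every available neighbor improves the objective equally, so every arrival of the popular type is an all-way tie), and the paper explicitly stresses that the surprise of Theorem~\ref{thm:main-0} is that greedy fails \emph{despite} this built-in randomization. Your plan is to "fix any one worst-case greedy resolution," namely ties broken toward the $b_\ell$'s, and claim this suffices for a negative bound. It does not: a lower bound on the performance of the randomized algorithm is the \emph{average} over tie-break realizations, not the minimum over deterministic resolutions, and on this very instance the tie-breaking rule swings the answer from $\Theta(1/m)$ (uniform) to $0$ (anti-$a$) to $1-1/\sfe$ (pro-$a$, which would give ratio $1$). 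Your argument, applied verbatim with the pro-$a$ resolution, would "prove" greedy achieves ratio $1$, so the step is not salvageable as stated. What is actually needed is the paper's computation: when $j$ arrives at time $t$ with $K_t$ available $b_\ell$'s, agent $a$ is matched with probability exactly $1/(K_t+1)$; one then shows via a concentration argument that $K_t\ge\frac{1}{2}(n-1)\sfe^{-t/T-o(t/T)}$ with high probability (the matched $b_\ell$'s are dominated by the count of private types that have arrived plus the arrivals of $j$), and sums $\frac{1}{T}\cdot\frac{1}{K_t+1}$ over the horizon to get $\E[Z_a]=O(1/n)$. Your counting of private-less survivors is the right ingredient for that concentration step, but you must feed it into the $1/(K_t+1)$ formula for uniform tie-breaking rather than into a worst-case tie-break.
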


\begin{theorem}\label{thm:main-1}
A simple \LP-based sampling algorithm with boosting (\sbo) achieves an online competitive ratio at least $0.725$ for \ifm.
\end{theorem}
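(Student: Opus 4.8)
Let $\beta$ denote the optimal value of \LP~\eqref{obj-1} and fix an optimal solution $\{x_{ij}\}$. Since the objective is $\min_i \sum_{j\sim i} x_{ij}$, I would first argue that we may assume \wl that every offline agent carries exactly mass $\beta$: any agent $i$ with $\sum_{j\sim i}x_{ij}>\beta$ can have its incident variables scaled down until its mass equals $\beta$, which only relaxes Constraints~\eqref{cons:j}--\eqref{cons:e} and leaves the objective unchanged. This equal-mass property is exactly the structural feature flagged in the introduction as the reason plain boosting already works for \ifm. By Lemma~\ref{lem:benchmark} the clairvoyant optimum is at most $\beta$, so it suffices to exhibit a constant $\rho\ge 0.725$ with $\Pr[Z_i=1]\ge \rho\,\beta$ for every $i$; then the \ifm-objective achieved by \sbo equals $\min_i\Pr[Z_i=1]\ge\rho\,\beta\ge\rho\,\OPT$.

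\textbf{Availability recursion with the boost factor.} Fix $i$, let $Y_i(t)$ indicate that $i$ is still available at the start of round $t$, and set $s_j(t)=\sum_{i'\in\cN_{j,t}}x_{i'j}\le 1$, the available mass around an online type $j$. Writing $\Pr[Z_i=1]=\sum_{t} \E[\mathbf{1}(i\text{ matched in round }t)]$ and using that \sbo matches an arriving $j\sim i$ to $i$ with probability $x_{ij}/s_j(t)$, I get
\[
\E[\mathbf{1}(i\text{ matched in round }t)] \;=\; \frac1T\sum_{j\sim i} x_{ij}\,\E\!\left[\frac{Y_i(t)}{s_j(t)}\right].
\]
The whole point of boosting is that $1/s_j(t)\ge 1$, with strict gain once neighbors of $j$ become unavailable. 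To make this tractable I would linearize via $1/s\ge 2-s$ for $s\in(0,1]$, which reduces the analysis to controlling the expected available mass $\E[Y_i(t)\,s_j(t)]$, i.e.\ to tracking how fast offline agents disappear.

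\textbf{The crux.} The quantity $\E[Y_i(t)s_j(t)]=\sum_{i'\sim j}x_{i'j}\,\E[Y_i(t)Y_{i'}(t)]$ couples $i$'s survival to that of its neighbors' neighbors. If one could establish a uniform self-consistent decay $\E[Y_{i'}(t)]\le G(t/T)$, the recursion would collapse, as $T\to\infty$, to the ODE $G'(\tau)=-\beta\,(2-G)\,G$ with $G(0)=1$, whose solution $G(\tau)=2/(\sfe^{2\beta\tau}+1)$ yields the clean bound $\Pr[Z_i=1]\ge 1-G(1)=\tanh(\beta)$. The difficulty --- and the reason the true guarantee is $0.725$ rather than the optimistic $\tanh(1)\approx 0.76$ --- is that this idealized decay is not literally valid: (i) conditioning on $Y_i(t)=1$ is positively correlated with the availability of $i$'s competitors, inflating $\E[Y_iY_{i'}]$ above $\E[Y_i]\E[Y_{i'}]$ and weakening the boost; and (ii) the $i'=i$ term contributes its \emph{undiscounted} mass $x_{ij}\E[Y_i]$, producing a quadratic loss governed by $\sum_{j\sim i}x_{ij}^2$. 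Controlling (i) rigorously --- through a coupling or a monotonicity/association argument that bounds the joint survival law one-sidedly --- is the main obstacle I anticipate; bounding (ii) requires the tightened Constraint~\eqref{cons:e} with $|S|=1$, which forces $x_{ij}\le 1-1/\sfe$ and thereby caps how concentrated $i$'s mass can be.

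\textbf{Conclusion.} After replacing the idealized ODE by the corrected differential inequality that absorbs the quadratic loss and a provable correlation bound, I would integrate it to obtain a lower bound on $\Pr[Z_i=1]/\beta$ as an explicit function of $\beta$ and of the feasible edge-mass profile $\{x_{ij}\}_{j\sim i}$. The last step is a finite-dimensional worst-case optimization: minimize this ratio over $\beta\in(0,1]$ and over profiles satisfying $\sum_{j\sim i}x_{ij}=\beta$ together with~\eqref{cons:e}, and verify the minimum is at least $0.725$. Since this extremal program coincides with the one solved by~\cite{bib:Jaillet} for \vw, I expect the same value $0.725$ to emerge.
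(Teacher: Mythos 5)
Your outline has the right skeleton --- the equal-mass normalization, the per-round matching rate driven by the boosted denominator being at most $1$, the recognition that correlations between survival events must be controlled, and the final worst-case optimization over edge-mass profiles under the tightened Constraint~\eqref{cons:e} --- and all of these do appear in the paper's argument. But there is a genuine gap at exactly the point you flag as ``the main obstacle I anticipate'': you never establish the one-sided bound on the joint survival law, and without it nothing downstream can be integrated. The paper closes this gap with Lemma~\ref{lem:1-a}, which is the load-bearing step: it shows $\Pr[\bo_{i',t}=1\mid \bo_{i,t}=1]\le(1-\tau/T)^{t-1}$ directly, by observing that in the boosted balls-and-bins process every edge $(i',j')$ with $i'$ still unoccupied is sampled in each round with probability at least $x_{i',j'}/T$ (because the boosting denominator $X_{j',t'}\le 1$ by Constraint~\eqref{cons:j}), so each surviving competitor is eliminated at rate at least $\tau/T$ per round \emph{regardless of what we condition on}. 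This sidesteps the positive-correlation worry you raise rather than fighting it. With that bound in hand the paper does not need your linearization $1/s\ge 2-s$: it applies Jensen's inequality, $\E[1/X_{j,t}\mid\bo_t=1]\ge 1/\E[X_{j,t}\mid\bo_t=1]$, which is strictly stronger (since $1/u\ge 2-u$ for $u>0$), and your weaker tangent-line bound may well fail to reach $0.725$ --- you never verify that it does, and your fallback (``the corrected differential inequality'') is left unspecified.

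The second unresolved piece is the extremal program. You assert that it ``coincides with the one solved by~\cite{bib:Jaillet}'' and ``expect the same value $0.725$ to emerge,'' but the paper actually has to solve it (Lemma~\ref{lem:1-b}): a local-perturbation argument shows the minimizing profile is $\{1-\sfe^{-1},\ \sfe^{-1}-\sfe^{-2},\ \dots\}$, i.e., mass is pushed against the upper boundaries of Constraint~\eqref{cons:e}, after which $\E[Z]/\tau$ is checked to be minimized at $\tau=1$ with value $0.725$. So the proposal is a correct roadmap of where the difficulties lie, but the two lemmas that constitute the actual proof are left unproven.
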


\begin{theorem}\label{thm:main-2}
An LP-based sampling algorithm with attenuation and boosting (\sab) achieves an online competitive ratio at least $0.719$ for \gfm and \vw.
\end{theorem}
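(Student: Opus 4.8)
The plan is to reduce the competitive ratio for both \gfm and \vw to a single per-agent guarantee and then establish that guarantee for \sab. Let $\x=\{x_{ij}\}$ be an optimal solution to the relevant benchmark \LP\ and write $x_i:=\sum_{j\sim i}x_{ij}$ for the \LP-mass placed on offline agent $i$. I would first show that it suffices to prove $\E[Z_i]\ge\rho\cdot x_i$ \emph{simultaneously for every} $i\in I$, with $\rho\ge 0.719$. This one inequality implies the theorem for both objectives at once: for \vw, summing $w_i\E[Z_i]\ge\rho\,w_ix_i$ over $i$ recovers $\rho$ times the \LP-value \eqref{obj-3}; for \gfm, every group satisfies $\frac1{|G|}\sum_{i\in G}\E[Z_i]\ge\rho\cdot\frac1{|G|}\sum_{i\in G}x_i$, and taking the minimum over $G\in\cG$ on both sides shows \sab\ attains at least $\rho$ times the optimal value of \LP~\eqref{obj-2}. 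Thus the entire burden falls on the per-agent bound, and the heterogeneity of $\{x_i\}$ is the only obstacle (recall from Lemma~\ref{lem:sampBub} that uniform boosting alone cannot beat $(1-1/\sfe)\cdot x_i$ for the high-mass agents, which is exactly the case that does not arise for \ifm).

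Next I would fix the attenuation scheme and set up the availability process. When an online agent $j$ arrives at round $t$, \sab\ first ``mutes'' each available neighbor $i\in\cN_{j,t}$ independently, keeping it active with probability $\theta_i$ (a non-decreasing function of $x_i$, with $\theta_i=1$ for the maximal-mass agents), and then boosts, sampling a surviving active neighbor $i$ with probability $x_{ij}/\sum_{i'\in A'}x_{i'j}$, where $A'$ is the surviving set. Write $y_i(s)$ for the probability that $i$ is still available at continuous time $s=t/T\in[0,1]$; since $T\gg1$ the per-round match probabilities are $O(1/T)$ and $y_i$ obeys a differential inequality, with $\E[Z_i]=1-y_i(1)$. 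Because $i$ is matched at a rate proportional to $\theta_i\,y_i(s)$ times the expected boosting factor $\E\big[1/\sum_{i'\in A'}x_{i'j}\big]$, the key quantity to control is this boosting factor.

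For the boosting factor I would invoke Jensen's inequality, which gives $\E[1/\sum_{i'\in A'}x_{i'j}]\ge 1/\E[\sum_{i'\in A'}x_{i'j}]$ \emph{without} requiring independence of the availabilities, and then bound $\E[\sum_{i'\in A'}x_{i'j}]\le\sum_{i'\sim j}\theta_{i'}\,y_{i'}(s)\,x_{i'j}$. The point is that the ``effective availability'' $\theta_{i'}y_{i'}(s)$ of each competitor is strictly below $1$: low-mass agents are deliberately muted ($\theta_{i'}<1$), so they free up sampling mass for high-mass agents, while Constraint~\eqref{cons:e} (in particular $x_{ij}\le 1-1/\sfe$ together with its $|S|\ge2$ refinements) prevents any single edge or small edge-set from absorbing too much mass and collapsing the boost. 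Substituting this denominator bound into the differential inequality for $y_i$ and integrating yields a closed-form lower bound on $\E[Z_i]/x_i$ as a function of $x_i$ and the attenuation profile $\{\theta_{i'}\}$.

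Finally I would close the loop by tuning $\{\theta_i\}$ to balance the two regimes. Low-mass agents have slack (boosting alone already matches them at rate $\approx x_i>\rho x_i$), so they can be muted down toward the target $\rho x_i$; the attention thereby released is precisely what raises the effective boost rate of the high-mass agents enough to certify $\E[Z_i]\ge\rho x_i$ for them as well. The worst-case value of $\rho$ over $x_i\in[0,1]$ and over the admissible competitor profiles then reduces to a small (factor-revealing) optimization that I expect to evaluate to at least $0.719$. \textbf{The main obstacle} is the adaptive, correlated nature of the boosting denominator: the availabilities $y_{i'}(s)$ are coupled across agents and are themselves reshaped by the muting. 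My plan sidesteps the correlations through the Jensen step (which needs only first moments) and through a self-consistent, agent-by-agent lower bound on $y_i$, but making that bound strong enough to yield a \emph{uniform} $\rho>1-1/\sfe$ — rather than degrading back to $1-1/\sfe$ on the high-mass agents — is the delicate part, and is exactly where Constraint~\eqref{cons:e} and the careful choice of $\{\theta_i\}$ must be combined.
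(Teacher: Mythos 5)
Your reduction to the per-agent guarantee $\E[Z_i]\ge 0.719\,x_i$ and your use of Jensen's inequality on the boosting denominator both match the paper. But there is a genuine gap at exactly the point you flag as ``the delicate part'': you never resolve the self-consistent, coupled bound on the competitors' effective availabilities $\theta_{i'}y_{i'}(s)$, and the paper's proof hinges on a specific device that removes this difficulty entirely. The paper does not use a static, mass-dependent muting probability $\theta_i$. Instead, it computes (by offline Monte-Carlo simulation of the algorithm itself) the probability $\alp_{i,t}$ that each agent is still active at round $t$, and applies the factor $\beta_{i,t}=\min\bp{1,(1-1/T)^{t-1}/\alp_{i,t}}$ so that \emph{every} agent's active probability is clipped to the universal envelope $(1-1/T)^{t-1}$. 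Combined with negative correlation, this gives $\Pr[\bo_{i',t}=1\mid\bo_{i,t}=1]\le(1-1/T)^{t-1}$ uniformly over all competitors $i'$ (Lemma~\ref{lem:2-a}), so the Jensen denominator collapses to $x_{i^*,j}+(1-x_{i^*,j})(1-1/T)^{t-1}$ with no dependence on the competitors' masses or on any coupled system of availabilities. Your scheme, by contrast, leaves the denominator depending on unknown trajectories $y_{i'}(s)$ that are reshaped by the muting, and you give no mechanism for closing that loop; without one, the high-mass agents degrade back toward $1-1/\sfe$, which is precisely the failure mode of Lemma~\ref{lem:sampBub}.

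Two further concrete problems. First, a fixed per-round survival probability $\theta_i<1$ applied independently at every round makes the active probability decay like $\theta_i^{t}$, which over $T$ rounds annihilates low-mass agents unless $\theta_i=1-O(1/T)$; your proposal does not calibrate this, whereas the paper's $\beta_{i,t}$ is by construction $1-O(1/T)$ and, crucially, the muting is \emph{irreversible}, which is what makes $q_t$ monotone (Lemma~\ref{lem:thm2-a}, P1). Second, you assume the target high-mass agent is never attenuated ($\theta_{i^*}=1$), but in the paper's scheme even $i^*$ is attenuated during an initial phase $t\le K$ (where $K$ is the turning point at which $q_t$ crosses $1/T$, Lemma~\ref{lem:2-c}), and the resulting loss must be accounted for explicitly via the decomposition $Z=Z_a+Z_b$ before and after $K$; the final bound $F(x_{i^*},\kap)/x_{i^*}\ge 0.719$ is a two-variable optimization over $x_{i^*}$ and $\kap=K/T$, tight at $x_{i^*}=1-\sfe^{-1}$, $\kap=1$, not the factor-revealing program over ``admissible competitor profiles'' you describe.
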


We emphasize that Theorem~\ref{thm:main-0} requires a non-trivial analysis and makes a significant statement. Indeed, it is a priori unclear why \gre (with randomized tiebreaking) or \rank should be so poor for fairness maximization, when randomization is built into both of them. This contrasts with facts that \gre achieves a ratio \emph{equal to} $1-1/\sfe$ for vertex-weighted online matching under KIID~\cite{goel2008online} and \rank achieves $1-1/\sfe$ for unweighted online matching even under adversarial~\cite{kvv}. This shows that \ifm and \gfm are new, distinct online matching problems in which the baseline algorithms of \gre and \rank do not work.

On the other hand, the technical contributions in our LP-based algorithms/analysis have already been discussed in Section~\ref{sec:over}.
Complementing the guarantees, there is an upper bound of 0.865 due to \cite{bib:Manshadi} for unweighted KIID with integral arrival rates, which can be modified to hold for \ifm and \gfm.
We acknowledge that for \vw under KIID with integral arrival rates, the guarantee of 0.719 implied by our \sab algorithm does not improve the two state-of-the-art algorithms, which achieve ratios of $0.725$~\cite{bib:Jaillet} and $0.729$~\cite{brubach2020online}, respectively.

Nonetheless, we still compare our algorithms against \gre, \rank, and these state-of-the-art algorithms in the literature~\cite{brubach2020online,bib:Jaillet,bib:Manshadi} in real-data simulations.
Our datasets include a public ride-hailing dataset collected from the city of Chicago, using which we construct instances for \ifm and \gfm, as well as four datasets from the Network Data Repository~\cite{DBLP:conf/aaai/RossiA15}, using which we test the classical problem \vw as well. For \ifm and \gfm, simulation results show that our \sbo algorithm, as well as the algorithm from \cite{bib:Manshadi} (after being adapted to our problem), significantly outperform others. For \vw, simulation results show that \sbo, along with \gre, significantly outperform others.

This demonstrates that among algorithms which appear to perform well in practice (boosting, \gre, \cite{bib:Manshadi}), our simple boosting algorithm achieves the best guarantee, and importantly, simultaneously performs well both for fairness maximization and for weight maximization (whereas \cite{bib:Manshadi} only performs well for \ifm/\gfm while \gre only performs well for \vw).
Our simple boosting idea is also much simpler to implement than the cleverly correlated sampling in \cite{bib:Manshadi}.

As a final conceptual result, we show the fairness maximization problems to be no harder than \vw in terms of the optimal competitive ratio. For a given model, let $\Phi(\cdot)$ denote the online competitive ratio achieved by an optimal online algorithm, \ie the best competitive ratio possible.  We establish the following, where the first inequality is trivial since \ifm is a special case of \gfm.

 \begin{theorem}\label{thm:main-4}
$\Phi(\ifm)\ge\Phi(\gfm)\ge\Phi(\vw)$.
\end{theorem}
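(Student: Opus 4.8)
The plan is to prove the two inequalities separately. The first, $\Phi(\ifm)\ge\Phi(\gfm)$, is immediate: every \ifm instance is a \gfm instance in which each group is a singleton $\{i\}$, so the class of \gfm instances contains the class of \ifm instances. Since $\Phi$ of a model is the infimum over its instances of the best-online-to-clairvoyant ratio, enlarging the instance class can only decrease this infimum, giving $\Phi(\gfm)\le\Phi(\ifm)$. For the second inequality $\Phi(\gfm)\ge\Phi(\vw)$, I would argue instance by instance, using that in the KIID model an online algorithm knows the whole instance (graph, arrival rates, groups/weights) up front and is uncertain only about the realized arrival sequence. Hence $\Phi(\mathrm{model})=\inf_{\cI}\mathsf{ON}(\cI)/\OPT(\cI)$, where $\mathsf{ON}(\cI)$ is the best online value attainable on $\cI$ and $\OPT(\cI)$ the clairvoyant value.

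Fix a \gfm instance $\cI$ and let $P\subseteq[0,1]^{I}$ be the set of expected match-probability vectors $(\E[Z_i])_{i\in I}$ realizable by online algorithms on $\cI$. By randomizing among algorithms $P$ is convex, and for finite $T$ it is compact. The crucial observation is that $P$ depends only on the graph and the arrival process, not on the objective, so the same $P$ governs both \gfm and \vw on $\cI$. Writing $f_G(v)=\tfrac1{|G|}\sum_{i\in G}v_i$, we then have $\mathsf{ON}_{\gfm}(\cI)=\max_{v\in P}\min_{G\in\cG}f_G(v)$, while for any nonnegative weights $w$ the optimal online \vw value is $\mathsf{ON}_{\vw}(w)=\max_{v\in P}\sum_i w_i v_i$.

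Next I would apply a minimax swap. Since $\min_{G}f_G(v)=\min_{y}\sum_{G}y_G f_G(v)$ over distributions $y$ on the finite set $\cG$, and the payoff $\sum_G y_G f_G(v)$ is bilinear in $(v,y)$ with $P$ and the simplex both convex and compact, Sion's minimax theorem gives
\begin{align*}
\mathsf{ON}_{\gfm}(\cI)=\max_{v\in P}\min_{y}\sum_{G}y_G f_G(v)=\min_{y}\max_{v\in P}\sum_{i\in I}w_i(y)\,v_i=\min_{y}\,\mathsf{ON}_{\vw}\big(w(y)\big),
\end{align*}
where $w_i(y)=\sum_{G\ni i}y_G/|G|\ge 0$ collects the dual weight placed on agent $i$. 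Thus the online \gfm value on $\cI$ equals the worst, over group-weightings $y$, of the online \vw value with weights $w(y)$ on the same graph. Applying the definition $\rho:=\Phi(\vw)$, which yields the instance-wise bound $\mathsf{ON}_{\vw}(w(y))\ge\rho\,\OPT_{\vw}(w(y))$ for every fixed $y$ (each is a genuine \vw instance on $\cI$'s graph), gives $\mathsf{ON}_{\gfm}(\cI)\ge\rho\cdot\min_{y}\OPT_{\vw}(w(y))$.

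Finally I would relate the clairvoyant sides. Let $v^{\ast}$ be the match-vector of a clairvoyant \gfm-optimal offline policy, so $\min_G f_G(v^\ast)=\OPT_{\gfm}(\cI)$ and $v^\ast$ is offline-achievable; then for any $y$, $\OPT_{\vw}(w(y))\ge\sum_i w_i(y)v^{\ast}_i=\sum_G y_G f_G(v^{\ast})\ge\min_G f_G(v^{\ast})=\OPT_{\gfm}(\cI)$, whence $\min_y\OPT_{\vw}(w(y))\ge\OPT_{\gfm}(\cI)$. Combining, $\mathsf{ON}_{\gfm}(\cI)\ge\rho\,\OPT_{\gfm}(\cI)$ for every $\cI$, so $\Phi(\gfm)\ge\rho=\Phi(\vw)$. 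The main obstacle is the minimax step and the claims supporting it: verifying that the online-achievable region $P$ is convex, compact, and---most importantly---independent of the objective, so that the inner maximization over $v\in P$ is exactly a \vw online value; and that $\Phi(\vw)$ furnishes an instance-wise guarantee for the adversarially reweighted (possibly non-identical) weights $w(y)$. The remaining clairvoyant comparison is the one-line inequality above and needs only the easy direction.
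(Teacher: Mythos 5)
Your proof is correct, and it reaches the second inequality by a route that differs in its details from the paper's. The paper fixes a feasible benchmark-LP solution $x$, writes an auxiliary LP over the simplex of deterministic online policies whose value is the best uniform multiplicative guarantee $\gamma$ with $\E[Z_i]\ge x_i\gamma$, and takes its LP dual: the dual variables $w_i$ (normalized by $\sum_i x_iw_i=1$) become the weights of a hard \vw instance, so the argument runs by contraposition, transferring an \emph{upper} bound on $\Phi(\gfm)$ to an upper bound on $\Phi(\vw)$. You instead work directly with the \gfm objective: you dualize over the group simplex rather than over per-agent constraints tied to an LP solution, obtaining induced weights $w_i(y)=\sum_{G\ni i}y_G/|G|$ via a Sion/von Neumann swap on the polytope of online-achievable match-probability vectors, and you transfer the \emph{lower} bound $\Phi(\vw)$ to every \gfm instance. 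The two buys of your version are (i) it handles the group structure natively, whereas the paper routes through the sufficient condition of a uniform per-agent guarantee relative to the \gfm LP optimum, and (ii) your final step compares clairvoyant optima directly --- $\OPT_{\vw}(w(y))\ge\sum_i w_i(y)v_i^\ast\ge\OPT_{\gfm}(\cI)$ using the offline-achievable vector $v^\ast$ --- which is cleaner than the paper's comparison against the benchmark LP value (the paper concludes a bound on the \vw competitive ratio from ``LP value $1$, online value at most $c$,'' which strictly speaking bounds the LP-relative ratio). What the paper's formulation buys in exchange is an explicit link to the benchmark LP used throughout the rest of the analysis. The supporting claims you flag as obstacles all check out: the achievable region is the convex hull of the finitely many deterministic policies' vectors (hence convex and compact) and depends only on the graph and arrival process, and since a KIID algorithm knows the instance up front, $\Phi(\vw)$ does furnish the instance-wise guarantee you invoke for each reweighted instance $w(y)$.
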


Of course, this result is on the theoretically optimal online algorithms, which cannot be computed.  The fairness maximization problems are computationally distinctive from \vw.

\xhdr{Roadmap}. In Section~\ref{sec:thm1}, we present the algorithm \sbo for \ifm and prove Theorem~\ref{thm:main-1}; in Section~\ref{sec:thm2}, we present the algorithm \sab for \ifm and \vw and prove Theorem~\ref{thm:main-2}; in Section~\ref{sec:thm1-4}, we prove Theorems~\ref{thm:main-0} and~\ref{thm:main-4}, and in Section~\ref{sec:exp}, we present details regarding our real datasets and relevant experimental results.

\section{Other Related Works} In recent years, online-matching-based models have seen wide applications ranging from blood donation~\cite{McElfresh2020MatchingAF} to volunteer crowdsourcing~\cite{Manshadi2020OnlinePF} and from kidney exchange~\cite{Li2019IncorporatingCP} to rideshare~\cite{xuAAAI18}. Here we briefly discuss a few studies that investigate the fairness issue. Both works of~\cite{suhr2019} and~\cite{lesmana2019} have studied the income inequality among rideshare drivers. However, they mainly considered a complete offline setting where  the information of all agents in the system including drivers and riders is known in advance. They justified that by focusing a short window and thus, all agents can be assumed offline. There are several other works that considered fairness in matching in an offline setting where all agents' information is given as part of the input, see, \eg \cite{sankar2021matchings,garcia2020fair}.~\citet{nanda2020balancing} proposed a bi-objective online-matching-based model to study the tradeoff between the system efficiency (profit) and the fairness among rideshare riders during high-demand hours. In contrast, ~\citet{xu2020trade} presented a similar model to examine the tradeoff between the system efficiency and the income equality among rideshare drivers. Unlike focusing on one single objective of fairness maximization like here, both studies in~\cite{nanda2020balancing} and~\cite{xu2020trade} seek to  balance the objective of fairness maximization with that of profit maximization. Recently,~\citet{ma2021grouplevel} considered a similar problem to us but focus on the fairness among online agents.~\citet{manshadi2021fair} studied fair online rationing such that each arriving agent can receive a fair share of resources proportional to its demand. The fairness issue has been studied in other domains/applications as well, see, \eg online selection of candidates~\cite{salem2019closing}, influence maximization~\cite{tsang2019group}, bandit-based online learning~\cite{patil2020achieving,gillen2018online,joseph2016fairness}, online resource allocation~\cite{sinclair2021sequential,bansal2021online}, and classification~\cite{dwork2012fairness}.

\section{Individual Fairness Maximization (\ifm)}\label{sec:thm1}
 Let $\cN_{j,t}$ denote the set of available neighbors of $j$ at $t$. For the ease of notation, we use $\{x_{ij}\}$ to denote an optimal solution to \LP~\eqref{obj-1} when the context is clear. Let $ x_i\doteq \sum_{j \sim i} x_{ij}$  for each $i \in I$. Assume w.l.o.g.  that $x_i=\tau$ for all $i \in I$, where $\tau \in [0,1]$  is the the optimal value of \LP~\eqref{obj-1}.\footnote{We can always make it by decreasing all $\{x_{ij}| j \sim i\}$ for those $i$ with $x_i>\tau$ without affecting the optimal LP value.} Our LP-based sampling with boosting is formally stated as follows.

\begin{algorithm}[h!]
\DontPrintSemicolon
\textbf{Offline Phase}: \\
Solve  \LP~\eqref{obj-1} and let $\{x_{ij}\}$ be an optimal solution.\;
\textbf{Online Phase}:\\
 \For{$t=1,\ldots,T$}{
 Let an online agent of type $j$ arrive at (the beginning of) time $t$. \;
Let $\cN_{j,t}=\{i \in \cN_j, i \mbox{ is available at $t$}\}$, \ie the set of available neighbors of $j$ at $t$. \;
\eIf{$\cN_{j,t}=\emptyset$}{ Reject $j$.}
{Sample a neighbor $i \in \cN_{j,t}$ with probability $x_{ij}/\sum_{i' \in \cN_{j,t}}x_{i',j}$. \label{alg:step3}
}
}
\caption{Sampling with Boosting (\sbo).}
\label{alg:adap}
\end{algorithm}

\vspace{-0.5in}
\subsection{Proof of the main Theorem~\ref{thm:main-1}}
For an offline agent $i \in I$, let $Z_i=1$ indicate that $i$ is matched in the end in \sbo. The key idea is to show the below theorem. 

\begin{theorem}\label{thm:1-a}
$\E[Z_i] \ge 0.725 \cdot \tau$ for all $i \in I$. 
\end{theorem}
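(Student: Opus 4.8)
The plan is to prove the pointwise bound $\E[Z_i]\ge 0.725\,\tau$ for each fixed offline agent $i$ by a survival-probability argument in the fluid limit $T\gg 1$. Write $p_i(t)=\Pr[i\text{ is available at the start of round }t]$, so that $p_i(1)=1$ and $\E[Z_i]=1-p_i(T+1)$. In each round a type $j\sim i$ arrives with rate $1/T$, and conditioned on $i$ being available it is matched with the boosted probability $x_{ij}/D_{j}(t)$, where $D_j(t)\doteq\sum_{i'\in\cN_{j,t}}x_{i'j}$ is the total LP-mass of the currently available neighbors of $j$. This yields the exact recursion $p_i(t+1)=p_i(t)-\frac1T\sum_{j\sim i}\E[\mathbf{1}(i\text{ avail at }t)\,x_{ij}/D_j(t)]$, which I would pass to the differential inequality $-\dot p_i(s)=\sum_{j\sim i}\E[\mathbf{1}(i\text{ avail})\,x_{ij}/D_j(s)]$ on the normalized horizon $s\in[0,1]$.

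The first, crude, step is to use $D_j(t)\le x_j\le 1$ together with $\sum_{j\sim i}x_{ij}=\tau$; this gives $-\dot p_i\ge \tau\,p_i$ and hence $\E[Z_i]\ge 1-\sfe^{-\tau}$. This already proves the claim whenever $\tau$ is small, but $1-\sfe^{-\tau}<0.725\,\tau$ once $\tau$ is close to $1$, so the crude estimate is insufficient precisely in the high-mass regime. Two features must be exploited to close the gap there. First, constraint~\eqref{cons:e} forces $\tau=\sum_{j\sim i}x_{ij}\le 1-\sfe^{-|\cN_i|}$, so a large value of $\tau$ is possible only when $i$ has many neighbors with thinly-spread mass. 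Second, and more importantly, boosting strictly improves on the crude bound because the denominator $D_j(t)$ shrinks over time: every competitor $i'$ sharing a neighbor $j$ with $i$ also carries LP-mass exactly $\tau$ (by the w.l.o.g.\ normalization) and is therefore itself matched and removed from $D_j$ as the horizon progresses, inflating the boosting factor $x_{ij}/D_j(t)$ well above $x_{ij}/x_j$.

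Turning this intuition into a rigorous bound is the main obstacle, and it is exactly the adaptivity warned about in Section~\ref{sec:over}: the denominator $D_j(t)$ is random and is correlated with the event that $i$ is still available. I would control it by establishing an upper bound $\E[D_j(t)\mid i\text{ available at }t]\le \bar D_j(s)$ on the conditional available mass, propagated by its own dynamics — each arrival of $j$ removes one available neighbor, and more generally each competitor leaves every denominator it belongs to once it is matched — and then combine it with conditional Jensen ($\E[1/D_j\mid i\text{ avail}]\ge 1/\bar D_j(s)$) to lower-bound $-\dot p_i$. The delicate point is the sign of the correlation: conditioning on $i$ being available makes its neighbors' slots more contested, so I expect the availabilities to be negatively associated in a way that makes the bound usable after a monotonicity / negative-association argument; verifying this association under the boosted, adaptive sampling is where the real work lies.

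Finally, I would reduce the worst case to a low-dimensional optimization. Since the normalization makes all offline masses equal to $\tau$ and constraint~\eqref{cons:e} couples $\tau$ to the degree of $i$, the extremal instance should be a symmetric gadget parameterized essentially by $\tau$ and by how fast competitors vacate the shared denominators. Solving the resulting coupled differential inequalities for $p_i(s)$ and $\bar D_j(s)$ gives a closed-form lower bound $g(\tau)$ on $\E[Z_i]/\tau$, and I would conclude by checking numerically that $\min_{\tau\in[0,1]}g(\tau)\ge 0.725$, with the minimizer lying in the high-$\tau$ region where the boosting gain over $1-\sfe^{-\tau}$ is essential. The expected bottleneck, to reiterate, is the rigorous handling of the correlated denominator in the third step rather than the concluding optimization.
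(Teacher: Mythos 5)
Your skeleton matches the paper's: condition on survival, apply Jensen to the boosted denominator, control competitors' conditional availability, and finish with an optimization over the LP solution plus a numerical check. But the two steps you defer are exactly the load-bearing ones, and your guesses about how they resolve are off in ways that matter. On the correlation issue you flag as ``where the real work lies,'' the paper does not need a negative-association theorem for the adaptive process. Lemma~\ref{lem:1-a} is proved by an elementary monotonicity observation: conditioned on $i$ surviving (indeed, conditioned on any history), each competitor $i'$ is still eliminated in every round with probability at least $\sum_{j'\sim i'}x_{i',j'}/T=\tau/T$, because boosting only inflates the sampling weight $x_{i',j'}/X_{j',t'}$ above $x_{i',j'}$ (since $X_{j',t'}\le 1$). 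This gives $\Pr[i'\text{ available at }t\mid i\text{ available at }t]\le(1-\tau/T)^{t-1}$ directly, which then feeds $\E[X_{j,t}\mid i\text{ available}]\le x_{ij}+(1-x_{ij})(1-\tau/T)^{t-1}$. Your proposal would need to supply this (or a genuine association argument) rather than assert that you ``expect'' it.

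The more serious problem is your claim that ``the extremal instance should be a symmetric gadget.'' After the Jensen step the bound takes the form $\E[Z_i]\ge 1-\exp\bigl(-\tfrac{1}{\tau}\sum_{j\sim i}\ln(1+x_{ij}(\sfe^{\tau}-1))\bigr)$, and the summand is \emph{concave} in $x_{ij}$. Minimizing a sum of concave functions subject to $\sum_{j}x_{ij}=\tau$ and Constraint~\eqref{cons:e} pushes the mass to the most \emph{concentrated} feasible allocation, namely $\cS(\tau)=\{1-\sfe^{-1},\sfe^{-1}-\sfe^{-2},\dots\}$ (Lemma~\ref{lem:1-b}, proved by local perturbation), not to the uniform one. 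The uniform allocation at $\tau=1$ would give roughly $1-\sfe^{-(\sfe-1)}\approx 0.82$, while the concentrated allocation gives the actual worst case of $0.725$. So certifying the bound only over symmetric gadgets would not cover all feasible LP solutions, and the constant you would report from that family would be an overestimate of the true guarantee of the algorithm's analysis. Both gaps are fixable, but as written the proposal is a plan whose two hard steps are, respectively, unproven and aimed at the wrong extremal case.
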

Theorem~\ref{thm:1-a} suggests that each offline agent gets matched in \sbo with probability at least $0.725 \cdot \tau$. Thus, we have $\min_i \E[Z_i] \ge  0.725 \cdot \tau \ge 0.725 \cdot \OPT$ by Lemma~\ref{lem:benchmark}, where $\OPT$ denotes the performance of a clairvoyant optimal. Therefore, we establish the main Theorem~\ref{thm:main-1}.

For each offline agent $i \in I$ and $t \in [T]$, let $\bo_{i,t}=1$ indicate that $i$ is available at (the beginning of) $t$ in \sbo, and $q_{i,t}$ be the probability that $i$ is matched during $t$ conditioning on $i$ is available at (the beginning of) $t$, \ie $q_{i,t}=\Pr[\bo_{i,t+1}=0| \bo_{i,t} ]$. Recall that $Z_i=1$ indicate that $i$ is matched in the end in \sbo. Thus, 
\begin{equation}\label{eqn:1-a}
\E[Z_i]=1-\prod_{t=1}^T (1-q_{i,t}).
\end{equation}

Recall that in the optimal solution, we have $x_i=\tau$ for all $i \in I$. 
\begin{lemma}\label{lem:1-a}
For any $i$ and $i'$ with $i \neq i'$ and any time $t$, we have 
$\Pr[ \bo_{i',t}=1| \bo_{i,t}=1] \le (1-\tau/T)^{t-1}$.
\end{lemma}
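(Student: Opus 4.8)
The engine behind the bound is a pointwise, per-round inequality supplied by boosting. Fix a round $s$ and condition on \emph{any} configuration of matched/available agents at the start of round $s$ in which $i'$ is still available. If an online agent $j \sim i'$ arrives (probability $1/T$), then $i' \in \cN_{j,s}$, so the boosted rule samples $i'$ with probability $x_{i'j}/\sum_{i'' \in \cN_{j,s}} x_{i''j} \ge x_{i'j}$, where the inequality holds because $\sum_{i'' \in \cN_{j,s}} x_{i''j} \le \sum_{i'' \sim j} x_{i''j} \le 1$ by Constraint~\eqref{cons:j}. Summing over $j \sim i'$, the conditional probability that $i'$ gets matched during round $s$ is at least $\sum_{j \sim i'}(1/T)\,x_{i'j} = x_{i'}/T = \tau/T$, \emph{regardless of the conditioning history}. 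Telescoping this pointwise bound over $s=1,\dots,t-1$ already yields the unconditional estimate $\Pr[\bo_{i',t}=1] \le (1-\tau/T)^{t-1}$; the whole difficulty of Lemma~\ref{lem:1-a} is that we must retain this bound after conditioning on the \emph{forward-looking} event $\bo_{i,t}=1$.

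\textbf{Handling the conditioning on $\bo_{i,t}=1$.} I would prove the statement as a negative-correlation fact: conditioning on $i$ surviving through round $t-1$ can only \emph{increase} the chance that $i'$ is matched, so that $\Pr[\bo_{i',t}=1 \mid \bo_{i,t}=1] \le \Pr[\bo_{i',t}=1] \le (1-\tau/T)^{t-1}$. To formalize this I would use a \emph{deletion coupling}. Let $\tilde P$ denote the same sampling process run on the graph with offline agent $i$ removed. The key claim is that the law of the trajectory of all agents other than $i$, conditioned on the event $\{\bo_{i,t}=1\}$ (i.e.\ $i$ is never sampled in rounds $1,\dots,t-1$), coincides with its law under $\tilde P$. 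Indeed, in any round whose arrival is not a neighbor of $i$ nothing changes, whereas in a round with arrival $j \sim i$ (where $i$ is available by the conditioning) conditioning the proportional draw on the outcome ``not $i$'' renormalizes the sampling probabilities to $x_{i''j}/\sum_{i'' \in \cN_{j,s}\setminus\{i\}} x_{i''j}$ --- exactly the rule used by $\tilde P$. Since deleting $i$ only shrinks the normalizing denominators, the per-round matching probability of $i'$ under $\tilde P$ is still at least $\tau/T$ by the argument of the first paragraph, and hence $\Pr[\bo_{i',t}=1 \mid \bo_{i,t}=1] = \Pr_{\tilde P}[\bo_{i',t}=1] \le (1-\tau/T)^{t-1}$.

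\textbf{Main obstacle.} The delicate point is justifying the deletion coupling rigorously, because $\{\bo_{i,t}=1\}$ is a \emph{forward-in-time} event: one must check that conditioning on $i$ \emph{never} being sampled (including in future rounds) does not retroactively bias \emph{which} non-$i$ agent is sampled in earlier rounds. I would establish this round by round via a change-of-measure (martingale) argument, using that $i$ is available in every round under the conditioning, that given the current available set the event ``$i$ is sampled this round'' is conditionally independent of the identity of the sampled non-$i$ agent given ``$i$ is not sampled,'' and that future non-selection of $i$ is a function only of future arrivals and the future available set of the non-$i$ agents. A tempting shortcut --- a direct induction on $t$ bounding $\Pr[\bo_{i,t}=\bo_{i',t}=1]$ --- turns out to be too lossy: the per-round step only gives $\Pr[\bo_{i,t+1}=\bo_{i',t+1}=1] \le \Pr[\bo_{i,t}=\bo_{i',t}=1]\,(1-\tau/T)$ (using that $i$ and $i'$ cannot both be matched in a single round), and dividing by $\Pr[\bo_{i,t+1}=1] \le \Pr[\bo_{i,t}=1]$ loses exactly the ``competition'' term $\Pr[\bo_{i,t}=1]-\Pr[\bo_{i,t+1}=1]$ that the coupling is designed to capture. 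This is why I would route the proof through the coupling rather than through a bare induction.
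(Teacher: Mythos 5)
Your first paragraph is exactly the paper's argument, phrased in its balls-and-bins language: since $X_{j,s}=\sum_{i''\in\cN_{j,s}}x_{i''j}\le\sum_{i''\sim j}x_{i''j}\le 1$ by Constraint~\eqref{cons:j}, an available $i'$ is matched in any round with conditional probability at least $\sum_{j\sim i'}x_{i'j}/T=\tau/T$, and telescoping gives $(1-\tau/T)^{t-1}$. The paper handles the conditioning on $\bo_{i,t}=1$ by simply asserting that this per-round bound continues to hold under that event; it introduces no coupling. So you have correctly isolated the one step that the paper leaves informal.

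However, the deletion coupling you propose to close that step is not valid as stated. Writing out the likelihood ratio between the original process restricted to $\{\bo_{i,t}=1\}$ and the process run on the graph with $i$ deleted, a trajectory $\omega$ whose round-$s$ arrival is $j_s$ satisfies $\Pr[\omega]=\widetilde{\Pr}[\omega]\cdot\prod_{s<t:\, j_s\sim i}\frac{X_{j_s,s}-x_{i j_s}}{X_{j_s,s}}$, and this correction factor depends on $\omega$ through the realized available sets. Hence the conditional law given $\{\bo_{i,t}=1\}$ is the deleted-process law \emph{reweighted} by this product, not equal to it. Moreover, the reweighting favors trajectories on which the normalizers $X_{j_s,s}$ stay large, i.e., on which offline agents sharing online neighbors with $i$ --- possibly including $i'$ itself --- remain available, so the bias runs in the unfavorable direction for the lemma. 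The same obstacle defeats the round-by-round change-of-measure sketched in your last paragraph: the event that $i$ is not sampled in \emph{future} rounds is not conditionally independent of which non-$i$ agent is sampled now, because matching $i'$ now shrinks future normalizers and raises $i$'s future sampling probabilities. In short, your per-round bound matches the paper's proof, and you have correctly diagnosed a genuine subtlety that the paper's own proof also does not resolve rigorously; but the coupling you offer as the repair fails at precisely the point it was meant to handle, so the proposal as written does not close the gap.
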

\begin{proof}
The online sampling process in \sbo can be interpreted through the following balls-and-bins model:  Each offline agent $i$ corresponds to a bin $i$ and each edge $e=(ij)$ corresponds to a ball $e$; During each round $t \in [T]$, a ball $(ij)$ will arrive with probability $0$ if $i$ gets occupied before $t$, and a ball $(ij)$ will arrive with probability $(x_{ij}/X_{j,t})\cdot(1/T)$ and shoot the bin $i$ if bin $i$ is empty (not occupied) at $t$, where $X_{j,t}\doteq \sum_{i' \in \cN_{j,t} } x_{i',j}$. Here $\cN_{j,t}$ can be viewed as the set of unoccupied bin $i$ at $t$ with $i \sim j$. Observe that for any $j$ and $t$, we have
$X_{j,t}=\sum_{i' \in \cN_{j,t} } x_{i',j} \le \sum_{i' \sim j } x_{i',j} \le 1$ due to Constraint~\ref{cons:j} in \LP~\eqref{obj-1}.

Observe that $\bo_{i,t}=1$ suggests that for any round $t'<t$, none of the balls $e=(ij)$ with $j \sim i$ arrives during $t'$. Consider a given $t'<t$ and a given $i' \neq i$. Assume $\bo_{i,t}=1$ and $i'$ is not occupied at $t'$. Then, we see that each ball $e=(i',j')$ with $j' \sim i'$ will arrive and shoot $i'$ with probability at least $(x_{i',j'}/X_{j',t'})\cdot(1/T)>x_{i',j'}/T$ since $X_{j',t'} \le 1$ for all $j' \sim i'$ and $t'<t$. This implies that the probability that none relevant balls will shoot $i'$ during $t'$ should be at most $1-\sum_{j' \sim i'} x_{i',j'}/T=1-\tau/T$. Here we invoke our assumption that every offline agent $i'$ has $x_{i'}=\sum_{j' \sim i'} x_{i',j'}=\tau$  in the optimal solution. Therefore, we claim that $i'$ will remain unoccupied after $t-1$ rounds with probability at most $(1-\tau/T)^{t-1}$. 
\end{proof}

\begin{proof}[Proof of Theorem~\ref{thm:1-a}]
Focus on a given offline agent $i^*$. For the ease of notation, we drop the subscription of $i^*$, and use $q_t,\bo_t$, and $Z$ to denote the corresponding values with respect to $i^*$. 


Now, we try to lower bound the value of $q_t$. Consider a given $t$. For each $i \neq i^*$, recall that $\bo_{i,t}=1$ indicate that $i$ is available at $t$. By the nature of \sbo, we see that conditioning on $i^*$ is available at $t$ ($\bo_t=1$), $i^*$ will be matched during $t$ iff one of its neighbors $j \sim i^*$ arrives and $(i^*,j)$ gets sampled.   Recall that for each $j \sim i^*$, $\cN_{j,t}$ denotes the set of available neighbors incident to $j$ at $t$, and $X_{j,t}=\sum_{i \in \cN_{j,t}}x_{ij}$. Observe that
\begin{equation}\label{eqn:1-c}
q_t =\E\bB{\frac{1}{T}\sum_{j \sim i^*} \frac{x_{i^*,j}}{X_{j,t}}\bigg|  \bo_t=1} \ge  \frac{1}{T}\sum_{j \sim i^*}   \frac{x_{i^*,j}}{ \E[X_{j,t}| \bo_t=1]}.
\end{equation}
The last inequality above is due to Jensen's inequality and the convexity of function $1/x$. Note that

\begin{align*}
 \E[X_{j,t}| \bo_t=1] &=\E\bb{\sum_{i \in \cN_{j,t}}x_{ij}| \bo_t=1}
 =x_{i^*,j}+\sum_{i \neq i^*, i\sim j} x_{i,j} \cdot \E[\bo_{i,t}| \bo_t=1] \\
 & \le x_{i^*,j}+ \sum_{i \neq i^*, i\sim j} x_{i,j} \cdot (1-\tau/T)^{t-1}.~~\mbox{(By Lemma~\ref{lem:1-a})}\\
 &\le  x_{i^*,j}+(1-x_{i^*,j}) \cdot (1-\tau/T)^{t-1}.~~\mbox{(Due to Constraint~\eqref{cons:j} of \LP~\eqref{obj-1})}
\end{align*}
Substituting the above inequality to Inequality~\eqref{eqn:1-c}, we have

\begin{align*}
 q_t& \ge \frac{1}{T}\sum_{j \sim i^*}   \frac{x_{i^*,j}}{x_{i^*,j}+ (1-x_{i^*,j}) \cdot (1-\tau/T)^{t-1}}.
 \end{align*}
Plugging the above results into Equation~\eqref{eqn:1-a}, we have
\begingroup
\allowdisplaybreaks
\begin{align*}
\E[Z]&=1-\prod_{t=1}^T (1-q_t)\\
& \ge 1-\prod_{t=1}^T \bP{1-\frac{1}{T}\sum_{j \sim i^*}   \frac{x_{i^*,j}}{x_{i^*,j}+ (1-x_{i^*,j}) \cdot (1-\tau/T)^{t-1}}}\\
&=1-\exp\bP{\sum_{t=1}^T \ln\bP{1-\frac{1}{T}\sum_{j \sim i^*}   \frac{x_{i^*,j}}{x_{i^*,j}+ (1-x_{i^*,j}) \cdot (1-\tau/T)^{t-1}}} }\\
&\ge 1-\exp\bP{-\sum_{t=1}^T\frac{1}{T}\sum_{j \sim i^*}   \frac{x_{i^*,j}}{x_{i^*,j}+ (1-x_{i^*,j}) \cdot (1-\tau/T)^{t-1}}  }\\
&=1-\exp\bP{-\sum_{j \sim i^*} \sum_{t=1}^T\frac{1}{T}  \frac{x_{i^*,j}}{x_{i^*,j}+ (1-x_{i^*,j}) \cdot (1-\tau/T)^{t-1}}  }\\
&=1-\exp\bP{-\sum_{j \sim i^*} \int_0^1 d\zeta \frac{x_{i^*,j}}{x_{i^*,j}+ (1-x_{i^*,j}) \cdot\sfe^{-\tau \cdot \zeta}}  } ~~\mbox{(Taking $T\rightarrow \infty$)}\\ 
&=1-\exp\bp{-\frac{1}{\tau}\sum_{j \sim i^*} \ln \bp{1+x_{i^*,j} \cdot(\sfe^\tau-1)}}
\end{align*}
\endgroup
The second inequality above is due to the fact that $\ln(1-x) \le -x$ for all $x \in[0,1)$.

Let $g_\tau(x)=\ln (1+x(\sfe^\tau-1))$, where $\tau \in [0,1]$ is a parameter. To get a lower bound for $\E[Z]$, we need to solve the below minimization program. For the ease of notation, we omit the subscription of $i^*$ and use $x_j \doteq x_{i^*,j}$. 
\begin{equation}\label{eqn:1-b}
\left\{\min \sum_{j \sim i^*} g_\tau(x_j):~~\sum_{j \sim i^*}x_j =\tau; \sum_{j \in S} x_j \le 1-\sfe^{-|S|}, \forall S \subseteq \cN_{i^*}, |S|=O(1).  \right\}
\end{equation}

Note that the first constraint is due to our assumption $x_{i^*}=\tau$; the rest is due to Constraint~\eqref{cons:e} in the benchmark LP. By Lemma~\ref{lem:1-b}, we have $\sum_{j \sim i^*} g(x_j) \ge \sum_{x \in \cS(\tau)} g(x)$. 
Plugging this result to the inequality on $\E[Z]$ above, we get 
\[
\E[Z] \ge 1-\exp\bp{-\frac{1}{\tau}\sum_{x \in \cS(\tau)} \ln \bp{1+x \cdot(\sfe^\tau-1)}},
\]
where $\cS(\tau)$ is an optimal solution to Program~\eqref{eqn:1-b} as defined in Lemma~\ref{lem:1-b}. We can verify that $\E[Z]/\tau$ gets minimized at $\tau=1$ and the corresponding ratio is $0.725$. Thus, we are done.
\end{proof}

An optimal solution to the minimization program~\eqref{eqn:1-b} can be characterized by the following set $\cS(\tau)$. Let $\ell_\tau$ be the largest integer satisfying that $1-\sfe^{-\ell_\tau}\le \tau$ with $\ell_0=0$ and $\ell_1=\infty$.

\begin{lemma}\label{lem:1-b}
An optimal solution to the minimization program~\eqref{eqn:1-b} can be characterized by $\cS(\tau)=\{ 1-\sfe^{-1}, \sfe^{-1}-\sfe^{-2}, \cdots, \sfe^{-\ell_\tau+1}-\sfe^{-\ell_\tau}, \tau-(1-\sfe^{-\ell_{\tau}}) \}$.
\end{lemma}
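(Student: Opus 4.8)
The plan is to recognize Program~\eqref{eqn:1-b} as the minimization of a \emph{Schur-concave} objective over a permutation-symmetric polytope, and then to exhibit $\cS(\tau)$ as the feasible point that majorizes every other feasible point. First I would record the two structural facts about $g_\tau(x)=\ln(1+x(\sfe^\tau-1))$: for $\tau\in[0,1]$ it is increasing and concave on $[0,1]$, being the logarithm of an increasing affine function with nonnegative slope $\sfe^\tau-1$, and moreover $g_\tau(0)=0$. The latter lets me pad any candidate solution with extra zero coordinates without changing its objective value, so I may place all feasible vectors in a common coordinate space. Since the objective $\sum_{j}g_\tau(x_j)$ is a symmetric sum of a concave function, it is Schur-concave: whenever a vector $x$ majorizes a vector $y$ (equal totals, and a larger sum of the $k$ largest entries for every $k$), we get $\sum_j g_\tau(x_j)\le\sum_j g_\tau(y_j)$. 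Hence any feasible point that majorizes all feasible points is automatically a minimizer.

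The core of the argument is to show that $\cS(\tau)$ is exactly such a point. I would first verify feasibility: each entry is nonnegative (each difference $\sfe^{-(m-1)}-\sfe^{-m}>0$, and the final entry $\tau-(1-\sfe^{-\ell_\tau})\ge 0$ by the defining maximality of $\ell_\tau$); the entries sum to $(1-\sfe^{-\ell_\tau})+(\tau-(1-\sfe^{-\ell_\tau}))=\tau$; and for every $S$ the constraint $\sum_{j\in S}x_j\le 1-\sfe^{-|S|}$ holds because the binding case is the $|S|$ largest entries. The decisive computation is that, with the entries sorted in decreasing order, the sum of the $k$ largest entries of $\cS(\tau)$ equals $\min\{1-\sfe^{-k},\tau\}$: for $k\le\ell_\tau$ the telescoping sum gives exactly $1-\sfe^{-k}\le\tau$, and for $k\ge\ell_\tau+1$ the partial sum has already reached the total $\tau$. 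On the other hand, any feasible $y$ obeys $\sum_{j\in S}y_j\le 1-\sfe^{-k}$ for $S$ equal to its $k$ largest coordinates, and trivially the same partial sum is at most $\tau$; thus the sum of the $k$ largest entries of $y$ is at most $\min\{1-\sfe^{-k},\tau\}$, matching $\cS(\tau)$, for every $k$. Together with equal totals, this is precisely the statement that $\cS(\tau)$ majorizes $y$, so by Schur-concavity $\sum_j g_\tau$ is minimized at $\cS(\tau)$.

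The step I expect to be the main obstacle is making this domination of partial sums airtight, in two respects. First, I must handle the boundary behaviour at $k=\ell_\tau$ and $k=\ell_\tau+1$ carefully, using that $\ell_\tau$ is the \emph{largest} integer with $1-\sfe^{-\ell_\tau}\le\tau$ to guarantee both that $\cS(\tau)$ saturates every size-$k$ constraint for $k\le\ell_\tau$ and that its partial sum switches to the cap $\tau$ exactly when the size-$(\ell_\tau+1)$ bound would exceed $\tau$. Second, Program~\eqref{eqn:1-b} only imposes Constraint~\eqref{cons:e} for $|S|=O(1)$, so the bound $\sum(\text{$k$ largest of }y)\le 1-\sfe^{-k}$ is available only up to the constant size $K$; for larger $k$ I would fall back on the trivial bound $\tau$, which still matches $\cS(\tau)$ once $k\ge\ell_\tau+1$, and note that the regime of interest has $\ell_\tau\le K$. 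Managing these edge cases, rather than the Schur-concavity machinery itself, is where the genuine care is required.
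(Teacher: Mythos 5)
Your route is genuinely different from the paper's. The paper argues by local exchange: it takes an optimal solution sorted decreasingly and shows that if the top entry (resp.\ top-$k$ partial sum) does not saturate the corresponding constraint $1-\sfe^{-k}$, then shifting a small $\ep$ of mass from a smaller coordinate to a larger one strictly decreases $\sum_j g_\tau(x_j)$ by concavity while preserving feasibility --- and it only carries this out explicitly for $\tau=1$ and $|S|\le 2$. You instead observe that $\sum_j g_\tau(x_j)$ is Schur-concave and that $\cS(\tau)$ is the majorization-maximal feasible point, since its top-$k$ partial sums equal $\min\{1-\sfe^{-k},\tau\}$ while every feasible point's top-$k$ sums are bounded by the same quantity. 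This is a cleaner, global argument that treats all $\tau$ and all subset sizes uniformly (padding with zeros via $g_\tau(0)=0$ is handled correctly), and it subsumes the paper's case analysis; the exchange argument is essentially the infinitesimal version of your T-transform.

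There is, however, one real gap, and it sits exactly where you flagged it and then waved it off. Your fallback for $k>K$ requires $\ell_\tau\le K$, but the bound in Theorem~\ref{thm:main-1} is evaluated at $\tau=1$, where $\ell_1=\infty$ by definition, so $\ell_\tau\le K$ fails in precisely the case that matters. Concretely, with $K=2$ and $\tau=1$ the point $y=(0.43,0.43,0.14)$ is feasible (each coordinate is below $1-\sfe^{-1}$ and each pair sums to at most $1-\sfe^{-2}$), yet its top-$3$ partial sum is $1>1-\sfe^{-3}$, so $\cS(1)$ does \emph{not} majorize $y$ and your argument gives nothing for this pair. Two honest fixes: (i) impose Constraint~\eqref{cons:e} for all $S\subseteq\cN_{i^*}$ (the paper's footnote notes the LP remains polynomially solvable via a separation oracle), in which case your majorization proof goes through verbatim and actually justifies the stated $\cS(\tau)$ with $\ell_1=\infty$; or (ii) keep $|S|\le K$ and accept that the true minimizer at $\tau=1$ is the truncation $\{1-\sfe^{-1},\ldots,\sfe^{-K+1}-\sfe^{-K},\sfe^{-K}\}$, which changes the optimal value by only $O(\sfe^{-K})$. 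Note that the paper's own worked case ($\tau=1$, $|S|\le2$) produces exactly this truncated form with last entry $\sfe^{-2}$ rather than $\sfe^{-2}-\sfe^{-3}$, so the discrepancy is inherited from the lemma statement itself; but as written, your claim that the problematic regime does not arise is incorrect.
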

\begin{proof}
 Recall that $g_\tau(x)=\ln (1+x(\sfe^\tau-1))$. We can verify that $g_\tau(0)=0$ and $g_\tau(1)=\tau$, and it is an increasing concave function over $x\in (0,1)$ for all $\tau\in [0,1]$. The main idea in our proof is a local perturbation analysis. In the following, we focus on the special case when $\tau=1$ and $|S| \le 2$ in Constraint~\eqref{cons:e}. All the rest follows a similar analysis. 
 
 Assume $\tau=1$ and let $g(x)=\ln (1+x(\sfe-1))$. Suppose have an optimal solution $\{x_j\}$ to Program~\eqref{eqn:1-b}. WLOG assume that $x_1 \ge x_2 \ge \cdots$. By Constraint~\eqref{cons:e} with $|S|=1$, we have $x_1 \le 1-\sfe^{-1}$. Suppose $x_1<1-\sfe^{-1}$. Then, we can apply a local perturbation as $x'_1=x_1+\ep$ and $x'_2=x_2-\ep$ with $\ep=\min(1-\sfe^{-1}-x_1, x_2, \frac{1}{4}-\frac{3}{4\sfe^2})$, and $x'_\ell=x_\ell$ for all $\ell>2$. Now, we try to show that (a)   $g(x'_1)+g(x'_2)<g(x_1)+g(x_2)$ and (b) $\x'=\{x'_j\}$ is still feasible to Program~\eqref{eqn:1-b}.

For Point (a), we have
\begin{align*}
g(x'_1)+g(x'_2)-g(x_1)-g(x_2)&=g(x_1+\ep)-g(x_1)+g(x_2-\ep)-g(x_2)\\
&=\ep \bP{\frac{g(x_1+\ep)-g(x_1)}{\ep}-\frac{g(x_2)-g(x_2-\ep)}{\ep}}<0
\end{align*}
For Point (b): we can verify that $\x'$ will satisfy the first two constraints of Program~\eqref{eqn:1-b} in the way that $\sum_j x'_j=1$ and $x'_j \le 1-\sfe^{-1}$ for all $j$. Note that for any $j \ge 3$,
\begin{align*}
x'_1+x'_j & \le x_1+\ep+x_3 \le \frac{x_1+x_3+x_1+x_2}{2}+\ep\le \frac{1}{2}+\frac{x_2}{2}+\ep \le \frac{1}{2}+\frac{1}{2} \frac{1-\sfe^{-2}}{2}+\ep \le 1-\sfe^{-2}.
\end{align*}
Thus, we prove Points (a) and (b), which yields a contradiction with our assumption that $x_1<1-\sfe^{-1}$. Repeating the above analysis, we can show that in the optimal solution, $x_2=\sfe^{-1}-\sfe^{-2}$, and $x_3=\sfe^{-2}$ for the case $\tau=1$ and $|S| \le 2$ in Constraint~\eqref{cons:e}. 
\end{proof}

\section{Group fairness maximization and agent-weighted matching}\label{sec:thm2}

\subsection{Motivation for attenuation} 
We first give an example showing that \sbo can never beat $1-\sfe^{-1}$ for \vw without attenuation. 
\begin{example}\label{exam:1}
Consider such a bipartite graph $(I,J,E)$ as follows.  Recall that by KIID assumption with all unit arrival rates, we have $T=n=|J|$.  The set of neighbors of $j$, denoted by $\cN_j$, satisfies the property that (1) $|\cN_j|=n, \forall j \in J$; (2) $\cap_{j \in J} \cN_j =\{i^*\}$. In other words, each $j$ has a set of $n$ neighbors and they are almost disjoint except sharing one single offline agent $i^*$. Thus, under this setting, we have (1) $|I|=m=n(n-1)+1$; (2) $i^*$ has $J$ as the set of neighbors and  every $i \neq i^*$ has one single neighbor in $J$. Let $w_{i^*}=1$ and $w_i=\ep^3$ with $\ep=1/n$ for all $i \neq i^*$. We can verify that an optimal solution to $\LP~\eqref{obj-1}$ can be as follows: $x_{ij}=\ep$ for all $(ij)\in E$. Under this solution, we have $x_j =\sum_{i \sim j} x_{ij}=1$ for all $j$, and $x_{i^*}=\sum_{j \sim i^*}x_{i^*,j}=1$ and $x_i=\sum_{j \sim i}x_{ij}=\ep$ for all $i \neq i^*$. The optimal LP value is $1+\ep^4(1/\ep^2-1/\ep)=1+\ep^2-\ep^3$.
\end{example}

\begin{lemma} \label{lem:sampBub}
\sbo can never beat the ratio of $1-\sfe^{-1}+o(1)$  on Example~\ref{exam:1}, where $o(1)$ is a vanishing term when $T \rightarrow \infty$.
\end{lemma}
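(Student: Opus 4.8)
The plan is to exploit the fact that on Example~\ref{exam:1} the benchmark value is essentially carried by the single shared agent $i^*$, whose weight is $1$ while all $n(n-1)$ remaining agents have weight $\ep^3=1/n^3$. First I would observe that $\OPT\ge w_{i^*}=1$: exactly one online agent arrives each round and $i^*$ is adjacent to every online type, so a clairvoyant optimal can always match $i^*$. Running $\sbo$ with the optimal LP solution $x_{ij}=\ep$ of Example~\ref{exam:1}, its objective is $\sum_{i}w_i\Pr[Z_i=1]\le \Pr[Z_{i^*}=1]+n(n-1)\ep^3$, and since $n(n-1)\ep^3=(n-1)/n^2=o(1)$, the whole statement reduces to showing $\Pr[Z_{i^*}=1]\le 1-\sfe^{-1}+o(1)$; the competitive ratio is then at most $(1-\sfe^{-1}+o(1))/1$.

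The heart of the argument is an upper bound on $q_{i^*,t}$, the probability that $i^*$ is matched at round $t$ conditioned on being available. I would condition on the entire arrival sequence $S$, which makes the arriving type $j_t$ at each round and the total arrival counts $a_j$ (number of appearances of $j$ in $S$) deterministic, with $\sum_j a_j=T=n$. Writing $A_{j,t}$ for the number of arrivals of $j$ strictly before $t$, the conditional matching probability is $q_{i^*,t}=\E[\ep/X_{j_t,t}\mid \bo_{i^*,t}=1,\,S]$, where $X_{j_t,t}=\sum_{i\in\cN_{j_t,t}}x_{i,j_t}$. The key structural observation is that each of the $n-1$ private (specialized) neighbors of a type $j$ has $j$ as its unique online neighbor, so it can be matched only in a round where $j$ itself arrives; hence the number of matched specialized neighbors of $j$ before $t$ is at most $A_{j,t}$. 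On the event $\bo_{i^*,t}=1$ the term $i^*$ already contributes $\ep$ to $X_{j_t,t}$, so $X_{j_t,t}\ge \ep+\ep\big((n-1)-A_{j_t,t}\big)=1-\ep A_{j_t,t}\ge 1-\ep a_{j_t}$ deterministically given $S$. This yields $q_{i^*,t}\le \ep/(1-\ep a_{j_t})$.

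I would then aggregate over rounds via the identity $\Pr[i^*\text{ unmatched}\mid S]=\prod_{t=1}^{T}(1-q_{i^*,t})$. Grouping rounds by arriving type gives $\sum_{t}q_{i^*,t}\le \ep\sum_{j}a_j/(1-\ep a_j)$. On the typical event $\mathcal{E}$ that no type arrives more than $\sqrt{n}$ times, every $\ep a_j\le n^{-1/2}=o(1)$, so this sum is at most $(1+o(1))\,\ep\sum_j a_j=(1+o(1))\,\ep\,n=1+o(1)$, and since each $q_{i^*,t}=O(\ep)=o(1)$ on $\mathcal{E}$ we get $\prod_t(1-q_{i^*,t})\ge \sfe^{-(1+o(1))}=\sfe^{-1}(1-o(1))$. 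Because each $a_j\sim \mathrm{Bin}(n,1/n)$ is essentially $\mathrm{Poisson}(1)$, a Chernoff bound together with a union bound over the $n$ types shows $\Pr[\bar{\mathcal{E}}]$ is super-polynomially small; bounding $\Pr[i^*\text{ unmatched}\mid S]\ge 0$ on $\bar{\mathcal{E}}$ then gives $\Pr[i^*\text{ unmatched}]\ge \sfe^{-1}(1-o(1))$ unconditionally, i.e. $\Pr[Z_{i^*}=1]\le 1-\sfe^{-1}+o(1)$, which finishes the proof.

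I expect the main obstacle to be controlling $X_{j_t,t}$ in the regime where some type arrives many times: there the bound $1-\ep a_{j_t}$ degrades and can even go non-positive, so the clean $O(\ep)$ estimate for $q_{i^*,t}$ breaks down. Overcoming this needs (i) conditioning on $S$ to turn the arrival counts into constants, (ii) isolating the rare high-multiplicity event through concentration of the binomial arrival counts, and (iii) verifying that conditioning on $i^*$'s availability does not corrupt the lower bound on $X_{j_t,t}$ — which it does not, since that bound only uses that matched specialized neighbors are capped by the sequence-determined quantity $A_{j,t}$. Note also that the uniform-mass argument behind Lemma~\ref{lem:1-a} is unavailable here because $x_{i^*}=1$ while $x_i=\ep$ for all other agents, which is precisely why the conditioning-on-$S$ device is the right replacement.
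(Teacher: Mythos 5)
Your proposal is correct and follows essentially the same route as the paper: both arguments rest on the observation that the private neighbors of a type $j$ can only be depleted in rounds where $j$ itself arrives, so with high probability (max arrival count $O(\ln n)$ in the paper, $\le\sqrt{n}$ via Chernoff in yours) every $X_{j,t}$ stays at $1-o(1)$, boosting has negligible effect on $i^*$, and $\Pr[Z_{i^*}=1]\le 1-\sfe^{-1}+o(1)$. Your explicit conditioning on the arrival sequence and your lower bound $\OPT\ge w_{i^*}=1$ are slightly cleaner packagings of the same calculation, but not a different proof.
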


\begin{proof}
Consider $i^*$ and let $Z_{i^*}=1$ indicate that $i^*$ is matched in \sbo in the end. Observe that during every round $t$, one $j \sim i^*$ will be sampled uniformly with probability $1/n$ and *shoot* one available neighbor $i \in  \cN_{j,t}$. Let $N_{j}=|\cN_{j,T+1}|$ be the number of available neighbors incident to $j$ surviving in the end, and let $M_j=n-N_j$, which refers to the number of neighbors of $j$ got shot. Observe that we have $T=n$ online arrivals and every arrival will shoot one item in $\cN_j$ uniformly over all $j \in J$. This process can be interpreted as a balls-and-bins model where we have $n$ balls and $n$ bins, and thus, $M=\max_j M_j$ can be viewed as the largest bin load. From~\cite{mitzenmacher2017probability}, we see that with probability $1-1/n$, the largest bin load is $M=\Theta(\ln n/ \ln \ln n) \le \ln n$ when $n$ is sufficiently large. 

Let $\SF$ be the event that $M \le \ln n$. Assume $\SF$ occurs. We see that for all $j \sim i^*$ and $t \in [T]$, $X_{j,t}=\sum_{i \in \cN_{j,t}}x_{ij} \ge \ep (n-M)=\ep(n-\ln n)=1-\ln n/n$. Let $Z_{i^*}=1$ indicate that $i^*$ is matched in the end. We have 
\begin{align*}
\E[Z_{i^*} |\SF] &= 1-\prod_{t=1}^T \bP{1-\sum_{j \sim i^*} \frac{1}{T}\cdot\frac{x_{i^*,j}}{X_{j,t}} } \le  1-\prod_{t=1}^T \bP{1-\sum_{j \sim i^*} \frac{1}{T}\cdot\frac{\ep}{1-\ln n/n} }\\
&=1- \bP{1-\frac{1}{T}\cdot\frac{1}{1-\ln n/n} }^T
\le 1- \exp\bp{-\frac{1}{1-\ln n/n}-\frac{1}{T}\frac{1}{(1-\ln n /n)^2}}\\
&\le 1-\sfe^{-1}+O(\ln n/n).
\end{align*}
Therefore, we have that 
\[
\E[Z_{i^*}] = \E[Z_{i^*} |\SF]\cdot \Pr[\SF]+ \E[Z_{i^*} |\neg \SF]\cdot \Pr[\neg \SF] \le  1-\sfe^{-1}+O(\ln n/n).
\]
Recall that the optimal LP value is $1+\ep^4(1/\ep^2-1/\ep)=1+\ep^2-\ep^3$. For each $i \neq i^*$, let $Z_i=1$ indicate that $i$ is matched in \sbo in the end. 
Observe that  the expected total values obtained by \sbo should be at most 
\[
\sum_{i \neq i^*} w_{i} \cdot\E[Z_i]+w_{i^*} \cdot\E[Z_{i^*}] \le \ep^3 \cdot n^2+ \E[Z_{i^*}] \le  1-\sfe^{-1}+O(\ln n/n).
\]
Thus, the final competitive ratio of \sbo on Example~\ref{exam:1} should be at most 
\[ \frac{1-\sfe^{-1}+O(\ln n/n)}{1+\ep^2-\ep^3} \le 1-\sfe^{-1}+O(\ln n/n).
\]
\end{proof}

\subsection{An LP-based sampling algorithm with attenuation and boosting (\sab)}

For a given offline agent $i$, we say $i' \in I$ is an \emph{offline neighbor} of $i$ iff there exists one online agent of $j$ such that $j \sim i$ and $j \sim i'$. Let $\cS_i$ be the set of offline neighbor of $i$. Example~\ref{exam:1} suggests that when all offline neighbors of $i$ have very small values in the optimal solution, the boosting strategy shown in \sbo will have little effect on improving the overall matching probability of $i$. Observe that for each offline vertices $i\neq i^*$ on Example~\ref{exam:1}, it will be matched with a probability at least $\E[Z_{i}] \ge 1-\sfe^{-\ep} \sim \ep=x_i$. In other words, the chance of getting matched for every $i \neq i^*$ in \sbo almost matches its contribution in the LP solution. In contrast, the chance that $i^*$ is matched is only a fraction of $1-\sfe^{-1}$ of its contribution in the LP solution. These insights motivate us to add appropriate attenuations to those unsaturated offline vertices such that the boosting strategy can work properly for those saturated ones.  

\xhdr{Offline-phase simulation-based attenuation}. Let us first introduce two auxiliary states for offline vertices, called \emph{active} and \emph{inactive}, which are slightly different from available (not matched) and unavailable (matched) as shown before. In our attenuation framework, we assume all offline vertices are active at the beginning ($t=1$). When an active offline agent $i$ is matched, we will label it as inactive. Meanwhile, we need to forcefully mute some active offline agent, label it as inactive, and view it as being virtually matched. Consider the instance on Example~\ref{exam:1}: In order to make the boosting strategy work for the dominant agent $i^*$, we have to intentionally label those active non-dominant vertices $i$ as inactive such that the sampling probability of $i^*$ can be effectively promoted when some $j \in \cN_i \cap \cN_{i*}$ arrives. Note that the transition from being active to inactive is \emph{irreversible}: Once an active offline agent $i$ is labeled as inactive, it will stay on that state permanently.

Here are the details of our simulation-based attenuation. By simulating all online steps of \sab up to time $t$, we can get a very sharp estimate of the probability that each $i$ is active at $t$, say $\alp_{i,t}$. If $\alp_{i,t} \le (1-1/T)^{t-1}$, then no attenuation is needed at $t$. Otherwise, add an attenuation factor of $(1-1/T)^{t-1}/\alp_{i,t}$ to agent $i$ at $t$ as follows: If $i$ is active at $t$, then label $i$ as \emph{inactive} with probability  $1-(1-1/T)^{t-1}/\alp_{i,t}$ and keep it active with probability $(1-1/T)^{t-1}/\alp_{i,t}$. In this way, we decrease the probability of $i$ being active at $t$ to the target $(1-1/T)^{t-1}$. The above attenuation can be summarized as follows: If $i$ is available at $t$, then label $i$ as  active and inactive with respective probabilities  $\beta_{i,t}$ and $1-\beta_{i,t}$, where $\beta_{i,t}=\min\bp{1, (1-1/T)^{t-1}/\alp_{i,t}}$. 

\textbf{Here are a few notes on the simulation-based attenuation scheme}. (1) When computing the attenuation factor $\beta_{i,t}$ for $i$ at $t$, we should simulate all online steps of \sab up to $t$ that include applying all the attenuation factors as proposed during all the rounds before $t$. (2) During every round, we apply the corresponding attenuation factor to each active offline agent in an independent way. (3) All attenuation factors can be computed in an \emph{offline} manner, \ie before the online phase actually starts.


\begin{algorithm}[h!]
\DontPrintSemicolon
\textbf{Offline Phase}: \;
\tcc{The offline phase will take as input $\{(I,J,E), \{w_i\}, \{r_j\},T\}$, and output $\{\beta_{i,t}\}$, where $\beta_{i,t}$ denotes the attenuation factor applied to an offline agent $i$ during round $t$.}
Solve  \LP~\eqref{obj-1} and let $\{x_{ij}\}$ be an optimal solution.\;
\emph{Initialization}: When $t=1$, set $\beta_{i,t}=1$ for all $i \in I$.\;
\For{$t=2,3,\cdots,T$}
{Applying Monte-Carlo method to simulate Step~\ref{alg:on-1} to Step~\ref{alg:on-2} for all the rounds $t'=1,2,\cdots,t-1$ of Online Phase, we get a sharp estimate of the probability that each offline agent $i$ is active at (the beginning of) $t$, say $\alp_{i,t}$. \;
Set $\beta_{i,t}=\min\bp{1, (1-1/T)^{t-1}/\alp_{i,t}}$.}
\textbf{Online Phase}:\;
\emph{Initialization}: Label all offline vertices \emph{active} at $t=1$.\;
 \For{$t=1,\ldots,T$}{

\emph{Independently} relabel each \emph{active} offline agent $i$ as active and inactive with respective probabilities $\beta_{i,t}$ and $1-\beta_{i,t}$. \label{alg:on-1}\;
Let an online agent of type $j$ arrive at time $t$. Let $\cN_{j,t}=\{i \in \cN_j, i \mbox{ is active at $t$}\}$, \ie the set of active neighbors of $j$ at $t$.  \;
\eIf {$\cN_{j,t}=\emptyset$}{Reject $j$.}
{Sample a neighbor $i \in \cN_{j,t}$ with probability $x_{ij}/\sum_{i' \in \cN_{j,t}}x_{i',j}$ and label $i$ as inactive. \label{alg:on-2}}
}
\caption{Sampling with Attenuation and Boosting (\sab).}
\label{alg:adap}
\end{algorithm}

\subsection{Proof of the main Theorem~\ref{thm:main-2}}
Similar to the proof of  Theorem~\ref{thm:main-1}, we aim to show that each offline agent $i$ will be matched in \sab with a probability $\E[Z_i] \ge 0.719 \cdot x_{i}$, where $Z_i=1$ indicates that $i$ is matched in \sab, and $x_i=\sum_{j \sim i}x_{ij}$ is the total mass allocated to $i$ in the optimal LP solution. This will suffice to prove  Theorem~\ref{thm:main-2}. The argument is as follows. (1) For \gfm, we have $\frac{1}{|G|} \sum_{i \in G} \E[Z_i] \ge \frac{0.719}{|G|} \sum_{i \in G} x_i$ for all $G \in \cG$. This suggests that $\sab=\min_{G \in \cG} \frac{1}{|G|} \sum_{i \in G} \E[Z_i] \ge 0.719 \cdot \min_{G \in \cG} \sum_{i \in G} x_i/|G| \ge 0.719 \cdot \OPT$ due to Lemma~\ref{lem:benchmark}, where $\sab$ and $\OPT$ refer to the performance of \sab and a clairvoyant optimal, respectively. (2) For \vw, we have $\sab=\sum_{i \in I} w_i \cdot \E[Z_i] \ge 0.719 \cdot \sum_{i \in I} w_i \cdot x_i \ge  0.719 \cdot \OPT$.

For each offline agent $i$, let $\bo'_{i,t}=1$ and $\bo_{i,t}=1$ indicate that $i$ is active at $t$ \emph{before} and \emph{after} the attenuation procedure shown in Step~\ref{alg:on-1} prior to the sampling process.  Let $\alp_{i,t}=\E[\bo'_{i,t}]$ and $\gam_{i,t}=\E[\bo_{i,t}]$.  Let $q_{i,t}$ be the probability that $i$ is matched during $t$ conditioning on $i$ is active at $t$ after attenuation, \ie $q_{i,t}=\Pr[\bo'_{i,t+1}=0| \bo_{i,t}=1]=1-\E[\bo'_{i,t+1}| \bo_{i,t}=1]$. According to our attenuation, for all $i$ and $t$, we have
\begin{equation}\label{eqn:thm2-a}
\gam_{i,t}=\alp_{i,t} \cdot \beta_{i,t},~~\beta_{i,t}=\min\bp{1, (1-1/T)^{t-1}/\alp_{i,t}}, ~~\alp_{i,t+1}=\gam_{i,t}\cdot (1-q_{i,t}). 
\end{equation}
Observe that $\alp_{i,1}=\beta_{i,1}=1$ for all $i$, and $\gam_{i,t} \le (1-1/T)^{t-1}$ for all $i$ and $t$.  Though our definition of $\{\bo_{i,t}\}$ is slightly different from before, Lemma~\ref{lem:1-a} of Section~\ref{sec:thm1} still works here. 



\begin{lemma}\label{lem:2-a}
For any $i$ and $i'$ with $i \neq i'$ and any time $t$, we have 
$\Pr[ \bo_{i',t}=1| \bo_{i,t}=1] \le (1-1/T)^{t-1}$.
\end{lemma}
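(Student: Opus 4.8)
The plan is to deduce the conditional bound from the \emph{unconditional} one that the attenuation scheme already supplies. Equation~\eqref{eqn:thm2-a} guarantees $\gam_{i',t}=\E[\bo_{i',t}]\le(1-1/T)^{t-1}$ for every $i'$ and $t$, so it suffices to prove the negative-correlation statement
\[
\Pr[\bo_{i',t}=1\mid \bo_{i,t}=1]\;\le\;\Pr[\bo_{i',t}=1]\;=\;\gam_{i',t},
\]
i.e.\ that forcing $i$ to stay active through round $t$ does not raise the chance that $i'$ is active at $t$; the right-hand side is then at most $(1-1/T)^{t-1}$ and we are done. This is the same reduction shape as in Lemma~\ref{lem:1-a}, but with an important difference: in \sbo every mass equals $\tau$, so the per-round matching rate alone already matched the target decay $1-\tau/T$, whereas in \sab the masses $x_{i'}$ are heterogeneous and usually smaller than $1$, so the uniform rate $1-1/T$ cannot come from matching alone and must be recovered by pairing the (robust) matching bound with the muting step that the attenuation installs.

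First I would port the balls-and-bins interpretation of Lemma~\ref{lem:1-a} verbatim: bins are offline agents, balls are edges, and the only structural fact used is $X_{j,t}=\sum_{i'\in\cN_{j,t}}x_{i'j}\le\sum_{i'\sim j}x_{i'j}\le1$ from Constraint~\eqref{cons:j}, which holds deterministically and is therefore untouched by any conditioning. Two features of \sab must then be folded in, and both are benign: (i) the muting coins are \emph{independent} across agents and use \emph{predetermined} probabilities $\beta_{i',t'}$ computed offline, so conditioning on $i$'s trajectory leaves the law of $i'$'s muting unchanged; and (ii) conditioned on $i$ never being shot, whenever a common neighbor $j$ of $i$ and $i'$ arrives with both active, the shot is diverted away from $i$ and redistributed among the remaining active neighbors proportionally to their $x$-masses, which can only \emph{increase} the chance the shot lands on $i'$. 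Feature~(ii) is the crux that turns the conditioning on $\{\bo_{i,t}=1\}$ from a liability into a help.

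Concretely, I would run an induction on $t'$ comparing the conditioned process to the unconditional one. Writing $\tilde\alp_{i',t'}:=\Pr[\bo'_{i',t'}=1\mid\bo_{i,t}=1]$ and $\tilde\gam_{i',t'}:=\Pr[\bo_{i',t'}=1\mid\bo_{i,t}=1]$, the muting identity $\tilde\gam_{i',t'}=\beta_{i',t'}\tilde\alp_{i',t'}$ holds exactly as it does unconditionally because of~(i), while the matching step contracts as $\tilde\alp_{i',t'+1}=\tilde\gam_{i',t'}(1-\tilde q_{i',t'})$ with a conditional per-round match probability $\tilde q_{i',t'}$ that, by~(ii) together with $X_{j,t}\le 1$, is at least the unconditional $q_{i',t'}$. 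Propagating $\tilde\alp_{i',t'}\le\alp_{i',t'}$ through the recursion~\eqref{eqn:thm2-a} and using that each $\beta_{i',t'}$ is a fixed constant yields $\tilde\gam_{i',t}\le\gam_{i',t}\le(1-1/T)^{t-1}$.

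The main obstacle is making Feature~(ii)---the inequality $\tilde q_{i',t'}\ge q_{i',t'}$---fully rigorous, because $\{\bo_{i,t}=1\}$ conditions on a \emph{future} survival event (at time $t>t'$) and the relabelling and sampling in later rounds are correlated with the state at round $t'$; moreover there are second-order effects whereby forcing $i$ alive could shield some third agent $i''$ that in turn shields $i'$, pushing \emph{against} negative correlation. I expect to control this either by a forward martingale/telescoping argument that bounds the one-step survival of $i'$ conditioned on the full history and on $\{\bo_{i,t}=1\}$, so that the future conditioning enters only through the robust bound $X_{j,t}\le1$, or by a direct coupling establishing negative association of the active-indicator vector $\{\bo_{i',t}\}$ under boosted sampling with independent muting. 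Everything else---the muting identities, the contraction, and the final appeal to~\eqref{eqn:thm2-a}---is routine bookkeeping once this step is secured.
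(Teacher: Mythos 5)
Your proposal follows essentially the same route as the paper's proof: reduce the conditional bound to (a) negative correlation of the activity indicators and (b) the unconditional guarantee $\gamma_{i',t}=\beta_{i',t}\,\alpha_{i',t}\le(1-1/T)^{t-1}$ that the attenuation enforces by construction, using that the muting coins are independent across agents with offline-precomputed probabilities. The only difference is one of emphasis: the paper dispatches step (a) in a single line by appealing to the proof of Lemma~\ref{lem:1-a}, whereas you correctly isolate it as the sole nontrivial step and sketch a recursion ($\tilde\gamma_{i',t'}=\beta_{i',t'}\tilde\alpha_{i',t'}$, $\tilde\alpha_{i',t'+1}=\tilde\gamma_{i',t'}(1-\tilde q_{i',t'})$ with $\tilde q_{i',t'}\ge q_{i',t'}$ from $X_{j,t}\le 1$) to make it explicit.
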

\begin{proof}
The proof of Lemma~\ref{lem:1-a} in Section~\ref{sec:thm1} suggests that $\bo'_{i,t}$ and $\bo'_{i',t}$ are negatively correlated before attenuation. Thus, we have $\Pr[\bo'_{i',t}=1| \bo'_{i,t}=1] \le \Pr[\bo'_{i',t}=1]$. Observe that attenuation factors are applied independently to all offline vertices. Therefore,
\[
\Pr[ \bo_{i',t}=1| \bo_{i,t}=1] =\beta_{i',t} \cdot \Pr[ \bo'_{i',t}=1| \bo'_{i,t}=1] \le
\beta_{i',t} \cdot  \Pr[ \bo'_{i',t}=1]=\beta_{i',t} \cdot \alp_{i',t} \le (1-1/T)^{t-1}.
\]
\end{proof}

Consider a given offline agent $i^*$ with a fixed value of $x_{i^*}\doteq \sum_{j \sim {i^*}} x_{i^*,j}$. For the ease of notation, we drop the subscription of $i^*$ and use $\bo'_{t}, \bo_t$, $\alp_t$, $\beta_t$, $\gam_t$, and $q_t$ to denote the corresponding values relevant to $i^*$.  Here are a few properties of $\{q_t\}$.

\begin{lemma}\label{lem:thm2-a}
\textbf{(P1)}: $q_t \le q_{t+1}$, $\forall t \ge 1$; \textbf{(P2)} $q_t \ge \frac{1}{T}\sum_{j \sim i^*} \frac{x_{i^*,j}}{x_{i^*,j}+(1-x_{i^*,j})\cdot (1-1/T)^{t-1}}$, $\forall t \ge 1$.
\end{lemma}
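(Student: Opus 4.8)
The plan is to handle the two properties separately, treating (P2) as an almost verbatim transcription of the \ifm analysis behind Theorem~\ref{thm:1-a}, and reserving the bulk of the work for the monotonicity claim (P1).

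For (P2) I would start from the exact per-round expression. Conditioned on $\bo_t=1$, the agent $i^*$ is matched during round $t$ precisely when some neighbor $j\sim i^*$ arrives (probability $1/T$) and the boosted sampling selects the pair $(i^*,j)$, which happens with probability $x_{i^*,j}/X_{j,t}$; hence $q_t=\E\!\left[\tfrac{1}{T}\sum_{j\sim i^*}\tfrac{x_{i^*,j}}{X_{j,t}}\,\middle|\,\bo_t=1\right]$. Exactly as in Theorem~\ref{thm:1-a}, Jensen's inequality applied to the convex map $x\mapsto 1/x$ gives $q_t\ge \tfrac{1}{T}\sum_{j\sim i^*}\tfrac{x_{i^*,j}}{\E[X_{j,t}\mid\bo_t=1]}$. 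I would then expand $\E[X_{j,t}\mid\bo_t=1]=x_{i^*,j}+\sum_{i\neq i^*,\,i\sim j}x_{ij}\,\Pr[\bo_{i,t}=1\mid\bo_t=1]$, bound each conditional availability by $(1-1/T)^{t-1}$ via Lemma~\ref{lem:2-a}, and invoke $\sum_{i\sim j}x_{ij}\le 1$ (Constraint~\eqref{cons:j}) to replace $\sum_{i\neq i^*}x_{ij}$ by $1-x_{i^*,j}$. Substituting $\E[X_{j,t}\mid\bo_t=1]\le x_{i^*,j}+(1-x_{i^*,j})(1-1/T)^{t-1}$ yields (P2). This part is routine once Lemma~\ref{lem:2-a} is available, since it mirrors the \sbo calculation with the uniform cap $(1-1/T)^{t-1}$ playing the role that $(1-\tau/T)^{t-1}$ played there.

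The substance of the lemma is (P1). Here the idea is to read $q_t=\tfrac{1}{T}\sum_{j\sim i^*}x_{i^*,j}\,\E[1/X_{j,t}\mid\bo_t=1]$ and argue termwise that $\E[1/X_{j,t}\mid\bo_t=1]$ is non-decreasing in $t$. Since $1/x$ is decreasing, it suffices to show that, conditioned on $i^*$ remaining active, the competing mass $X_{j,t}=x_{i^*,j}+\sum_{i\neq i^*}x_{ij}\,\bo_{i,t}$ is \emph{stochastically non-increasing} in $t$. Two forces drive this. First, inactivation is irreversible in \sab, so along any sample path the active neighborhood $\cN_{j,t}$ only shrinks. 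Second, the attenuation is designed so that the post-attenuation availability of every competitor is capped at $(1-1/T)^{t-1}$, which strictly decreases in $t$. Moreover, conditioning on $i^*$ surviving round $t$ \emph{unmatched} is negatively correlated with competitors surviving: when an arriving common neighbor does not sample $i^*$, some competitor is the more likely to have been matched instead, which pushes the conditional competing mass further down at time $t+1$.

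I expect the coupling needed to make the previous paragraph rigorous to be the \textbf{main obstacle}, because the conditioning events $\{\bo_t=1\}$ and $\{\bo_{t+1}=1\}$ differ, so one cannot directly compare $X_{j,t}$ and $X_{j,t+1}$ on a common probability space. The route I would pursue is an induction on $t$ establishing a conditional stochastic-dominance statement: reveal the history through round $t$ under $\{\bo_t=1\}$, note that the attenuation coin for $i^*$ at round $t+1$ is applied \emph{independently} of all competitors (so conditioning on $i^*$ surviving attenuation leaves their joint law unchanged), and check that the single-round transition can only inactivate competitors. Combining the deterministic shrinkage of $\cN_{j,\cdot}$ with the negative association of the boosted sampling -- the same negative-correlation phenomenon underlying Lemma~\ref{lem:2-a} -- should give the desired domination of $X_{j,t+1}\mid\bo_{t+1}=1$ below $X_{j,t}\mid\bo_t=1$, and hence (P1).
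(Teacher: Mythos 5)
Your treatment of (P2) is exactly the paper's: the per-round expression for $q_t$, Jensen's inequality on $x\mapsto 1/x$, Lemma~\ref{lem:2-a} to cap each $\E[\bo_{i,t}\mid\bo_t=1]$ by $(1-1/T)^{t-1}$, and Constraint~\eqref{cons:j} to replace $\sum_{i\neq i^*}x_{ij}$ by $1-x_{i^*,j}$. No differences there.

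For (P1) you and the paper share the same underlying idea (irreversibility of inactivation makes each $\bo_{i,t}$, hence $X_{j,t}$, pathwise non-increasing, so $1/X_{j,t}$ is pathwise non-decreasing), but the paper stops at exactly that one-line observation and does not engage with the issue you correctly flag: the conditioning event changes from $\{\bo_t=1\}$ to $\{\bo_{t+1}=1\}$, so pathwise monotonicity of the integrand does not by itself give monotonicity of the conditional expectations. Since $\{\bo_{t+1}=1\}\subseteq\{\bo_t=1\}$, the natural decomposition is
$\E[1/X_{j,t+1}\mid\bo_{t+1}=1]\ \ge\ \E[1/X_{j,t}\mid\bo_{t+1}=1]\ \ge\ \E[1/X_{j,t}\mid\bo_t=1]$,
where the first inequality is the pathwise shrinkage and all the substance sits in the second: one must show that further conditioning on $i^*$ surviving round $t$ (and its attenuation coin, which is independent and harmless) does not \emph{increase} the conditional expectation of $X_{j,t}$. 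Your heuristic for why this should hold (an arriving common neighbor that fails to sample $i^*$ has instead matched a competitor) is plausible, but note there is also a force in the other direction for the fixed neighbor $j$: small $X_{j,t}$ makes $i^*$ \emph{more} likely to be sampled when $j$ arrives, so survival of $i^*$ is, term by term, positively correlated with $X_{j,t}$ being large. Your proposal acknowledges this coupling as ``the main obstacle'' and leaves it at the level of a plan, so as written it does not close (P1); but to be fair, the paper's own proof of (P1) does not close it either. If you want a complete argument you would need to carry out the induction/stochastic-dominance step you sketch, or find a different route (e.g., establishing the monotonicity of $q_t$ only to the extent actually used downstream, namely the existence and characterization of the turning point $K$ in Lemma~\ref{lem:2-c}).
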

\begin{proof}
Recall that for each $j \sim i^*$, $\cN_{j,t}$ denotes the set of \emph{active} neighbors of $j$ at $t$ right after attenuation. Let $X_{j,t}=\sum_{i \in \cN_{j,t}} x_{ij}=\sum_{i \sim j} x_{ij} \cdot \bo_{i,t}$. Observe that 
\begin{align}\label{eqn:2-3}
q_t =\E\bB{\frac{1}{T}\sum_{j \sim i^*} \frac{x_{i^*,j}}{X_{j,t}}\bigg|  \bo_t=1} =\E\bB{\frac{1}{T}\sum_{j \sim i^*} \frac{x_{i^*,j}}{\sum_{i \neq i^*, i \sim j} x_{i,j} \cdot \bo_{i,t}+ x_{i^*,j}}\bigg|  \bo_t=1}.
\end{align}
Observe that for each given $i \neq i^*$, $\{\bo_{i,t}| t=1,2,\ldots,T\}$ will be a non-increasing sequence due to the  irreversibility of the transition from active to inactive of $i$. Therefore, we claim that $\{q_t|t=1,\ldots,T \}$ is a non-decreasing series. Thus, we prove \textbf{(P1)}. From Equation~\eqref{eqn:2-3}, we have  
\[
q_t \ge \frac{1}{T} \sum_{j \sim i^*} \frac{x_{i^*,j}}{\sum_{i \neq i^*, i \sim j} x_{i,j} \cdot \E[ \bo_{i,t}| \bo_t]+ x_{i^*,j}} \ge \frac{1}{T} \sum_{j \sim i^*} \frac{x_{i^*,j}}{(1-x_{i^*,j}) \cdot (1-1/T)^{t-1}+ x_{i^*,j}}.
\]
The first inequality is due to Jensen's inequality and convexity of the function $1/x$. The second one follows from Lemma~\ref{lem:2-a} and the fact of $\sum_{i \sim j} x_{ij} \le 1$ due to Constraint~\eqref{cons:j}. We get \textbf{(P2)}.
\end{proof}

\textbf{(P1)} in Lemma~\ref{lem:thm2-a} suggests that $\{q_t\}$ is a non-decreasing sequence. Let $K \in [T]$ be such a turning point that $q_{K-1}<1/T$ and $q_{K} \ge 1/T$. 

\begin{lemma}\label{lem:2-c}
For each $1<t \le K$, we have $\beta_{t}<1$ and $\gam_t=(1-1/T)^{t-1}$, and for each $t > K$, $\beta_{t}=1$.
\end{lemma}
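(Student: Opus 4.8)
The plan is to reformulate the claim purely in terms of the pre-attenuation activeness probability $\alp_t$ and the decaying threshold $(1-1/T)^{t-1}$. Observe that from $\beta_t=\min\bp{1,(1-1/T)^{t-1}/\alp_t}$ we get the exact equivalence: $\beta_t<1$ iff $\alp_t>(1-1/T)^{t-1}$, in which case $\gam_t=\alp_t\beta_t=(1-1/T)^{t-1}$; and $\beta_t=1$ iff $\alp_t\le(1-1/T)^{t-1}$, in which case $\gam_t=\alp_t$. Hence the lemma is equivalent to showing $\alp_t>(1-1/T)^{t-1}$ for all $1<t\le K$ and $\alp_t\le(1-1/T)^{t-1}$ for all $t>K$, after which the auxiliary claim $\gam_t=(1-1/T)^{t-1}$ for $1<t\le K$ comes for free.

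The crux is to track how $\alp_{t+1}=\gam_t(1-q_t)$ sits relative to the next threshold $(1-1/T)^t$, using that the threshold decays by exactly the factor $(1-1/T)$ per step while $\alp$ decays by $(1-q_t)$. When attenuation is active at $t$ (so $\gam_t=(1-1/T)^{t-1}$) we get $\alp_{t+1}=(1-1/T)^{t-1}(1-q_t)$, which exceeds $(1-1/T)^t$ iff $q_t<1/T$; when attenuation is off at $t$ (so $\gam_t=\alp_t\le(1-1/T)^{t-1}$) we get $\alp_{t+1}\le(1-1/T)^{t-1}(1-q_t)\le(1-1/T)^t$ whenever $q_t\ge1/T$. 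This is where monotonicity of $\{q_t\}$ from property (P1) of Lemma~\ref{lem:thm2-a}, together with the definition of the turning point $K$, does all the work: it guarantees $q_t<1/T$ for every $t<K$ and $q_t\ge1/T$ for every $t\ge K$, so the quantity $q_t-1/T$ changes sign exactly once.

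With these pieces in place the proof is a clean two-phase induction. Phase one: starting from $\alp_1=1=(1-1/T)^0$ and $q_1<1/T$ (valid when $K\ge2$), we get $\alp_2=1-q_1>(1-1/T)^1$; inductively, if attenuation is active at some $t<K$ then $q_t<1/T$ forces $\alp_{t+1}>(1-1/T)^t$, so attenuation stays active through $t=K$, giving $\beta_t<1$ and $\gam_t=(1-1/T)^{t-1}$ for all $2\le t\le K$. Phase two: at $t=K$ attenuation is active but $q_K\ge1/T$ drives $\alp_{K+1}\le(1-1/T)^K$, switching attenuation off; thereafter, if attenuation is off at some $t>K$ then $q_t\ge q_K\ge1/T$ forces $\alp_{t+1}\le(1-1/T)^t$, so $\beta_t=1$ for all $t>K$.

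I do not expect a serious obstacle; the reformulation in the first paragraph is the whole point, and after it the recursion is essentially mechanical. The one subtlety to handle cleanly is the degenerate case $K=1$, where phase one is vacuous and one instead checks directly that $\alp_2=1-q_1\le1-1/T$ using $q_1=q_K\ge1/T$, after which phase two proceeds as above. The conceptual take-away is that attenuation is active precisely when the true activeness probability overshoots the idealized target $(1-1/T)^{t-1}$, and the sign of $q_t-1/T$ governs whether the gap to that target widens or closes, so the single guaranteed sign change of $q_t-1/T$ produces exactly one transition from the active to the inactive regime.
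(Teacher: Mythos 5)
Your proposal is correct and follows essentially the same route as the paper's proof: both rewrite the claim as $\alp_t$ versus the threshold $(1-1/T)^{t-1}$ and run the same two-phase induction on the recursion $\alp_{t+1}=\gam_t(1-q_t)$, using (P1) and the definition of $K$ to control the sign of $q_t-1/T$. Your write-up is slightly more careful in that it makes the iff-reformulation explicit and handles the degenerate case $K=1$, which the paper glosses over.
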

\begin{proof}
By \textbf{(P1)}, we have $q_1 \le q_2 \le \cdots \le  q_{K-1}<1/T$. Observe that 
$\alp_1=\beta_1=\gam_1=1$. Now  we consider $t=2$.  From Equation~\ref{eqn:thm2-a}, we see $i^*$ will be active at $t=2$ before attenuation with probability $\alp_2=\gam_1 \cdot (1-q_1)>1-1/T$. Thus, $\beta_2=(1-1/T)/\alp_2<1$ and $\gam_2=1-1/T$. Continuing this analysis, we see for each $t=2,3,\ldots, K$, $\alp_t>(1-1/T)^{t-1}$, $\beta_t<1$, and $\gam_t=(1-1/T)^{t-1}$. Now consider the case $t=K+1$. We see that 
\[
\alp_{K+1}=\gam_K \cdot (1-q_K)=(1-1/T)^{K-1} \cdot (1-q_K) \le (1-1/T)^{K}. 
\]
Therefore, $\beta_{K+1}=1$ and $\gam_{K+1}=\alp_{K+1}$. Following this analysis, we have $\alp_t=\gam_t \le (1-1/T)^{t-1}$ and $\beta_t=1$ for all $t \ge K+1$.
\end{proof}

The above lemma implies that we will keep adding a proper attenuation factor $\beta_t<1$ to the agent $i^*$ for all $1<t \le K$, and afterwards, we will essentially add no attenuation to $i^*$. Let $Z=1$ indicate that $i^*$ is matched in the end in \sab.

\begin{theorem}
$\E[Z] \ge 0.719 \cdot x_{i^*}$
\end{theorem}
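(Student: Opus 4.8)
The plan is to first obtain an exact bookkeeping identity for $\E[Z]$ in terms of the per-round availabilities $\gam_t$ and matching probabilities $q_t$, and then lower-bound it using the structural facts already established in Lemmas~\ref{lem:thm2-a} and~\ref{lem:2-c}. Since the events ``$i^*$ gets matched during round $t$'' are disjoint across $t$, and each occurs exactly when $i^*$ is active after attenuation (probability $\gam_t$) and is then sampled by an arriving neighbor (conditional probability $q_t$), while muting removes $i^*$ without matching it, I would start from
\begin{equation*}
\E[Z]=\sum_{t=1}^T \gam_t\, q_t .
\end{equation*}
The key is to split this sum at the turning point $K$ of Lemma~\ref{lem:2-c}: for $t\le K$ we have $\gam_t=(1-1/T)^{t-1}$ exactly, whereas for $t>K$ we have $\beta_t=1$, so $\gam_t=\alp_t$ and, using $\alp_{t+1}=\gam_t(1-q_t)$, the tail telescopes as $\sum_{t>K}\gam_t q_t=\sum_{t>K}(\alp_t-\alp_{t+1})=\alp_{K+1}-\alp_{T+1}$.

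Next I would substitute the lower bound on $q_t$ from part \textbf{(P2)} of Lemma~\ref{lem:thm2-a} and pass to the fluid limit $T\to\infty$ (justified by the mild assumption $T\gg1$), writing $\theta=t/T$ and $f(\theta)=\sum_{j\sim i^*}\frac{x_{i^*,j}}{x_{i^*,j}+(1-x_{i^*,j})\,\sfe^{-\theta}}$, an increasing function with $f(0)=x_{i^*}$. The turning point becomes the fraction $\kappa$ solving $f(\kappa)=1$; the head sum converges to $\int_0^\kappa \sfe^{-\theta}f(\theta)\,d\theta$; and in the tail, $\alp_{K+1}\to\sfe^{-\kappa}$ while $T q_t\ge f(\theta)$ forces $\alp_{T+1}\le \sfe^{-\kappa}\exp(-\int_\kappa^1 f(\theta)\,d\theta)$. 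Collecting these I expect to reach
\begin{equation*}
\E[Z]\ \ge\ \int_0^\kappa \sfe^{-\theta}f(\theta)\,d\theta\ +\ \sfe^{-\kappa}\Big(1-\sfe^{-\int_\kappa^1 f(\theta)\,d\theta}\Big),
\end{equation*}
a quantity depending on $\{x_{i^*,j}\}$ only through $f$. The degenerate case $f(1)<1$ (no turning point within the horizon, i.e.\ $K>T$, so attenuation persists and $\E[Z]\to\int_0^1\sfe^{-\theta}f(\theta)\,d\theta$) should be handled separately as an easy subcase.

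Finally I would minimize the ratio $\E[Z]/x_{i^*}$ over all feasible marginals. For a fixed total mass $x_{i^*}=p$, I expect a local-perturbation/convexity argument mirroring Lemma~\ref{lem:1-b} to show that the worst-case distribution is again the extremal set $\cS(p)$ dictated by Constraint~\eqref{cons:e}; this reduces the problem to a one-dimensional minimization over $p\in[0,1]$, which is then evaluated numerically and shown to attain its minimum $0.719$. The main obstacle I anticipate is precisely the tail regime $t>K$: unlike the \ifm analysis, there the availability $\gam_t=\alp_t$ is no longer pinned to the clean curve $(1-1/T)^{t-1}$, so one must verify that the monotonicity $q_t\ge q_K\ge 1/T$ together with the telescoping structure is enough to control $\alp_{T+1}$, and that the two-regime form of $\E[Z]$ still admits the same extremal reduction to $\cS(p)$ used in the \ifm proof.
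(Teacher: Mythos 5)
Your plan follows essentially the same route as the paper's proof: the per-round identity $\E[Z_t]=\gam_t q_t$, the split at the turning point $K$ with $\gam_t=(1-1/T)^{t-1}$ on the head (Lemma~\ref{lem:2-c}), the lower bound \textbf{(P2)} on $q_t$, the fluid limit, the extremal reduction to $\cS(x_{i^*})$ mirroring Lemma~\ref{lem:1-b}, and a final numerical minimization. Your telescoping of the tail, $\sum_{t>K}\gam_t q_t=\alp_{K+1}-\alp_{T+1}$, is just a rewriting of the paper's $(1-1/T)^{K-1}\left(1-\prod_{t\ge K}(1-q_t)\right)$, and after bounding the product by $\exp(-\sum_t q_t)$ both arguments arrive at the same two-regime expression.

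The one loose end is your identification of the turning point as the solution of $f(\kappa)=1$. Property \textbf{(P2)} is only a \emph{lower} bound on $q_t$, so the true $q_t$ may exceed $\frac{1}{T}f(t/T)$ and the actual $K$ may occur strictly earlier than your pinned value; a priori you only know $\E[Z]\ge F(x_{i^*},\kappa_{\mathrm{actual}})$ for an unknown $\kappa_{\mathrm{actual}}\le\kappa$. The clean fix, which the paper adopts, is to leave $\kappa=K/T$ as a free parameter and minimize $F(x_{i^*},\kappa)/x_{i^*}$ jointly over $x_{i^*}\in[0,1]$ and $\kappa\in[0,1]$; the minimum $0.719$ is attained at $x_{i^*}=1-\sfe^{-1}$ and $\kappa=1$, i.e., precisely in your ``degenerate'' branch where no turning point occurs within the horizon. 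Alternatively, your one-dimensional reduction can be salvaged by noting that $\partial F/\partial\kappa=\sfe^{-\kappa}\left(1-\sfe^{-\int_\kappa^1 f}\right)(f(\kappa)-1)\le0$ whenever $f(\kappa)\le1$, so $F$ is decreasing on the whole range of possible turning points and your evaluation point is indeed the worst case---but that verification must be made explicit for the argument to be complete.
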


\begin{proof}
Let $Z=Z_a+Z_b$, where $Z_a=1$ and $Z_b=1$ indicate that $i^*$ is matched during any round $t < K$ and $t \ge K$, respectively. Let $K/T=\kap+o(1)$, where $\kap \in [0,1]$ is a constant and $o(1)$ is a vanishing term when $T \rightarrow \infty$. Let $f(p,x)=\frac{x}{x+(1-x) \cdot p}$. We can verify that for any fixed $p \in (0,1]$, $f(p,x)$ is an increasing concave function over $x \in [0,1]$. 

\xhdr{Lower bounding the value of $\E[Z_a]$}. For each $t < K$, let $Z_t=1$ indicate that $i^*$ is matched during the round of $t$. Observe that $Z_t=1$ iff $i^*$ is active at $t$ after attenuation (\ie $\bo_t=1$) and $i^*$ is inactive at $t+1$ before attenuation (\ie $\bo'_{t+1}=0$). Thus, we have 
\[
\E[Z_t] =\Pr[ (\bo_{t}=1) \wedge (\bo'_{t+1}=0)]=\Pr[\bo_{t}=1] \cdot \Pr[\bo'_{t+1}=0| \bo_{t}=1]=\gam_t \cdot q_t.
\]
Observe that from Lemma~\ref{lem:2-c}, we have $\gam_t=(1-1/T)^{t-1}$ for all
$2<t \le K$ and it is valid for $t=1$ as well. Therefore, we have
\begin{align}
\E[Z_a]&=\sum_{1\le t< K} \E[Z_t]=\sum_{1 \le t \le K} \gam_t \cdot q_t =\sum_{1\le t < K} \bp{1-\frac{1}{T}}^{t-1} \cdot q_t \\
& \ge \sum_{1 \le t < K} \bp{1-\frac{1}{T}}^{t-1} \cdot \frac{1}{T}\sum_{j \sim i^*} \frac{x_{i^*,j}}{x_{i^*,j}+(1-x_{i^*,j})\cdot (1-1/T)^{t-1}} \label{eqn:3-2}.
\end{align}


Recall that  $f(p,x)=\frac{x}{x+(1-x) \cdot p}$ is an increasing concave function over $x \in [0,1]$.  Define $\cS(x)=\{x\}$ if $0 \le x \le 1-\sfe^{-1}$, and $\cS(x)=\{1-\sfe^{-1}, x-(1-\sfe^{-1})\}$ if $1-\sfe^{-1}<x \le 1-\sfe^{-2}$, and $\cS(x)=\{1-\sfe^{-1}, \sfe^{-1}-\sfe^{-2}, x-(1-\sfe^{-2})\}$ if $1-\sfe^{-2}<x \le 1$. Following the same analysis as shown in Lemma~\ref{lem:1-b}, we see that
 \begin{equation}\label{eqn:3-1}
 \sum_{j \sim i^*} \frac{x_{i^*,j}}{x_{i^*,j}+(1-x_{i^*,j})\cdot (1-1/T)^{t-1}} \ge 
 \sum_{x \in \cS(x_{i^*})} f\bp{ (1-1/T)^{t-1}, x}.
 \end{equation}
 
Recall that $K/T=\kap+o(1)$. Plugging  Inequality~\eqref{eqn:3-1} to  Inequality~\eqref{eqn:3-2}, we have
 \begin{align*}
 \E[Z_a] &\ge  \sum_{1 \le t < K} \bp{1-\frac{1}{T}}^{t-1} \cdot \frac{1}{T}\cdot \sum_{x \in \cS(x_{i^*})} f\bp{ (1-1/T)^{t-1}, x}\\
 &=\sum_{x \in \cS(x_{i^*})}  \int_0^{\kap} d\zeta \cdot \sfe^{-\zeta}\cdot f(\sfe^{-\zeta},x). ~~~~\mbox{~~(Taking $T\rightarrow \infty$)}\\
 \end{align*}
\xhdr{Lower bounding the value of $\E[Z_b]$}. By definition, $\E[Z_b]=\sum_{K \le t \le T}\E[Z_t]$. Observe that $i^*$ will be active at (the beginning of) $t$ after attenuation with probability $\E[\bo_{K}]=(1-1/T)^{K-1}$. What's more, there will be no attenuation in essence during all $t>K$. Thus, assume $i^*$ is active after attenuation at $t=K$,  we can apply almost the same analysis as in Section~\ref{sec:thm1} to lower bound $\E[Z_b]$. 

\begingroup
\allowdisplaybreaks
\begin{align*}
\E[Z_b] &=\sum_{t \ge K}\E[Z_t] =\bp{1-\frac{1}{T}}^{K-1}\bp{1-\prod_{t \ge K}(1-q_t)} 
\\
&\ge \bp{1-\frac{1}{T}}^{K-1}\bb{1-\exp \bp{- \sum_{t\ge K}\sum_{j\sim i^*} \frac{1}{T} \cdot \frac{x_{i^*,j}}{x_{i^*,j}+(1-x_{i^*,j}) (1-1/T)^{t-1}}  }}\\
& \ge \bp{1-\frac{1}{T}}^{K-1}\bb{1-\exp \bp{- \sum_{t\ge K} \frac{1}{T}\cdot\sum_{x \in \cS(x_{i^*})} f\bp{ (1-1/T)^{t-1}, x} }}\\
&= \sfe^{-\kap}\bb{1-\exp \bp{- \sum_{x\sim \cS(x_{i^*})} \int_\kap^1 d\zeta \cdot  f\bp{ \sfe^{-\zeta}, x} }   }~~\mbox{(Taking $T\rightarrow \infty$)} \\
&= \sfe^{-\kap}\bb{1-\prod_{x\sim \cS(x_{i^*})}  \exp \bp{- \int_\kap^1 d\zeta \cdot  f\bp{ \sfe^{-\zeta}, x} }   }.
\end{align*}
\endgroup

Putting all the above stuff together, we have 
\[
\E[Z]  \ge F(x_{i^*}, \kap) \doteq  \sum_{x \in \cS(x_{i^*})}  \int_0^{\kap} d\zeta \cdot \sfe^{-\zeta}\cdot f(\sfe^{-\zeta},x) +\sfe^{-\kap}\bb{1-\prod_{x\sim \cS(x_{i^*})}  \exp \bp{- \int_\kap^1 d\zeta \cdot  f\bp{ \sfe^{-\zeta}, x} }   }.
\]
We can verify  via Mathematica that  $\min_{0 \le x_{i^*} \le 1, 0\le \kap \le 1 }F(x_{i^*}, \kap)/x_{i^*} \ge 0.719$ and the inequality becomes tight when $x_{i^*}=1-\sfe^{-1}$ and $\kap=1$.
\end{proof}

\section{Proof of main Theorems~\ref{thm:main-0} and~\ref{thm:main-4}}\label{sec:thm1-4}

\subsection{Proof of Theorem~\ref{thm:main-0}}
Let us briefly describe \gre and \rank here for  \ifm and \gfm. For \gre,  it will always assign an online arriving agent to an offline available neighbor such that the match can improve the current objective of \ifm and \gfm most; break the tie in a \emph{uniformly} way if there is. For \rank, it will first choose a random permutation $\pi$ over all offline neighbors and then it  will always assign an online arriving agent to an offline available neighbor with the lowest rank in $\pi$. Observe that $\ifm$ is a special case of \gfm when each group consists of one single offline type. Thus, it will suffice to show that \gre and \rank achieve a ratio of zero for \ifm to prove Theorem~\ref{thm:main-0}.

\begin{example}\label{exam:worst}
Consider such an instance $\cI$ of \ifm as follows. We have $|I|=|J|=T=n$ offline and online agents. For $j=1$, it can serve all offline agents, \ie $\cN_{j=1}=I$. For each online agent $j=2,3,\ldots, n$, it can serve one single offline agent $i=j$. Consider such an offline algorithm $\ALG$ (not necessarily a clairvoyant optimal): Try to match each online agent $j \in J$ with $i=j$ if agent $j$ arrives at least once. We can verify that in \alg, each offline agent will be matched with probability at least $1-\sfe^{-1}$. Thus, we claim that for the offline optimal, its performance should have $\OPT \ge \ALG \ge 1-\sfe^{-1}$.  
\end{example}

\begin{lemma}
\gre achieves an online competitive ratio of zero for \ifm Example~\ref{exam:worst}.
\end{lemma}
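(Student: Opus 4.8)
The plan is to exhibit a single ``bottleneck'' offline agent whose match probability under \gre tends to $0$ as $n\to\infty$, while the benchmark stays bounded away from $0$. On Example~\ref{exam:worst}, offline agent $1$ is served \emph{only} by online type $1$ (every other online type $j\ge 2$ is adjacent only to offline agent $j$), whereas online type $1$ is adjacent to all of $I$. The instance already records $\OPT\ge 1-1/\sfe$, so it suffices to show $\E[Z_1]\to 0$: then $\min_i\E[Z_i]\le\E[Z_1]\to 0$ while $\OPT\ge 1-1/\sfe$, forcing the ratio to $0$. The first step is to pin down \gre's behaviour. Whenever online type $1$ arrives, every currently available offline agent is unmatched and hence tied at the common minimum value $0$ of the realized objective, so \gre matches a uniformly random available agent. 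Consequently agent $1$ can become matched only on a round where type $1$ arrives \emph{and} this uniform draw selects agent $1$.

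Let $A_t$ denote the number of still-available offline agents at the start of round $t$; note $A_t$ is non-increasing. Conditioning on the state at the start of round $t$, and using that the arriving type is independent of that state, agent $1$ is matched during round $t$ with probability $\tfrac1n\cdot\tfrac1{A_t}$ when it is still available and $0$ otherwise. Since agent $1$ is matched at most once, summing these disjoint-event probabilities gives
\[
\E[Z_1]=\frac1n\,\E\Bigl[\sum_{t=1}^{n}\frac{\mathbf{1}\{\text{agent }1\text{ available at }t\}}{A_t}\Bigr].
\]
Thus everything reduces to showing that $A_t$ stays \emph{linear} in $n$. Concretely, I would introduce a ``bad'' event $\mathcal{B}$ on which $A_{n+1}<cn$ (for a fixed constant $c>0$, say $c=\tfrac14$). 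On $\mathcal{B}^c$, monotonicity gives $A_t\ge A_{n+1}\ge cn$ for every $t$, so the bracketed sum is at most $1/c$; on $\mathcal{B}$ I use the crude bound $n$ (each summand is $\le 1$ because $A_t\ge 1$ whenever agent $1$ is available). Together these yield $\E[Z_1]\le \tfrac1{cn}+\Pr[\mathcal{B}]$, which is $o(1)$ provided $\Pr[\mathcal{B}]=o(1)$.

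The core estimate, and the main obstacle, is therefore the concentration bound $A_{n+1}\ge cn$ with $\Pr[\mathcal{B}]=o(1)$. This is a balls-into-bins/coupon-collector argument: an offline agent $k\ge 2$ is left unmatched whenever online type $k$ never arrives over the $n$ rounds \emph{and} it is not grabbed by a type-$1$ arrival. The number of types in $\{2,\dots,n\}$ that never arrive has expectation $\approx (n-1)/\sfe$ and concentrates around $0.37\,n$ by a bounded-differences (Azuma/McDiarmid) argument, since altering a single arrival changes this count by at most one; and the number of agents that type $1$ can grab is bounded by the number $N_1\sim\mathrm{Bin}(n,1/n)$ of type-$1$ arrivals, which is $O(\log n)$ with overwhelming probability. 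Subtracting gives $A_{n+1}\ge 0.37\,n-O(\log n)\ge cn$ off an event of probability $o(1)$ (in fact $e^{-\Omega(n)}$). The delicate point is precisely this tail: one must rule out the low-probability scenario in which so many offline agents get matched that $A_t$ drops to a sublinear level, which would make the uniform draw likely to hit the bottleneck agent, and the tail must be small enough ($o(1)$ suffices) that the crude contribution of $\mathcal{B}$ is negligible. Combining the two steps gives $\E[Z_1]\to 0$ and hence a competitive ratio of $0$ for \gre on this family of instances.
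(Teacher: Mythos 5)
Your proposal is correct and follows essentially the same route as the paper: both reduce to the observation that agent $1$ can only be matched via a uniform tiebreak over the pool of available agents, and both lower-bound that pool by (the number of online types in $\{2,\dots,n\}$ that never arrive) minus (the number of type-$1$ arrivals), controlled by Chernoff-type concentration, against the benchmark $\OPT\ge 1-1/\sfe$. Your use of the monotonicity of $A_t$ to reduce everything to a single terminal concentration event is a mild simplification of the paper's per-round bound $K_t\ge\frac12(n-1)\sfe^{-t/T-o(t/T)}$ and the subsequent integral evaluation, but the underlying argument is the same.
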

 
 \begin{proof}
Let $K_t$ be number of unmatched offline agents excluding $i=1$ at time $t$. According to \gre, when $j=1$ arrives at $t$, it will match $i=1$ with a probability $1/(K_t+1)$ if $i=1$ is not matched then. For each $i\neq 1$, let $Z_{i,t}=1$ indicate that $i$ is matched by time $t$. Thus, we have $K_t=(n-1)-\sum_{i=2}^{n} Z_{i,t}$. For each $j \in J$, let $A_{j,t}$ be the number of arrivals of agent $j$ by $t$, and let  $\bo_{j,t}=\min(1, A_{j,t})$ indicate if $j$ arrives at least once by $t$. Thus, $\sum_{i=2}^n Z_{i,t} \le \sum_{j=2}^n \bo_{j,t}+A_{1,t}$. Observe that $A_{1,t}$ can be viewed as the sum of $t$ \iid Bernoulli random variables each with $1/T$. By Chernoff bound, we have  $\Pr[A_{1,t} \ge (n-1)/(4\sfe)]\le \sfe^{-\Omega(T^3/t)}$ (note that $n=T$).
\begin{align}
&\Pr\bb{K_t \le \frac{1}{2}\cdot (n-1) \cdot \sfe^{-t/T-o(t/T)}}=\Pr\bb{(n-1)-\sum_{i=2}^{n} Z_{i,t}\le  \frac{1}{2}\cdot (n-1) \cdot \sfe^{-t/T-o(t/T)}}\\
&=\Pr\bb{\sum_{i=2}^{n} Z_{i,t}\ge n-1-  \frac{1}{2}\cdot (n-1) \cdot \sfe^{-t/T-o(t/T)}}\\
&\le
\Pr\bb{\sum_{j=2}^n \bo_{j,t}+A_{1,t}\ge n-1-  \frac{1}{2}\cdot (n-1) \cdot \sfe^{-t/T-o(t/T)}}\\
&\le \sfe^{-\Omega(T^3/t)}+\Pr\bb{\sum_{j=2}^n \bo_{j,t}+(n-1)/(4\sfe) \ge n-1-  \frac{1}{2}\cdot (n-1) \cdot \sfe^{-t/T-o(t/T)}} \\
&= \sfe^{-\Omega(T^3/t)}+\Pr\bb{\sum_{j=2}^n \bo_{j,t}-\E\big[\sum_{j=2}^n\bo_{j,t}\big] \ge \frac{n-1}{2} \cdot \sfe^{-t/T-o(t/T)}-\frac{n-1}{4\sfe}}\label{ineq:rg-a}\\
& \le  \sfe^{-\Omega(T^3/t)}+\Pr\bb{\sum_{j=2}^n \bo_{j,t}-\E\big[\sum_{j=2}^n\bo_{j,t}\big] \ge \frac{n-1}{4} \cdot \sfe^{-t/T-o(t/T)}}\label{ineq:rg-b}\\
&\le  \sfe^{-\Omega(T^3/t)}+ \exp\bp{-\frac{n-1}{8} \cdot \sfe^{-2t/T-o(t/T)}}\label{ineq:rg-c}.
\end{align}
Note that Inequality~\eqref{ineq:rg-a} is due to that $\E\big[\sum_{j=2}^n\bo_{j,t}\big]=(n-1)\cdot (1-1/T)^{t-1}=(n-1)\cdot (1- \sfe^{-t/T-o(t/T)})$; Inequality~\eqref{ineq:rg-b} is due to $\frac{1}{4\sfe} \le \frac{1}{2} \cdot\frac{1}{2} \cdot \sfe^{-t/T-o(t/T)}$; Inequality~\eqref{ineq:rg-c} is due to Chernoff-Hoeffding bound.

Recall that $Z_{i=1}=1$ indicates that $i=1$ is matched by the end of $T$. We have that 
 \begin{align*}
 \E[Z_{i=1}]& \le \frac{1}{T}\sum_{t=1}^{T} \bP{\sfe^{-\Omega(T^3/t)}+\exp\bp{-\frac{n-1}{8} \cdot \sfe^{-2t/T-o(t/T)}} + \frac{1}{\frac{1}{2}\cdot (n-1) \cdot\sfe^{-t/T-o(t/T)}+1}}\\
&=O(\sfe^{-\Omega(T^2)})+\int_0^1 d\zeta \exp\bp{-\frac{n-1}{8} \cdot \sfe^{-2 \zeta}}+\int_0^1 d\zeta \frac{1}{\frac{1}{2}\cdot (n-1)\cdot\sfe^{-\zeta}+1} \\
&\le O(\sfe^{-\Omega(T^2)})+ \exp\bp{-\frac{n-1}{8 \sfe^2}}+\frac{2(\sfe-1)}{n-1}.
 \end{align*}
This is in contrast with that $\OPT \ge 1-1/\sfe$. Thus, we claim that \gre achieves a ratio of zero.
\end{proof}

\begin{lemma}\label{lem:ranking}
\rank has a competive ratio of zero for \ifm on Example~\ref{exam:worst}.
\end{lemma}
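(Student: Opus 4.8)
The plan is to focus on the single offline agent $i=1$ and show that $\E[Z_{i=1}]=o(1)$. Since the \ifm objective attained by \rank is $\min_i \E[Z_i]\le \E[Z_{i=1}]$ while $\OPT\ge 1-1/\sfe$ by Example~\ref{exam:worst}, this forces the competitive ratio down to $0$. The structural starting point is that, in this instance, $i=1$ has $j=1$ as its \emph{unique} online neighbor (every other $j$ is incident only to $i=j$), so $i=1$ can be matched \emph{solely} by an arrival of $j=1$. I would then condition on the algorithm's random permutation $\pi$ only through the rank $r=\pi(i{=}1)$, which is uniform on $\{1,\dots,n\}$, and estimate $\Pr[i{=}1\text{ matched}\mid r]$.

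The key step is a clean necessary condition for $i=1$ to be matched. Call an offline agent $i'\in\{2,\dots,n\}$ an \emph{orphan} if its private online neighbor $j=i'$ never arrives during the $T$ rounds; an orphan can be matched only by $j=1$. Because \rank always assigns an arriving agent to its available neighbor of lowest rank, any orphan ranked before $i=1$ remains available until $j=1$ consumes it, so each such orphan must be matched by a \emph{distinct} arrival of $j=1$ strictly before $j=1$ can ever reach $i=1$. Writing $A_{1,T}$ for the number of arrivals of $j=1$ and $O_{<r}$ for the set of orphans ranked before $i=1$, I would conclude the implication: if $i=1$ is matched, then $A_{1,T}\ge |O_{<r}|+1$.

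It then remains to show this event is rare. Orphan statuses depend only on the arrival sequence and are thus independent of $\pi$; each agent in $\{2,\dots,n\}$ is an orphan with probability $(1-1/T)^{T}\to 1/\sfe$, and by negative association these indicators obey Chernoff-type bounds, so $|O_{<r}|\ge (r-1)/(2\sfe)$ with probability $1-\sfe^{-\Omega(r)}$. Meanwhile $A_{1,T}\sim \mathrm{Bin}(T,1/T)$ has a Poisson tail $\Pr[A_{1,T}\ge k]\le(\sfe/k)^{k}$. Splitting at a threshold such as $r_0=\sqrt{n}$: for $r<r_0$ I use $\Pr[r<r_0]=r_0/n=o(1)$; for $r\ge r_0$ I bound $\Pr[i{=}1\text{ matched}\mid r]\le \Pr[A_{1,T}\ge (r-1)/(2\sfe)]+\Pr[|O_{<r}|<(r-1)/(2\sfe)]$, both of which are $o(1)$ uniformly once $r\ge\sqrt{n}$. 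Averaging over $r$ yields $\E[Z_{i=1}]=o(1)$, and hence a competitive ratio of $o(1)/(1-1/\sfe)\to 0$.

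The main obstacle I anticipate is making the structural claim airtight: one must verify that \rank genuinely clears \emph{every} orphan ranked before $i=1$ before touching $i=1$, while correctly handling non-orphans ranked before $i=1$ (which may be matched either by their own arrivals or by $j=1$; this only increases the number of $j=1$-arrivals needed to reach $i=1$, so it strengthens, rather than weakens, the bound, but it must be stated carefully). The secondary technical point is justifying the concentration of $|O_{<r}|$ through negative association rather than full independence, and confirming that the super-polynomial decay of the Poisson tail of $A_{1,T}$ dominates the union over all ranks $r\ge\sqrt{n}$.
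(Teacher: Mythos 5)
Your proof is correct, and it rests on the same two pillars as the paper's: the rank of $i=1$ under $\pi$ is uniform, and \rank cannot touch $i=1$ until every offline agent ranked before it has been matched, which forces too many arrivals of $j=1$. Where you differ is in how you certify that this is unlikely. The paper works round by round: it bounds $\Pr[i=1 \text{ matched}] \le \frac{1}{T}\sum_{t}\Pr[\text{all of }\cS\text{ matched by }t]$, upper-bounds the number of matched agents in $\cS$ by $\sum_{j\in\cS}\min(A_{j,t},1)+A_{1,t}$, and applies negative association plus Chernoff--Hoeffding at each $t$ before integrating. You instead extract a single terminal necessary condition: if $i=1$ is matched then $A_{1,T}\ge |O_{<r}|+1$, where $O_{<r}$ is the set of \emph{orphans} (agents whose private online neighbor never arrives) ranked before $i=1$. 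This is essentially the paper's event specialized to $t=T$, but isolating the orphans makes the obstruction transparent --- about a $1/\sfe$ fraction of the $r-1$ predecessors can \emph{only} be served by $j=1$, while $j=1$ arrives $\mathrm{Bin}(T,1/T)\approx\mathrm{Poisson}(1)$ times --- and it spares you the per-round decomposition and the integral over $t$. The union bound you use to decouple $A_{1,T}$ from $|O_{<r}|$ is the right move since the two are not independent, and the negative-association justification for concentrating $|O_{<r}|$ is the same device the paper invokes (via \cite{joag1983negative}) for $\{\min(A_{j,t},1)\}$. The only cost of your route is quantitative: the split at $r_0=\sqrt{n}$ yields $\E[Z_{1}]=O(1/\sqrt{n})$ versus the paper's $O(1/n)$, but both are $o(1)$ and either suffices to drive the ratio to zero against $\OPT\ge 1-1/\sfe$. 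Do make the structural claim airtight as you flag: state explicitly that at the moment $i=1$ is matched all lower-ranked agents are already matched, that each orphan among them consumes a distinct arrival of $j=1$, and that non-orphans consuming arrivals of $j=1$ only strengthens the inequality.
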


\begin{proof}
Let $\cS$ be the (random) set of indices of offline nodes that fall before $i=1$ under $\pi$. Consider a given $\cS$ with $|\cS|=K$. For each $j \in J$, let $A_{j,t}$ be the number of arrivals of online agent $j \in J$ before the start of time $t \in [T]$. For each $i \in \cS$,  let $Z_{i,t}=1$ indicate that $i \in \cS$ is matched by $t$. Observe that 
\[
\Pr\bb{\sum_{i \in \cS} Z_{i,t} \ge K} \le \Pr\bb{\sum_{j \in \cS} \min(A_{j,t},1)+ A_{1,t} \ge K}.
\]

Observe that $A_{1,t}$ can be viewed as the sum of $t$ \iid Bernoulli random variables each with $1/T$. By Chernoff bound, $\Pr[A_{1,t} \ge K/\sfe^2] \le \sfe^{-\Omega(K^2 \cdot T/t)}$. Thus, we have that
\begin{align}
\Pr\bb{\sum_{i \in \cS} Z_{i,t} \ge K} &\le \Pr\bb{\sum_{j \in \cS} \min(A_{j,t},1)+ A_{1,t} \ge K} \\
&\le \exp\bp{-\Omega(K^2 \cdot T/t)}+\Pr\bb{\sum_{j \in \cS} \min(A_{j,t},1)+K/\sfe^2 \ge K} \label{ineq:rank-1}.
\end{align}
Observe that (1) $\{\min(A_{j,t},1)\}$ are negatively associated due to \cite{joag1983negative}; (2) 
$\E[\min(A_{j,t},1)]=1-(1-1/T)^t=1-\sfe^{-t/T-o(t/T)}$ for each $j \in \cS$. By applying Chernoff-Hoeffding bound, we have
\begin{align}
&\Pr\bb{\sum_{j \in \cS} \min(A_{j,t},1)+K/\sfe^2 \ge K}\\
&=\Pr\bb{\sum_{j \in \cS} \min(A_{j,t},1)-\E\big[\sum_{j \in \cS} \min(A_{j,t},1)\big] \ge K-K/\sfe^2-
K \cdot \bp{1-\sfe^{-t/T-o(t/T)}}}\\
&= \Pr\bb{\sum_{j \in \cS} \min(A_{j,t},1)-\E\big[\sum_{j \in \cS} \min(A_{j,t},1)\big] \ge K\cdot \sfe^{-t/T-o(t/T)}-K/\sfe^2} \\
&\le \Pr\bb{\sum_{j \in \cS} \min(A_{j,t},1)-\E\big[\sum_{j \in \cS} \min(A_{j,t},1)\big] \ge K\cdot \sfe^{-t/T-o(t/T)}/2}\\
&\le \exp\bp{-\sfe^{-2t/T-o(t/T)}\cdot K/2}.
\end{align}
Thus, plugging into the above result to Inequality~\eqref{ineq:rank-1}, we have
\[
\Pr[\sum_{i \in \cS} Z_{i,t} \ge K] \le  \exp\bp{-\Omega(K^2 \cdot T/t)}+ \exp\bp{-\sfe^{-2t/T-o(t/T)}\cdot K/2}.
\]
Consider a given $\cS$ with $|\cS|=K$. We see that
\begin{align*}
\E[Z_1|K]& \le \frac{1}{T}\sum_{t=1}^T \Pr\bb{\sum_{i \in \cS} Z_{i,t} \ge K}\\
& \le \frac{1}{T}\sum_{t=1}^T \bB{ \exp\bp{-\Omega(K^2 \cdot T/t)}+ \exp\bp{-\sfe^{-2t/T-o(t/T)}\cdot K/2}}\\
&=\sfe^{-\Omega(K^2)}+\int_0^1 \sfe^{-(K/2) \cdot \sfe^{-2 \zeta}} d\zeta \le \sfe^{-\Omega(K^2)}+ \sfe^{-  K / (2\sfe^2)}. 
\end{align*}
Observe that $K$ takes values $0,1,2,\ldots, n-1$ with a uniform probability $1/n$. Thus,
\[
\E[Z_1]\le \frac{1}{n} \sum_{K=0}^{n-1} \bp{\sfe^{-\Omega(K^2)}+ \sfe^{-K / (2\sfe^2)}}=O(1/n).
\]
This is in contrast with that $\OPT \ge 1-1/\sfe$. Thus, we claim that \rank achieves a ratio of zero.
\end{proof}

\subsection{Proof of Theorem~\ref{thm:main-4}}

Consider any bipartite graph $(I,J,E)$ and let $T$ denote the length of the time horizon.
Let $\Psi$ denote the \textit{finite} set of all \textit{deterministic} online matching policies given the graph and $T$.
For any $\psi\in\Psi$ and offline node $i\in I$, let $q_{i,\psi}$ denote the probability (over the random arrival draws) that $i$ gets matched under algorithm $\psi$ by the end of the time horizon.

Let $n=|I|$ and fix a feasible solution $x\in[0,1]^n$ to the LP~\eqref{cons:j}--\eqref{cons:e} for the graph.
Consider the following LP:
\begin{align*}
\max\ \gamma
\\ \text{s.t. }\sum_{\psi\in\Psi}q_{i,\psi}z_{\psi} &\ge x_i\gamma &\forall i=1,\ldots,n
\\ \sum_{\psi\in\Psi}z_{\psi} &=1
\\ z_{\psi} &\ge0 &\forall\psi\in\Psi
\end{align*}
where variable $z_{\phi}$ represents the probability that a randomized algorithm for \ifm or \gfm selects deterministic policy $\psi$.
Objective $\gamma$ is set to the maximum value for which the randomized online algorithm can uniformly guarantee a matching probability of $x_i\gamma$ for every offline agent $i$.
Taking the dual of this LP, we get:
\begin{align}
\min\ \theta \nonumber
\\ \text{s.t. }\sum_{i=1}^nq_{i,\psi}w_i &\le\theta &\forall\psi\in\Psi \label{eqn:thetaEqn}
\\ \sum_{i=1}^nx_iw_i &=1 \nonumber
\\ w_i &\ge0 &\forall i=1,\ldots,n \nonumber
\end{align}

Since $\Psi$ is finite, by strong LP duality, whenever there exists a $x\in[0,1]^n$ such that the optimal objective value of the primal LP is $\gamma=c$, there exist feasible weights $w_i\ge0$ such that in the dual LP,~\eqref{eqn:thetaEqn} holds with $\theta=c$.
That is, the LP for \vw has a feasible solution $x$ with objective value $\sum_ix_iw_i=1$, yet any deterministic online policy $\psi$ cannot earn than $c$ (by~\eqref{eqn:thetaEqn}).
Since deterministic online policies are optimal in \vw with known weights, this shows that the competitive ratio for \vw cannot be better than $c$.

To complete the proof, consider any upper bound $c$ on the competitive ratio of \gfm.
There must exist a $x\in[0,1]^n$ such that the primal objective value is no greater than $c$ (otherwise it would be possible to achieve a competitive ratio of $c$ in \gfm).
By the preceding argument, such an $x$ would also upper-bound the competitive ratio of \vw by $c$.
Taking an infimum over all valid upper bounds $c$, we complete the proof that $\Phi(\vw)\le\Phi(\gfm)$.  In turn, $\Phi(\gfm)\le\Phi(\ifm)$ since $\ifm$ is a special case of $\gfm$.
Note that the converse inequality of $\Phi(\vw)\ge\Phi(\gfm)$ cannot be argued in the same way since the groups may overlap arbitrarily; the converse inequality of $\Phi(\vw)\ge\Phi(\ifm)$ also cannot be argued since the $x\in[0,1]^n$ for \vw may not satisfy $x_1=\cdots=x_n$.

\section{Experimental results}\label{sec:exp}

\subsection{Experiments on \ifm and \gfm}
\xhdr{Preprocessing of a Ride-hailing dataset}.
We test our algorithms of \ifm and \gfm on a public ride-hailing dataset, which is collected  from the city of Chicago\footnote{https://data.cityofchicago.org/Transportation/Transportation-Network-Providers-Trips/m6dm-c72p}. Following the setting in~\cite{nanda2020balancing,xu2020trade}, we focus on a short time window and assume that drivers are offline agents  while riders are online agents that arrive dynamically. Our goal is to maximize individual and group fairness among all drivers. The dataset has more than $169$ million trips starting from November 2018. Each trip record includes the trip length, the starting and ending time, the pick-up and drop-off locations for the passenger, and some other information such as the fare and the tip. Note that Chicago is made up of $77$ community areas that are well defined and do not overlap. Thus, we can categorize all trips according to the pre-defined community areas. In our case, we define a driver group for each of the $77$ areas and assume each driver belongs to the group identified as her starting community area. Recall that our metric of group fairness is defined as the minimum matching rates of offline agents over all groups, which affects the minimum average-earning-rate among ride-hailing drivers across different community areas in Chicago. Since locations reflect the racial- and socioeconomic-disparities in Chicago\footnote{https://statisticalatlas.com/place/Illinois/Chicago/Race-and-Ethnicity}, we believe our objective of maximizing group fairness among drivers across different locations can help promote racial and social equity as well.

We construct the input bipartite graph as follows. We focus on the time window from $18:00$ to $19:00$ on September 29, 2020, and subsample $T$ trips from a total of $11,228$ trips. For each trip, we create an individual driver $i$ and rider $j$, where $i$ has an attribute of a starting community area while $j$ has an attribute of a pair of starting and ending areas. In this way, we have $|I|=|J|=T$. Note that it is possible for multiple drivers to share the same starting community area. In this case, we assume they belong to the same group identified by the starting area. For each driver-rider pair, we add an edge if they share the same starting area. 

\xhdr{Algorithms}. For the problems \ifm and \gfm, we compare our algorithm \sbo against the following: (a) \textbf{\alggreedy}: Assign each arriving agent to an available neighbor (\ifm) or an available neighbor whose group has the lowest matching rate at the time of arrival (\gfm); break ties uniformly at random. (b) \textbf{\algranking}: Fix a uniform random permutation of $I$ at the start; assign each arriving agent to the adjacent available offline agent who is earliest in this order. (c) \algbru: the algorithm from \cite{brubach2017} but customized to our setting by replacing its benchmark-LP objectives with  $\LP~\eqref{obj-1}$ (\ifm) and  $\LP~\eqref{obj-2}$ (\gfm). (d) \algmgs: similar to our boosting algorithm, but following \cite{bib:Manshadi}, generating two random candidate neighbors upon the arrival of every online agent instead, and matching it with the first available one. Note that here we do not compare against the algorithm in~\cite{bib:Jaillet}, which relies on the special structure of the LP solution that all $x_e \in \{0,1/3,2/3\}$. This structure unfortunately no longer holds when the objective is either $\LP~\eqref{obj-1}$ (\ifm) or  $\LP~\eqref{obj-2}$ (\gfm).

\begin{figure}[t!]
  \centering
  \begin{subfigure}[b]{0.43\linewidth}
    \includegraphics[width=\linewidth]{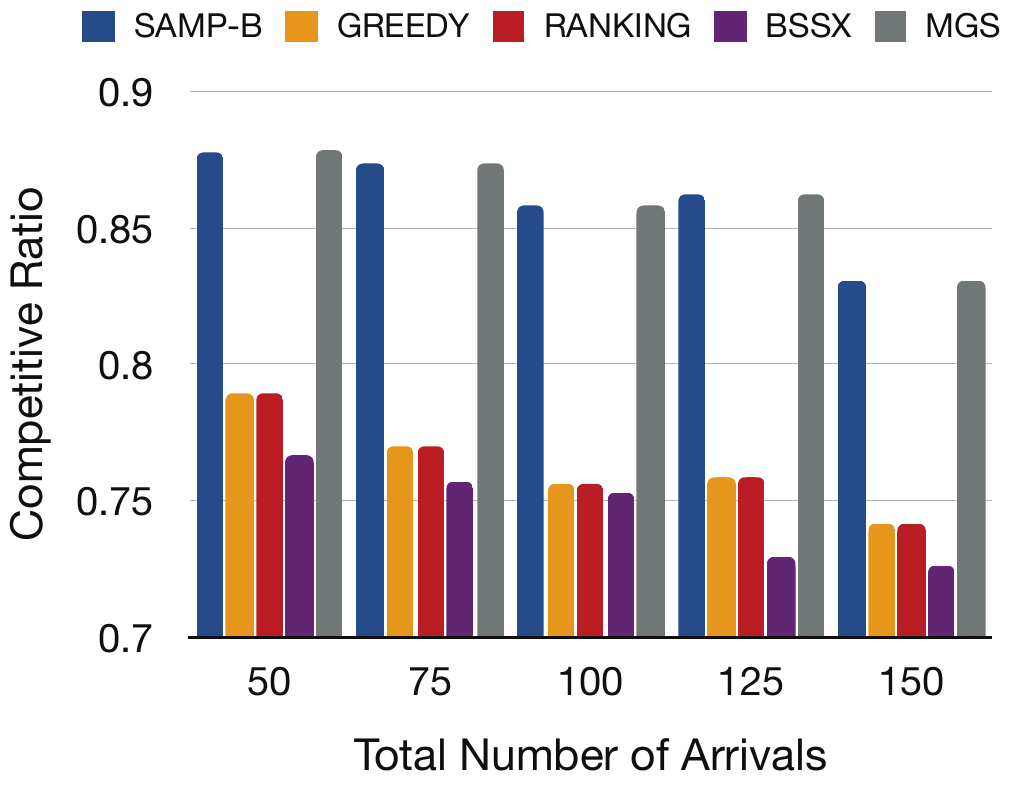}
    \caption{Experimental results on \ifm.}
    \label{fig:ifm}
  \end{subfigure}
  \begin{subfigure}[b]{0.43\linewidth}
    \includegraphics[width=\linewidth]{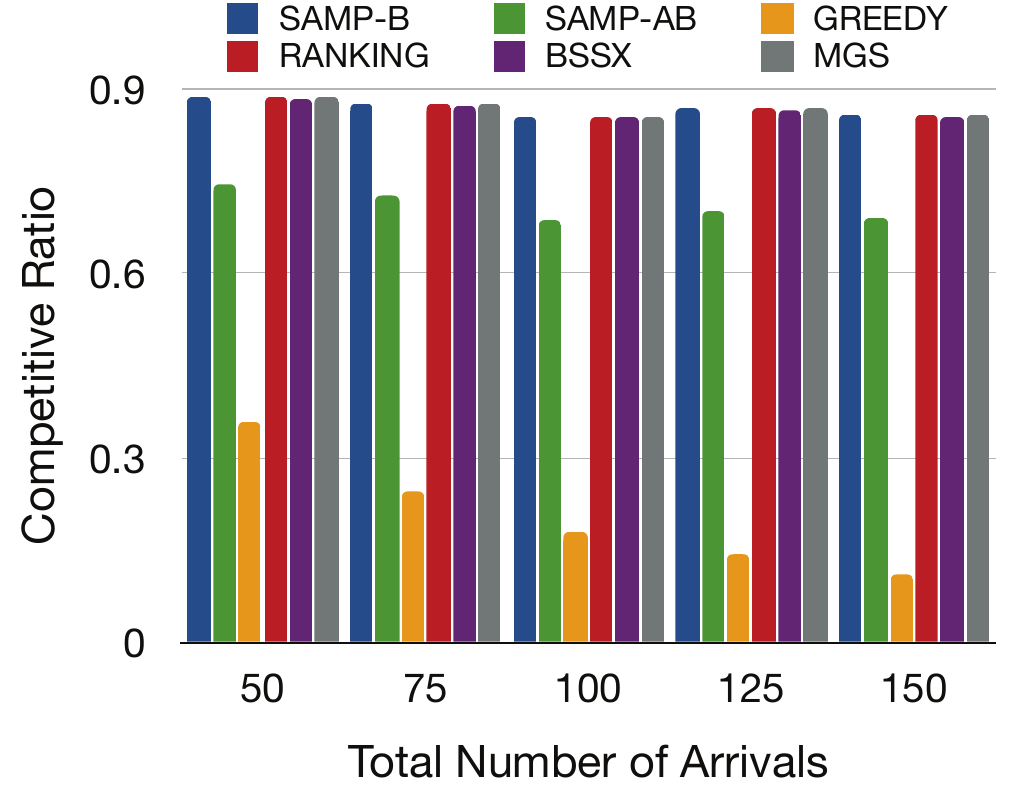}
    \caption{Experimental results on \gfm.}
    \label{fig:gfm}
  \end{subfigure}
  \caption{Experimental results of \ifm and \gfm on a real ride-hailing dataset in Chicago: The total number of arrivals $T$ takes values from $\{50, 75, 100, 125, 150\}$ with $|I| = |J| = T$.}
  \label{fig:result_fm}
\end{figure}

\xhdr{Results and discussions}.
For the real dataset, we vary the number of subsampled trips $T$ in $\{50, 75, 100, 125,\\ 150\}$.
We first construct $100$ subsampled instances for each given $T$, and then run $100$ trials on each instance, reporting the average performance. Note that in the offline phase of \sab,  when it comes to  estimation of the attenuation factor $\beta_{i,t}$ for $i$ at $t$, we  apply the Monte-Carlo method by simulating $100$ times and then taking the average.

Figure~\ref{fig:ifm} shows that for \ifm,  \sbo performs as well as~\algmgs, and both have a significant advantage over \alggreedy, \algranking, and \algbru. The competitive ratios of \sbo always stay above $0.722$, which is consistent with our theoretical bound in Theorem~\ref{thm:main-1}. Figure~\ref{fig:gfm} shows that for \gfm, \sbo performs as good as  \algranking, \algbru and \algmgs, and only $\sab$ and $\alggreedy$ fall behind. That being said, unlike \alggreedy, \sab achieves a steady ratio well above $0.7$ over different choices of $T$. This is consistent with results in~Theorem~\ref{thm:main-2}. All results here suggest that \sbo and \algmgs are top two candidates in practice for both \ifm and \gfm.

We emphasize that although our \sbo algorithm does not significantly outperform the (fairness-adapted) \algmgs algorithm from the literature, it is both conceptually and implementation-wise much simpler. To our understanding, it is surprising that such a simple adaptive boosting algorithm has not been analyzed and extensively tested before.

\subsection{Experiments on \vw}
\xhdr{Construction of input instances}. We acknowledge that it is hard to identify real applications that can perfectly fit the model of \vw.~\cite{borodin2020} has conducted comprehensive experimental studies, which compare the performance of different algorithms for unweighted online matching under KIID on a wide variety of synthetic and real datasets. They proposed an idea, called  \emph{random balanced partition method}, to generate a bipartite graph from a practical social network.  The details are as follows. Suppose we have a real social network with $V$ being the set of vertices and $E$ being the set of edges.  The method partitions $V$ uniformly randomly into two blocks $L$ and $R$, such that $|L|=\lfloor |V| / 2 \rfloor$ and $|R| = \lceil |V| / 2 \rceil$. It keeps only those edges that connect two vertices from the two different partitions. As indicated by~\cite{borodin2020}, research on how to form a maximum matching on a bipartite graph constructed from a real social network can offer great insights regarding how to boost friendship ties among users active in online social platforms (\eg Facebook). 

We follow the idea in~\cite{borodin2020} and select four datasets from the Network Data Repository~\cite{DBLP:conf/aaai/RossiA15}, namely, \textbf{socfb-Caltech36}, \textbf{socfb-Reed98}, \textbf{econ-because} and \textbf{econ-mbeaflw}. The former two datasets are Facebook social-network graphs, where vertices are users and edges are friendship ties. The latter two datasets are two economic networks collected from the U.S.A. in 1972, where vertices are commodities/industries and edges are economic transactions. We list detailed statistics of these $4$ datasets in Table~\ref{tab:vom-real-datasets}. For each network graph $(V,E)$, we first downsample the network size $|V|$ to $200$. Since the original graphs are non-bipartite, we first partition all nodes uniformly at random into two blocks to construct $I$ and $J$, such that $|I| = \lfloor |V| / 2 \rfloor$ and $|J| = \lceil |V| / 2 \rceil$. We keep only the edges that connect two vertices from different partitions. We assign the weight for each offline vertex $i$ to be a random value, uniformly selected from $[0,1]$.
\begin{table}[tb!]
\centering
\caption{Network data statistics for graphs from the Network Data Repository~\cite{DBLP:conf/aaai/RossiA15}.}
\begin{tabular}{c|c|c|c|c|c}
\hline
\textbf{}                & \textbf{Nodes} & \textbf{Edges} & \textbf{Max degree} & \textbf{Min degree} & \textbf{Ave. degree} \\ \hline
\textbf{socfb-Caltech36} & 769            & 16,700         & 248                     & 1                       & 43                      \\ \hline
\textbf{socfb-Reed98}    & 962            & 18,800         & 313                     & 1                       & 39                      \\ \hline
\textbf{econ-beause}     & 507            & 44,200         & 766                     & 2                       & 174                     \\ \hline
\textbf{econ-mbeaflw}    & 492            & 49,500         & 679                     & 0                       & 201                     \\ \hline
\end{tabular}

\label{tab:vom-real-datasets}
\end{table}

\xhdr{Algorithms}. Similar to \ifm and \gfm, we compare  \sbo and \sab against several baselines, including $\alggreedy$, $\algranking$, \algbru~\cite{brubach2017}, and \algmgs~\cite{bib:Manshadi}. Additionally, we test the algorithm presented by~\cite{bib:Jaillet}, denoted by~\alglu. For each of the four instances, we run the above $7$ algorithms for $100$ times and take the average as the final performance. 

\xhdr{Results and discussion}.
Figure~\ref{fig:vom} shows that \sbo is second only to \alggreedy and is comparable to \alggreedy in half of the total instances. The gap between \sbo and \alggreedy declines as the average degree of all nodes increases; see \textbf{econ-because} and \textbf{econ-mbeaflw}. For all instances, \sbo outperforms the other three LP-based algorithms, \algbru, \algmgs and \alglu, all of which involve a much more complicated implementation. This establishes the superiority of  \sbo  in practical instances of \vw over the three LP-based baselines. We observe that the competitive ratios of \sab are always above $0.719$, which is consistent with our theoretical bound in Theorem~\ref{thm:main-2}. Also, note that  \sab can beat the rest three \LP-based baselines in almost all scenarios (except for \textbf{socfb-Reed98}), which suggests that $\sab$ is a top candidate among all all LP-based algorithms.

\begin{figure}[t!]
  \centering
  {\includegraphics[width=0.5 \columnwidth]{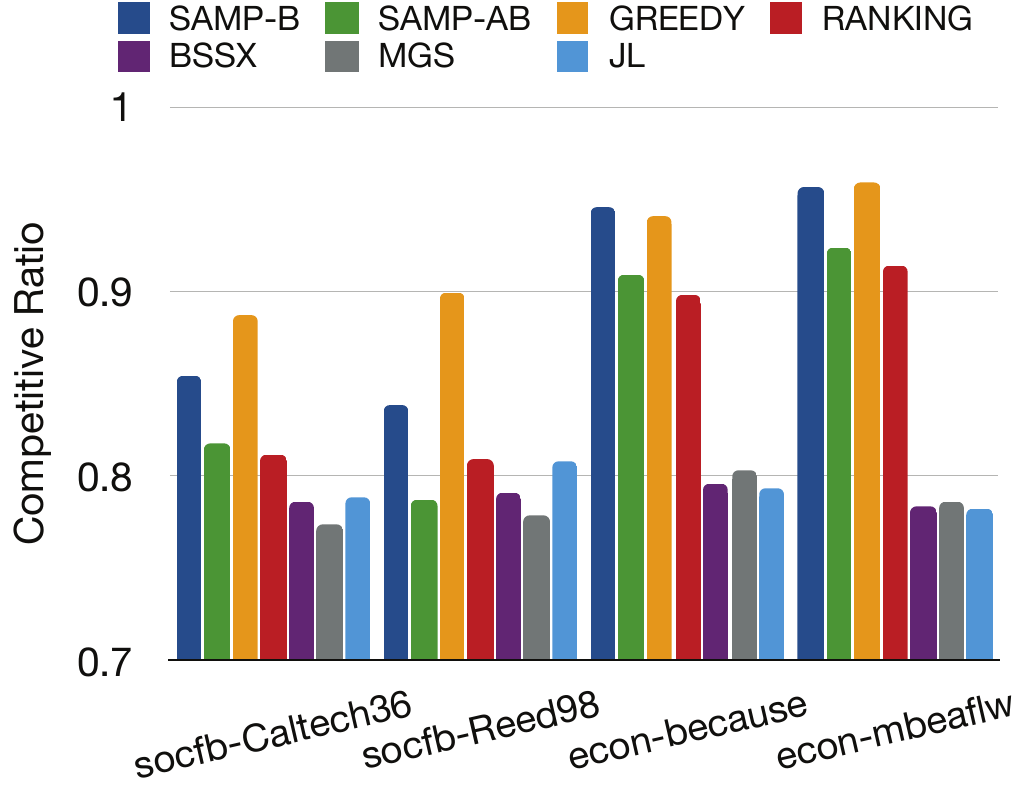}}
     \caption{Experimental results of \vw with on four real datasets from the Network Data Repository~\cite{DBLP:conf/aaai/RossiA15}.}
     \label{fig:vom}
\end{figure}

\subsection{Experiments on \ifm with synthetic datasets}

\xhdr{Preprocessing of a synthetic dataset}.
We first construct the bipartite graph $G=(I,J,E)$, where we always fix $|I| = 100$ and set $|J| = T$. For each offline agent $i$, we randomly choose $\delta$ online agents among $J$ as its neighbors, \ie $|\cN_i|=\delta, \forall i \in I$. In this way, the edge set $E$ is constructed. 
Recall that we consider integral arrival rates for all offline agents, thus, we assume \wl that all $r_j = 1$.

\xhdr{Algorithms}.
We compare our algorithm \sbo against the following:
(a) \textbf{\alggreedy}: Assign each arriving agent to an available neighbor at the time of arrival; break ties uniformly at random. (b) \textbf{\algranking}: Fix a uniformly random permutation of $I$ at the start; assign each arriving agent to the adjacent available offline agent who is earliest in this order. (c) \algbru: the algorithm from \cite{brubach2017} but customized to our setting by replacing its benchmark-LP objectives with $\LP~\eqref{obj-1}$.

\begin{figure}[ht!]
  \centering
  \begin{subfigure}[b]{0.43\linewidth}
    \includegraphics[width=\linewidth]{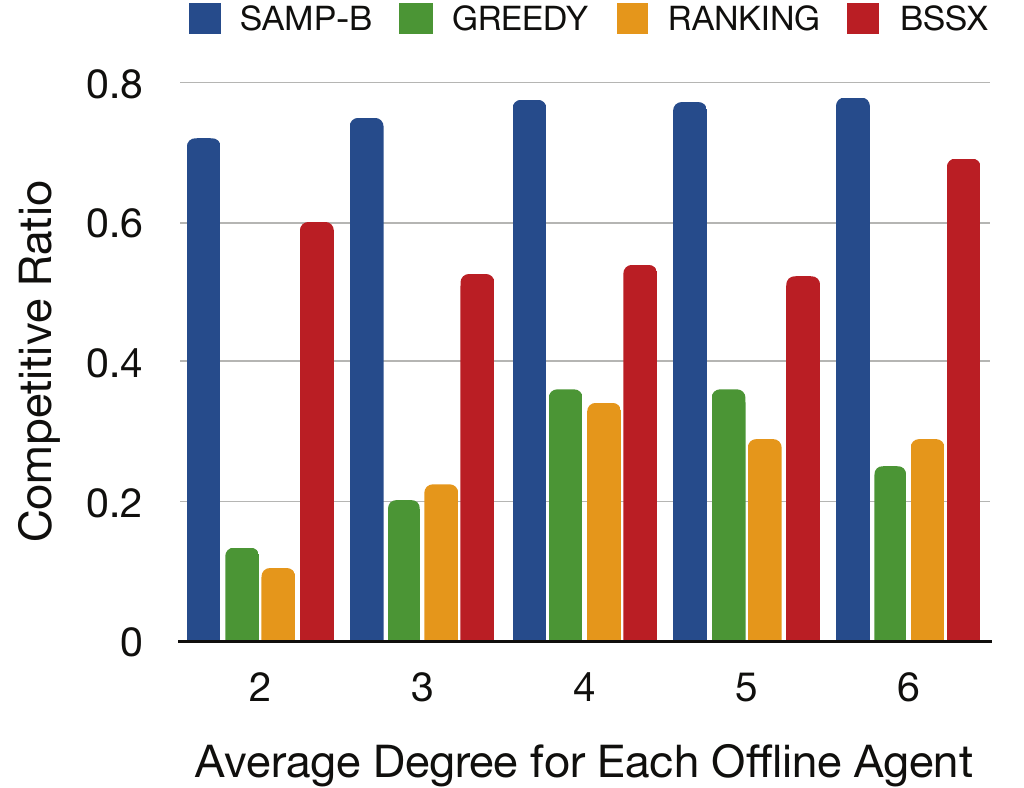}
    \caption{Experimental results when varying average degree $\delta$, while fixing $|I| = 100$ and $|J| = T = 100$.}
    \label{fig:ofm_degree_100}
  \end{subfigure}
  \hspace{5mm}
  \begin{subfigure}[b]{0.43\linewidth}
    \includegraphics[width=\linewidth]{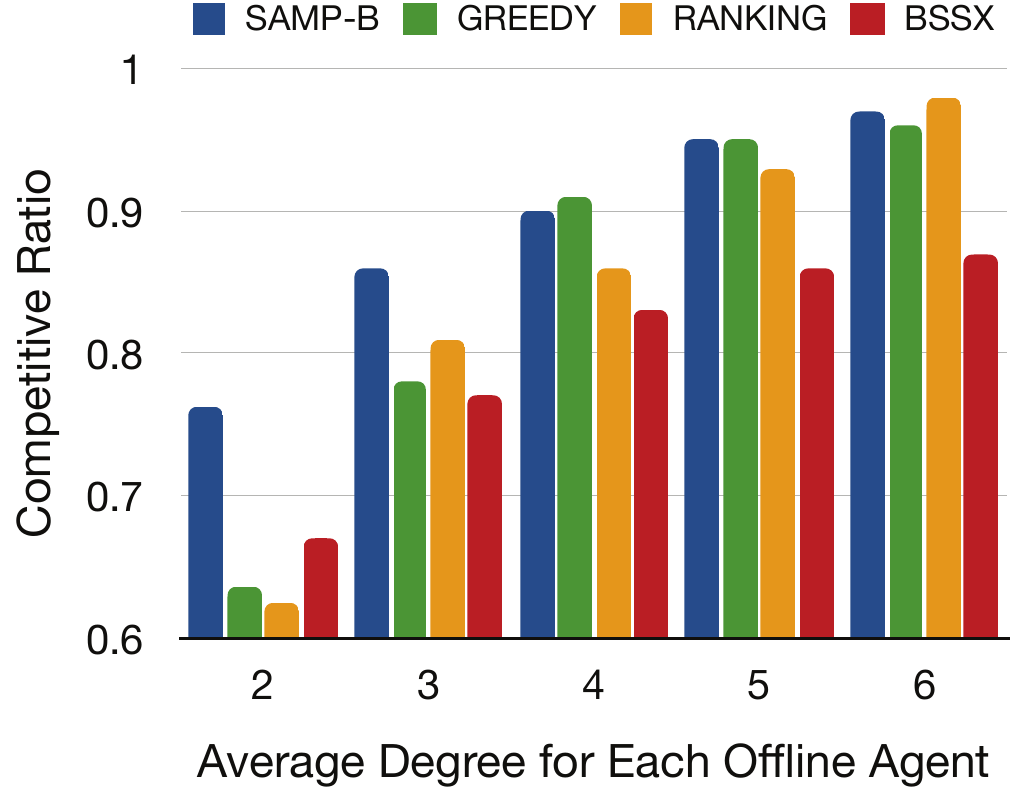}
    \caption{Experimental results when varying average degree $\delta$, while fixing $|I| = 100$ and $|J| = T = 500$.}
    \label{fig:ofm_degree_500}
  \end{subfigure}
  \caption{Experiments on \ifm with synthetic dataset: The average degree $\delta$ takes values from $\{2,3,4,5,6\}$ and the total number of arrivals $T$ takes values from $\{100,500\}$ with $|I| = 100$.}
  \label{fig:ofm_degree}
\end{figure}

\begin{figure}[ht!]
  \centering
  \begin{subfigure}[b]{0.43\linewidth}
    \includegraphics[width=\linewidth]{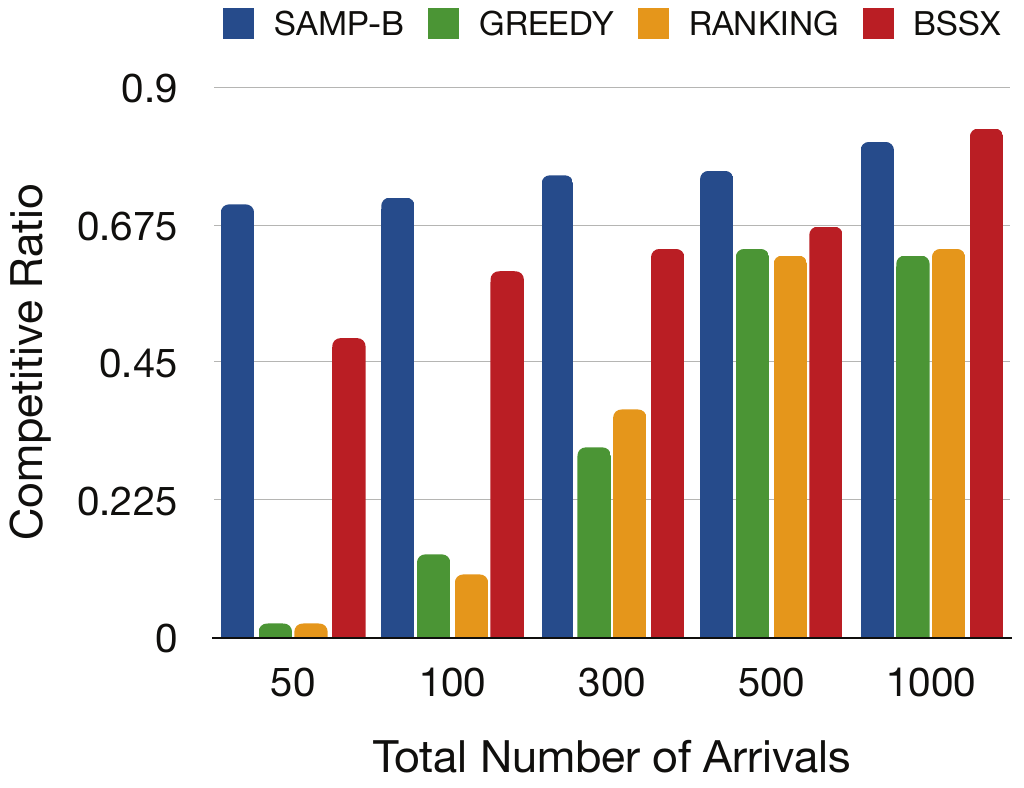}
    \caption{Experimental results when varying $T$, while fixing $|I| = 100$ and $\delta = 2$.}
    \label{fig:ofm_T_d2}
  \end{subfigure}
  \hspace{5mm}
  \begin{subfigure}[b]{0.43\linewidth}
    \includegraphics[width=\linewidth]{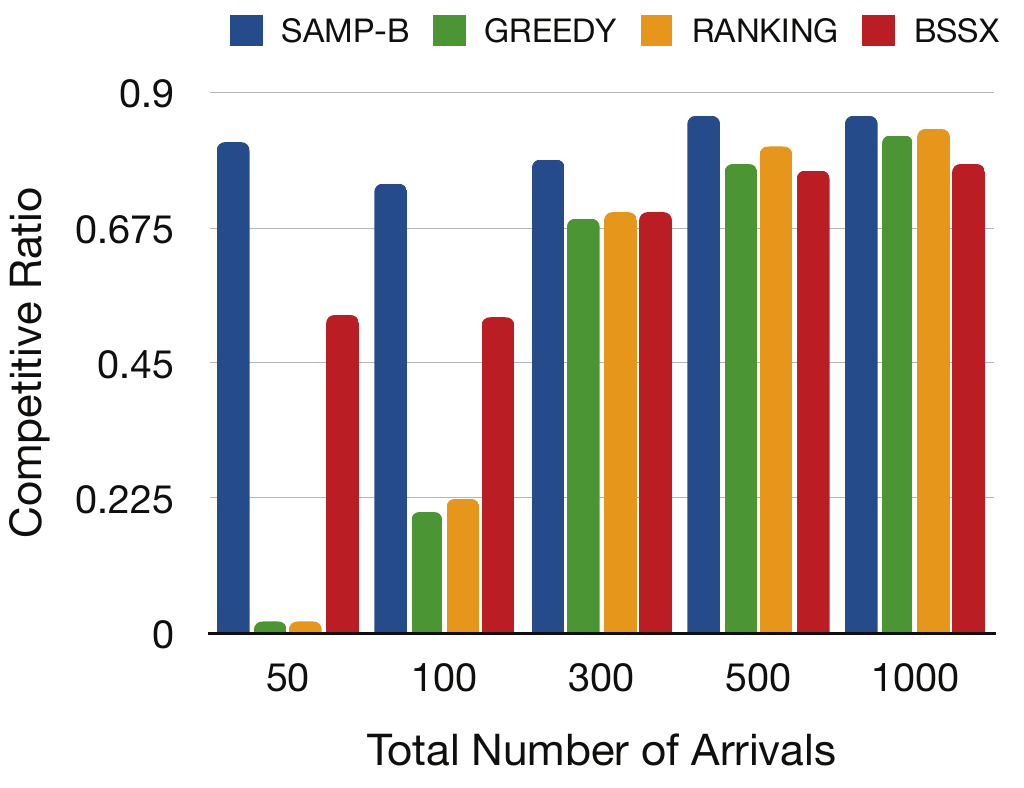}
    \caption{Experimental results when varying $T$, while fixing $|I| = 100$ and $\delta = 3$.}
    \label{fig:ofm_T_d3}
  \end{subfigure}
  \caption{Experiments on \ifm with synthetic dataset: The total number of arrivals $T$ takes values from $\{50, 100, 300, 500, 1000\}$ and the average degree $\delta$ takes values from $\{2,3\}$ with $|I| = 100$.}
  \label{fig:ofm_T}
\end{figure}

\xhdr{Results and discussion}.
For synthetic dataset, we vary the average degree $\delta$ in $\{2,3,4,5,6\}$ and the total number of arrivals $T$ in $\{50,100,300,500,1000\}$, respectively. For each instance, we run all algorithms for $100$ times and take the average as the final performance.

Figure~\ref{fig:ofm_degree} and Figure~\ref{fig:ofm_T} show the results on synthetic dataset. Almost in all cases, \sbo is the clear winner and has the best performance. At times two heuristics do well, as shown in Figure~\ref{fig:ofm_degree_500}. This is due to the fact that the more resources we have (with a larger average degree $\delta$ and total number of arrivals $T$), the less planning we need. However, when we set $|I| = |J| = T$, as shown in Figure~\ref{fig:ofm_degree_100}, \sbo outperforms the two heuristics significantly. In addition, Figure~\ref{fig:ofm_T} shows that when $T$ is small, \sbo is  close to its competitive ratio of $0.722$, as given in Theorem~\ref{thm:main-1}. This highlights the tightness of our theoretical lower bound.

\subsection{Experiments on \vw with synthetic datasets}

\xhdr{Synthetic dataset}. 
Our experimental setup for the synthetic datasets is as follows.
We first construct the bipartite graph $G=(I,J,E)$, where we always set $|I| = |J| = T$. 
For each offline agent $i$, we randomly choose $\delta$ online agents among $J$ as its neighbors, \ie $|\cN_i|=\delta, \forall i \in I$.
Note that $|\cN_j|=\delta, \forall j \in J$ due to $|I| = |J|$. We assign the weight for each offline vertex $i$ to be a random value, uniformly selected from $[0,1]$.

\xhdr{Algorithms}.
Similar to \ifm, we compare \sbo and \sab against several baselines, including $\gre$, $\algranking$, \algbru~\cite{brubach2017}. Additionally, we test the algorithm presented by~\cite{bib:Jaillet}, denoted by~\alglu.
For each instance, we run the above $6$ algorithms for $100$ times and take the average as the final performance. Note that in the offline phase of \sab, we estimate the attenuation factor $\beta_{i,t}$ for $i$ at $t$ by applying Monte-Carlo method for $100$ times. For each algorithm, we compute two kinds of competitive ratio as follows: (1) \cri: $\min_i \mathbb{E}[Z_i]/x_i^*$. (2) \crii: $\sum_i w_i* \mathbb{E}[Z_i]$.

\begin{figure}[ht!]
  \centering
  \begin{subfigure}[b]{0.43\linewidth}
    \includegraphics[width=\linewidth]{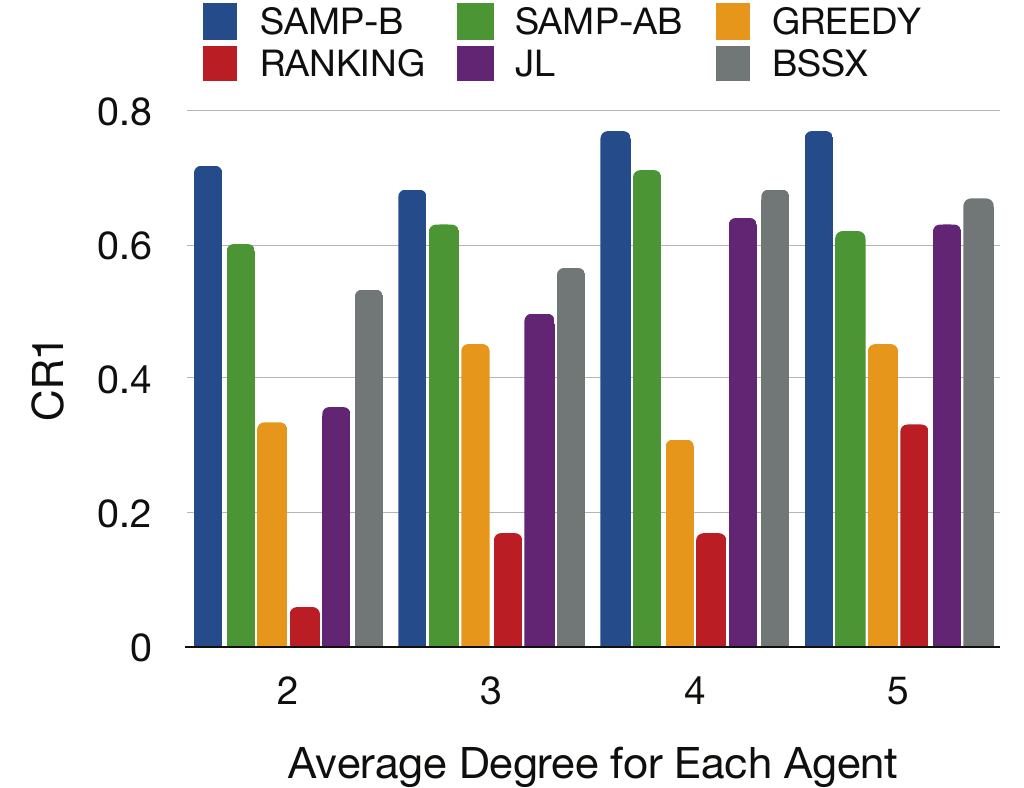}
    \caption{\cri achieved when varying average degree $\delta$.}
    \label{fig:vom_degree_cr1}
  \end{subfigure}
  \hspace{5mm}
  \begin{subfigure}[b]{0.43\linewidth}
    \includegraphics[width=\linewidth]{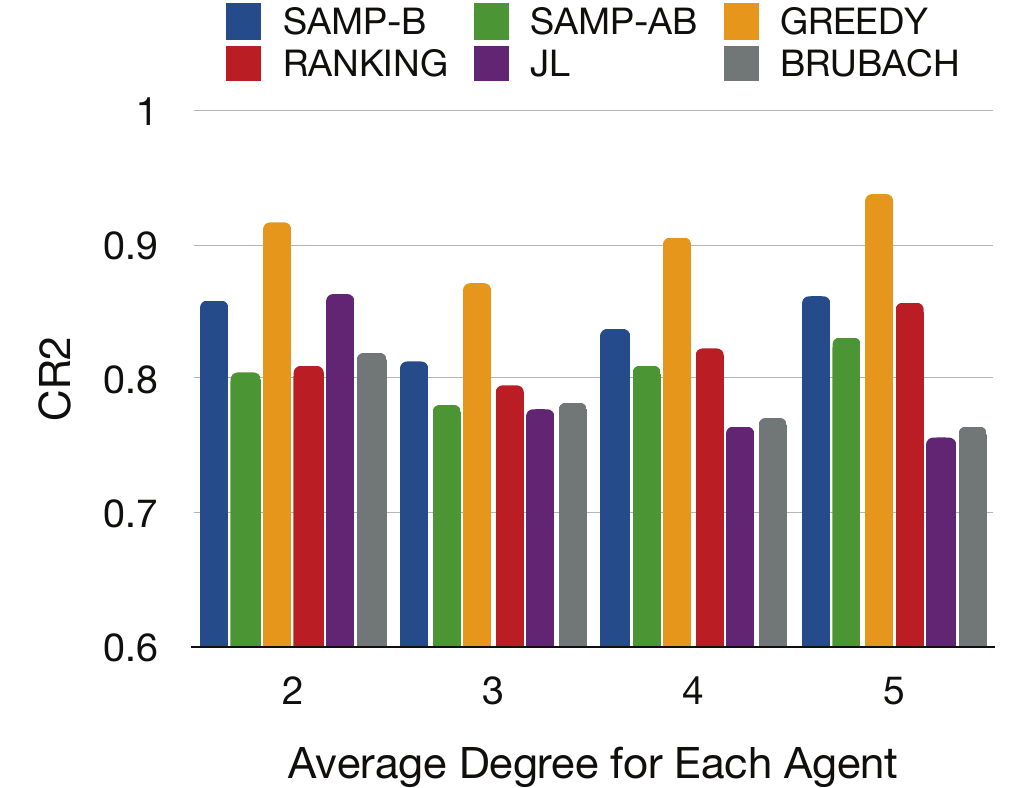}
    \caption{\crii achieved when varying average degree $\delta$.}
    \label{fig:vom_degree_cr2}
  \end{subfigure}
  \caption{Experiments on \vw with synthetic dataset: The average degree $\delta$ takes values from $\{2,\mathbf{3},4,5,6\}$ with $|I| = |J| = T = 100$.}
  \label{fig:vom_degree}
\end{figure}

\begin{figure}[ht!]
  \centering
  \begin{subfigure}[b]{0.43\linewidth}
    \includegraphics[width=\linewidth]{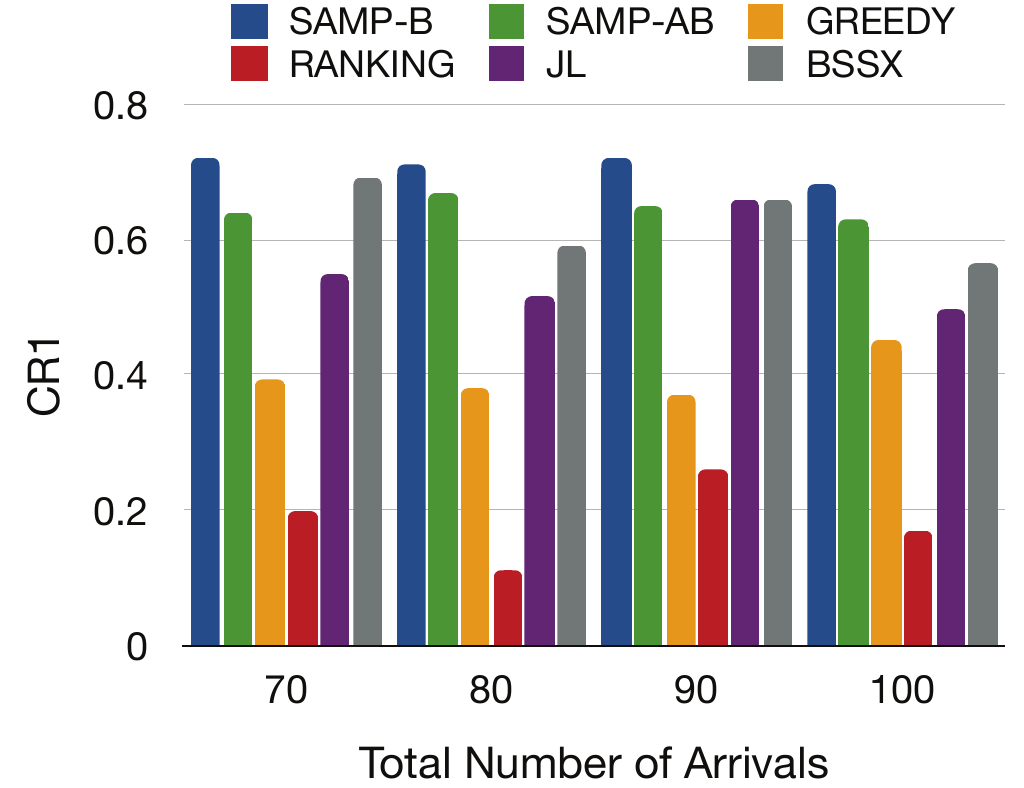}
    \caption{\cri achieved when varying the total number of arrivals $T$.}
    \label{fig:vom_T_cr1}
  \end{subfigure}
  \hspace{5mm}
  \begin{subfigure}[b]{0.43\linewidth}
    \includegraphics[width=\linewidth]{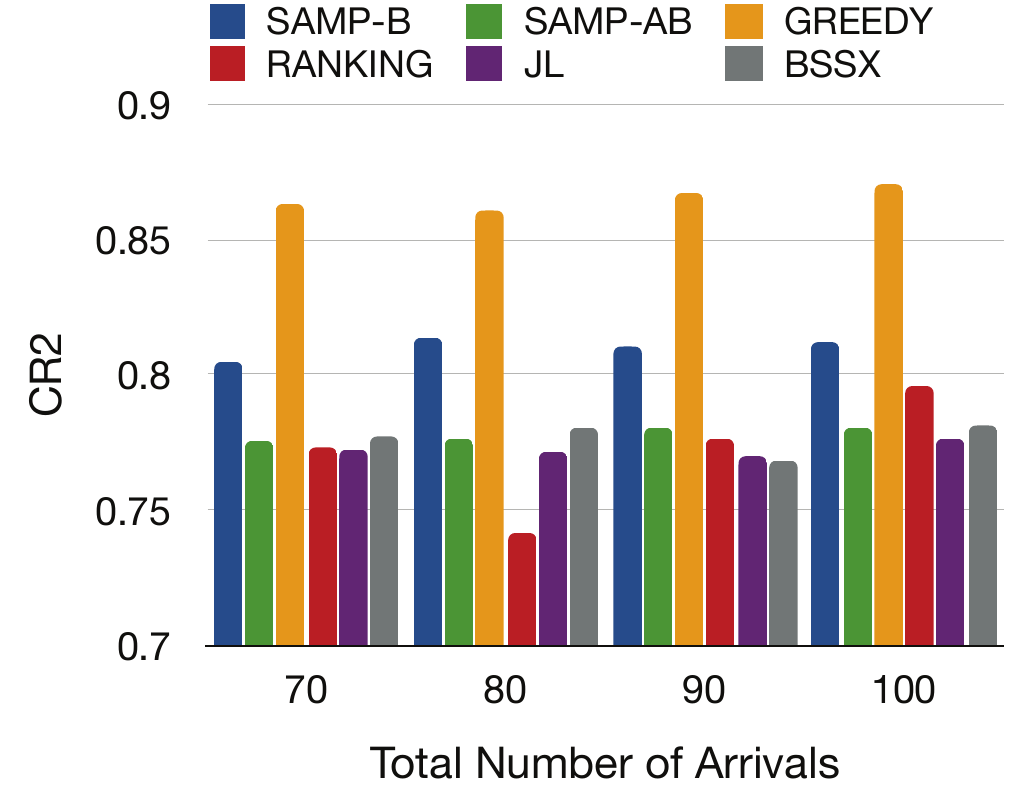}
    \caption{\crii achieved when varying the total number of arrivals $T$.}
    \label{fig:vom_T_cr2}
  \end{subfigure}
  \caption{Experiments on \vw with synthetic dataset: The total number of arrivals $T$ takes values from $\{70,80,90,\mathbf{100}\}$ with the average degree $\delta = 3$.}
  \label{fig:vom_T}
\end{figure}

\xhdr{Results and discussion}.
For synthetic dataset, we run two kinds of experiments by varying the average degree $\delta$ in $\{2,\mathbf{3},4,5,6\}$ and the total number of arrivals $T$ in $\{70,80,90,\mathbf{100}\}$, respectively. Overall, the \cri achieved by each algorithm is strictly dominated by its corresponding \crii. This is expected since \cri takes the minimum expected matching ratio over all $i$ as the final ratio. The higher \cri an algorithm achieves, the more robust the algorithm will be. As shown in Figure~\ref{fig:vom_degree} and \ref{fig:vom_T}, we highlight that \sbo displays remarkable robustness, \ie the best \cri achieved, significantly outperforming the two heuristics, although \sbo has a slightly lower \crii when compared with \alggreedy. Note that in all cases, the \crii achieved by \sbo and \sab is always above $0.719$, which is consistent with our theoretical prediction in Theorem~\ref{thm:main-2}. Another observation is that \sbo can outperform the other two LP-based algorithms, \algbru and \alglu, all of which involve a much more complicated implementation.

\section{Conclusions and Future work}\label{sec:con}
In this paper, we proposed two online-matching based models to study individual and  group fairness maximization among offline agents in OMMs. For individual and group fairness maximization, we presented two LP-based sampling algorithms, namely \sbo and \sab, which achieve online competitive ratios at least $0.725$ and $0.719$, respectively.  We conducted extensive numerical experiments and results show that \sbo is not only conceptually easy to implement but also highly effective in practical instances of fairness-maximization related models. One interesting future direction is to show some explicit upper bounds for \ifm, \gfm, and \vw. So far, all existing upper bounds for online matching under KIID 
are for the unweighted case due to~\cite{bib:Manshadi}. Can we derive some upper bounds specifically for \ifm, \gfm or \vw? We expect the upper bound of \ifm should be slightly higher than that of \vw as suggested by Theorem~\ref{thm:main-4}.

\newpage
{
\bibliographystyle{unsrtnat}

\bibliography{EC_21}
}
\newpage
\appendix

\end{document}